\newcommand{\parrow}{\xrightarrow{\resizebox{!}{3.5pt}{$\circ$}}}
\newcommand{\Ob}{\mathrm{Ob}}
\newcommand{\Ph}{\mathrm{Ph}}
\newcommand{\llangle}{\mbox{$\langle\hspace*{-3pt}\langle$}}
\newcommand{\rrangle}{\mbox{$\rangle\hspace*{-3pt}\rangle$}}
\newcommand{\vr}{\vec{r}}
\newcommand{\vp}{\vec{p}}
\newcommand{\vpp}{\vec{p}\,}
\newcommand{\vq}{\vec{q}}
\newcommand{\vqq}{\vec{q}\,}
\newcommand{\vy}{\vec{y}}
\newcommand{\vc}{\vec{c}}
\newcommand{\vo}{\vec{o}}
\newcommand{\vvh}{\vec{v}^{\,h}}
\newcommand{\vah}{\vec{a}^{\,h}}
\newcommand{\vvkm}{\vec{v}^{\,k}_m}
\newcommand{\vakm}{\vec{a}^{\,k}_m}
\newcommand{\vet}{\vec{1}_t}
\newcommand{\vex}{\vec{1}_x}
\newcommand{\Q}{\mathrm{Q}}
\newcommand{\B}{\mathrm{B}}
\newcommand{\W}{\mathrm{W}}
\newcommand{\IOb}{\mathrm{IOb}}
\renewcommand{\time}{\mathsf{time}}
\newcommand{\dist}{\mathsf{dist}}
\newcommand{\leteq}{\,\mbox{$:=$}\,}
\newcommand{\mort}{\bot_\mu}
\newcommand{\upp}{\uparrow\hspace*{-1pt}\uparrow\!}
\newcommand{\com}{\succ}
\newcommand{\pheq}{\,\lambda\,}
\newcommand{\seq}{\,\sigma\,}
\newcommand{\teq}{\,\tau\,}
\newcommand{\simrad}{\thicksim^{rad}}
\newcommand{\simph}{\thicksim^{ph}}
\newcommand{\simmu}{\thicksim^\mu}
\newcommand{\then}{\enskip \Longrightarrow\ }
\newcommand{\rship}{\mbox{$>\hspace{-6pt}\big|$}b,k,c\mbox{$\big>_{\!rad}$}}
\newcommand{\mship}{\mbox{$>\hspace{-6pt}\big|$}b,k,c \mbox{$\big>_{\!\!\mu}$}}
\newcommand{\ship}{\mbox{$>\hspace{-6pt}\big|$}b,k,c \big>}
\renewcommand{\and}{\;\land\;}
\newcommand{\setclose}{\}}
\newcommand{\Setclose}{\,\right\}}
\newcommand{\setopen}{\{}
\newcommand{\Setopen}{\left\{\,}
\newcommand{\phsum}{\rightthreetimes}
\newcommand{\dom}{Dom\,}
\renewcommand{\d}{\mathit{d}}
\newcommand{\ran}{Ran\,}
\newcommand{\?}{\textbf{}}
\definecolor{thmcolor}{rgb}{0,0,.4} 
\definecolor{remarkcolor}{rgb}{0,.2,0} 
\definecolor{proofcolor}{rgb}{.4,0,0} 
\definecolor{quecolor}{rgb}{.2,.2,0} 
\definecolor{axcolor}{rgb}{.23,0,.23}
\definecolor{axbgcolor}{rgb}{1,.6,1} 
\definecolor{defbgcolor}{rgb}{0.9,0.8,0.1} 
\definecolor{thmbgcolor}{rgb}{0.8,0.8,1} 
\definecolor{rmbgcolor}{rgb}{0.7,1,0.7} 
\definecolor{proofbgcolor}{rgb}{1,0.7,0.7} 
\definecolor{lightred}{rgb}{1,0.3,0.3}
\newcommand{\ax}[1]{\textcolor{axcolor}{\ensuremath{\mathsf{#1}}}} 
\newcommand{\Ax}[1]{\textcolor{axcolor}{\colorbox{axbgcolor}{\ensuremath{\mathsf{#1}}}}} 
\newcommand{\df}[1]{{\bf #1}} 
\newcommand{\Dtf}[1]{\setlength{\fboxsep}{2pt}\colorbox{defbgcolor}{#1}\setlength{\fboxsep}{3pt}} 
\newcommand{\Df}[1]{\setlength{\fboxsep}{2pt}\colorbox{defbgcolor}{\ensuremath{#1}}\setlength{\fboxsep}{3pt}} 
\newcommand{\Dff}[1]{\setlength{\fboxsep}{0pt}\colorbox{defbgcolor}{\ensuremath{#1}}\setlength{\fboxsep}{3pt}}
\theoremstyle{definition} \newtheorem{thm}{\colorbox{thmbgcolor}{\textcolor{thmcolor}{Theorem}}}[section] 
\theoremstyle{definition}  
\theoremstyle{definition} \newtheorem{lem}[thm]{\colorbox{thmbgcolor}{\textcolor{thmcolor}{Lemma}}}%
\theoremstyle{definition} \newtheorem{prop}[thm]{\colorbox{thmbgcolor}{\textcolor{thmcolor}{Proposition}}}
\theoremstyle{remark} \newtheorem{conv}[thm]{\colorbox{rmbgcolor}{\sc\textcolor{remarkcolor}{Convention}}} 
\theoremstyle{remark} 
\theoremstyle{definition}  
\theoremstyle{definition} \newtheorem{rem}[thm]{\colorbox{rmbgcolor}{\textcolor{remarkcolor}{Remark}}}
\begin{document}
%%%%%%%%%%%%%%%%%%%%%%%%%%%%%%%%%%%%%%%%%%%%%%%%%%%%%%%%%%%%%%
%%%%%%%%%%%%%%%%%%%%%%%%%%%%%%%%%%%%%%%%%%%%%%%%%%%%%%%%%%%%%%

\title[A logical analysis of the time-warp effect of GR]{A logical analysis of the time-warp effect of general relativity}

\author{Judit X.\ Madar\'asz, Istv\'an N\'emeti and Gergely Sz\'ekely}

\date{2007-08-14}

\begin{abstract}
Several versions of the Gravitational Time Dilation effect of General Relativity are formulated by the use of Einstein's Equivalence Principle.
It is shown that all of them are logical consequence of a first-order axiom system of Special Relativity extended to accelerated observers.
\end{abstract}

\maketitle

%%%%%%%%%%%%%%%%%%%%%%%
\section{Introduction}%
%%%%%%%%%%%%%%%%%%%%%%%

Our general aim is to turn spacetime theories into axiomatic theories of First-Order Logic (\Dtf{FOL}) and exhaustively investigate the relationship between the axioms and their consequences.

Why is it useful to apply the axiomatic method to Relativity Theory?
For one thing, this method makes it possible for us to understand the role of any particular axiom (that is, a basic assumption of the theory).
We can check what happens to the theory if we drop, weaken or replace the axiom by its negation.
For instance, it is shown by this method that the impossibility of faster than light motion is not independent from the other assumptions of Special Relativity (\Dtf{SR}), see \cite{AMNsamp}, \cite[\S 3.4]{pezsgo}.
(More boldly: it is superfluous as an axiom because it is provable as a theorem from much simpler and more convincing basic assumptions.)
The linearity of the transformations between observers (reference frames) can also be proved from some plausible assumptions, see \cite{AMNsamp}, \cite{pezsgo} and Theorem \ref{thm-poi}.
Moreover, we can discover new, interesting and physically relevant theories by this method.
This happened in the case of the axiom of parallels in Euclid's geometry; this kind of investigation led to the discovery of hyperbolic geometry.

Moreover, if we have an axiom system, we can ask which axioms are responsible for a certain consequence of the theory.
This kind of reverse thinking can help us to answer the why-type questions of Relativity.
For example, we can take the Twin Paradox and check which axiom of SR was and which one was not needed to derive it.
The weaker an axiom system is, the better answer it offers to the question: Why is the Twin Paradox true?.
For more details on this kind of investigation into the Twin Paradox, see \cite{Twp, mythes}.
We hope that we have given good reasons why we use the axiomatic method in our research into spacetime theories.
For more details or further reasons, see, e.g., Guts~\cite{guts}, Schutz~\cite{schutz}, Suppes~\cite{suppes}.

So far we have not said anything about why choosing FOL instead of the so powerful Second-Order Logic or any other abstract logic.
The main reason comes from the fact that we would like to use an \emph{absolute}\footnotemark\ logic for our investigations because obviously we do not want the consequence relation of the used logic to depend on Set Theory.
That is clear since our main subject is this relation; 
hence we want to understand its properties as clearly as possible, that is, as independently from Set Theory as possible.
We would also like to use a {\em complete}\footnotemark[\value{footnote}] logic since we would like to know that if something is true in all the possible models, it is also provable.
By Lindstr\"om's theorem, FOL is the \emph{strongest}\footnotemark[\value{footnote}] possible compact logic with L\"ovenheim-Skolem property, see, e.g., \cite{flum}.
Obviously {\em compactness}\footnotemark[\value{footnote}] follows from completeness.
V\"a\"an\"anen has proved that absolute logics have the \emph{L\"ovenheim-Skolem property}\footnotemark[\value{footnote}], see \cite{vaananen85}.
Thus we do not have any better candidate than FOL for our work.
\footnotetext{For precise definition of these concepts, see, e.g., \cite{MTL}.}
For further details of this reason or for other reasons for choosing FOL for axiomatic foundation, see, e.g., Ax~\cite{Ax}, \cite[\S ``Why FOL?'']{pezsgo}, V\"a\"an\"anen \cite{vaananen}, Wole\'nski \cite{wolenski}.

In this paper we concentrate on a well-known consequence of General Relativity (\Dtf{GR}), the Gravitational Time Dilation (\Dtf{GTD}).
GTD roughly says ``gravitation makes time flow slower.''
Here we investigate the relationship of GTD and a version of SR extended with
accelerated observers (thus extended for simulating gravity).
We use Einstein's Equivalence Principle (\Dtf{EEP}) to treat gravitation in SR.
EEP roughly says that ``a uniformly accelerated frame of reference is indistinguishable from a rest frame in a uniform gravitational field,'' see, e.g., Einstein \cite{Einstein11} or d'Inverno~\cite[\S 9.4]{d'Inverno}.
So instead of gravitation we talk about acceleration.
To investigate GTD in FOL, we have to fix a language (a set of basic concepts), present one or more axiom systems of SR and formulate GTD in this fixed language.
Then we can investigate the connection between GTD and the axiom systems by proving theorems and providing counterexamples.
As an illustration of our research, we have partly fulfilled this task in \cite{FOLfoundRT}.
In this paper after recalling the axiom systems, definitions and theorems presented in \cite{FOLfoundRT}, we concentrate on proving these theorems and developing the necessary tools to do so.
Although we develop the most important tools and prove most of the theorems that we have stated in \cite{FOLfoundRT}, we do not go into every detail, and do not prove all the theorems stated in \cite{FOLfoundRT} because that would make our paper too long.
We try to be as self-contained as possible.
First occurrences of concepts used in this work are set in boldface to make them easier to find.
We also use colored text and boxes to help the reader to find the axioms, notations, etc.
Throughout this work, if-and-only-if is abbreviated to {\bf iff}.

%%%%%%%%%%%%%%%%%%%%%%%%%%%%%%%%%%%%%%%%%%%%%%%%%%%%%%
\section{A first-order axiom system of SR extended with accelerated observers}%
%%%%%%%%%%%%%%%%%%%%%%%%%%%%%%%%%%%%%%%%%%%%%%%%%%%%%%
\label{ax-sec}

Let us now recall our first-order language and some 
of our axiom systems for SR.

The motivation for our basic concepts is summarized as follows.
Here we only deal with the kinematics of relativity, that is, we deal
 with motion of {\em bodies} (test-particles).
We represent motion as changing spatial location in time.
To do so, we have reference-frames for coordinatizing events (sets of bodies).
{\em Quantities} are used for marking time and space.
The structure of quantities is assumed to be an ordered field in place of the field of real numbers.
For simplicity, we associate reference-frames with certain bodies 
 called {\em observers}.
This observation is coded by the \emph{world-view relation}.
We visualize an observer as ``sitting'' in the origin of the
space part of its reference-frame, or equivalently, ``living'' on the 
time-axis of the reference-frame.
We distinguish {\em inertial observers} from the others.
We also use another special kind of bodies called {\em photons}.

Allowing ordered fields in place of the field of
reals increases the flexibility of our theory and minimizes the
amount of our mathematical presuppositions.
For further motivation in this direction, see, e.g., Ax~\cite{Ax}.
Similar remarks apply to our flexibility-oriented decisions below, for example, the one to treat the dimension of spacetime as a variable.

Using observers in place of coordinate systems or reference frames is only a matter of didactic convenience and visualization.
There are many reasons for using observers (or coordinate systems, or reference-frames) instead of a single
observer-independent spacetime structure.
One of them is that it helps us to weed unnecessary axioms from our
theories; but we state and emphasize the equivalence/duality between
observer-oriented and observer-independent approaches to relativity
theory, see \cite[\S 3.6]{logst}, \cite[\S 4.5]{Mphd}.
Motivated by the above, 
now we turn to fixing the first-order language of our axiom systems.

First we fix a natural number $\Df{d}\ge 2$ for the dimension of spacetime.
Our language contains the following non-logical symbols:
\begin{itemize}
\item unary relation symbols \Df{\B} (for \df{bodies}), \Df{\Ob} (for \df{observers}),
\Df{\IOb} (for \df{inertial observers}), \Df{\Ph} (for \df{photons}) and $\Df{\Q}$ (for \df{quantities}),
\item binary function symbols \Df{+}, \Df{\cdot} and a binary relation symbol \Df{\le} (for the field operations and the ordering on
$\Q$) and
\item a $2+d$-ary relation symbol \Df{\W} (for \df{world-view relation}).
\end{itemize}

We read $\B(x)$, $\Ob(x)$, $\IOb(x)$, $\Ph(x)$ and $\Q(x)$ as ``$x$ is a body,''
``$x$ is an observer,'' ``$x$ is an inertial observer,'' ``$x$ is a photon,'' ``$x$ is a
quantity.''
We use the world-view relation $\W$ to talk about coordinatization by reading
$\W(x,y,z_1,\ldots, z_d)$ as ``observer $x$ coordinatizes
body $y$ at spacetime location $\langle z_1,\ldots,z_d\rangle$,'' (that is, at space location $\langle z_2,\ldots,z_d\rangle$ at instant $z_1$).

$\B(x)$, $\Ob(x)$, $\IOb(x)$, $\Ph(x)$, $\Q(x)$, $\W(x,y,z_1,\ldots, z_d)$, $x=y$ and $x\leq y$
are the so-called {atomic formulas} of our first-order
language, where $x,y,z_1,\dots,z_d$ can be arbitrary variables or terms built
up
from variables by using the field operations.
 The \df{formulas} of our first-order language are built up from these
atomic formulas by using the logical connectives {\em not}
(\Df{\lnot}), {\em and} (\Df{\land}), {\em or} (\Df{\lor}), {\em implies}
(\Df{\Longrightarrow}), {\em if-and-only-if} (\Df{\Longleftrightarrow}), and the
quantifiers {\em exists} $x$ (\Df{\exists x}) and {\em for all $x$} (\Df{\forall x})
for every variable $x$.

The \df{models} of this language are of the form
\begin{equation}
\Df{\mathfrak{M}} = \langle U; \B, \Ob, \IOb, \Ph, \Q,+,\cdot,\leq,\W\rangle,
\end{equation}
where $U$ is a nonempty set, and $\B$, $\Ob$, $\IOb$, $\Ph$ and $\Q$ are unary relations on $U$, etc.
A unary relation on $U$ is just a subset of $U$.
Thus we use $\B$, $\Ob$, etc.\ as sets as well, for example, we write \Df{m\in \Ob} in place of $\Ob(m)$.

We use the notation $\Df{\Q^n}\leteq\Q\times\ldots\times \Q$ ($n$-times) for the set of all $n$-tuples of elements of $\Q$.
If $\vpp\in \Q^n$, we assume that $\Df{\vpp}=\langle p_1,\ldots,p_n\rangle$, that is, $p_i\in\Q$ denotes the $i$-th component of the $n$-tuple $\vpp$.
We write $\W(m,b,\vpp)$ in place of
$\W(m,b,p_1,\dots,p_d)$, and we write $\forall \vpp$ in place of
$\forall p_1,\dots,\forall p_d$, etc.

We present each axiom at two levels.
First we give an intuitive formulation, 
then a precise formalization using our logical notations 
(which can easily be translated into first-order formulas by inserting the first-order
definitions into the formalizations).
We aspire to formulate easily understandable axioms in FOL.

The first axiom expresses our very basic assumptions, such as: 
both photons and observers are bodies, 
inertial observers are also observers, etc.

\begin{description}
\item[\Ax{AxFrame}] $\Ob\cup \Ph\subseteq \B$, $\IOb\subseteq \Ob$, $\W\subseteq \Ob \times
\B\times \Q^d$, $\B\cap \Q=\emptyset$; $+$ and $\cdot$ are binary operations and $\le$
is a binary relation on $\Q$.
\end{description}
Instead of this axiom we could also use many-sorted first-order language as in \cite{pezsgo} and \cite{logst} and only assume that $\IOb\subseteq\Ob$.

To be able to add, multiply and compare measurements of observers,
we put algebraic structure on the set of quantities by the
next axiom.

\begin{description}
\item[\Ax{AxEOF}]
A first-order axiom saying the \df{quantity part} $\left< \Q;
+,\cdot, \le \right>$ is a Euclidean ordered field, that is, a linearly ordered field in which positive
elements have square roots.
\end{description}
For the first-order definition of linearly ordered field, see, e.g., \cite{Chang-Keisler}.
We use the usual first-order definable field operations \Dff{0, 1, -, /, \sqrt{\phantom{i}}}.
We also use the vector-space structure of $\Q^n$, 
that is, if $\vpp,\vqq\in \Q^n$ and $\lambda\in \Q$, then $\Df{\vpp+\vqq, -\vpp, \lambda\cdot\vpp}\in \Q^n$; 
and $\Df{\vo}\,\leteq\langle 0,\ldots,0\rangle$ denotes the \df{origin}.

\begin{conv}
We treat \ax{AxFrame} and \ax{AxEOF} as a part of our logical frame.
Hence without any further mentioning, they are always assumed and will be part of each axiom system we propose herein.
\end{conv}

\begin{figure}[h!btp]
\small
\begin{center}
\psfrag{mm}[bl][bl]{$wl_m(m)$}
\psfrag{mk}[br][br]{$wl_m(k)$}
\psfrag{mb}[t][t]{$wl_m(b)$}
\psfrag{mph}[tl][tl]{$wl_m(ph)$}
\psfrag{kk}[br][br]{$wl_k(k)$}
\psfrag{km}[b][bl]{$wl_k(m)$}
\psfrag{kb}[t][t]{$wl_k(b)$}
\psfrag{kph}[tl][tl]{$wl_k(ph)$}
\psfrag{p}[r][r]{$\vpp$}
\psfrag{k}[bl][bl]{$k$}
\psfrag{b}[tl][tl]{$b$}
\psfrag{m}[bl][bl]{$m$}
\psfrag{ph}[tl][tl]{$ph$}
\psfrag{evm}[tl][tl]{$ev_m$}
\psfrag{evk}[bl][bl]{$ev_k$}
\psfrag{Evm}[r][r]{$Ev_m$}
\psfrag{Evk}[r][r]{$Ev_k$}
\psfrag{Ev}[r][r]{$Ev$}
\psfrag{T}[l][l]{$e=ev_m(\vpp)=ev_k(\vqq)$}
\psfrag{q}[r][r]{$\vqq$}
\psfrag{Cdk}[l][l]{$Cd_k$}
\psfrag{Cdm}[l][l]{$Cd_m$}
\psfrag{Crdk}[l][l]{$Crd_k$}
\psfrag{Crdm}[l][l]{$Crd_m$}
\psfrag{t}[lb][lb]{$$}
\psfrag{o}[t][t]{$\vo$}
\psfrag{fkm}[t][t]{$w^k_m$}
\psfrag*{text1}[cb][cb]{world-view of $k$}
\psfrag*{text2}[cb][cb]{world-view of $m$}
\includegraphics[keepaspectratio, width=\textwidth]{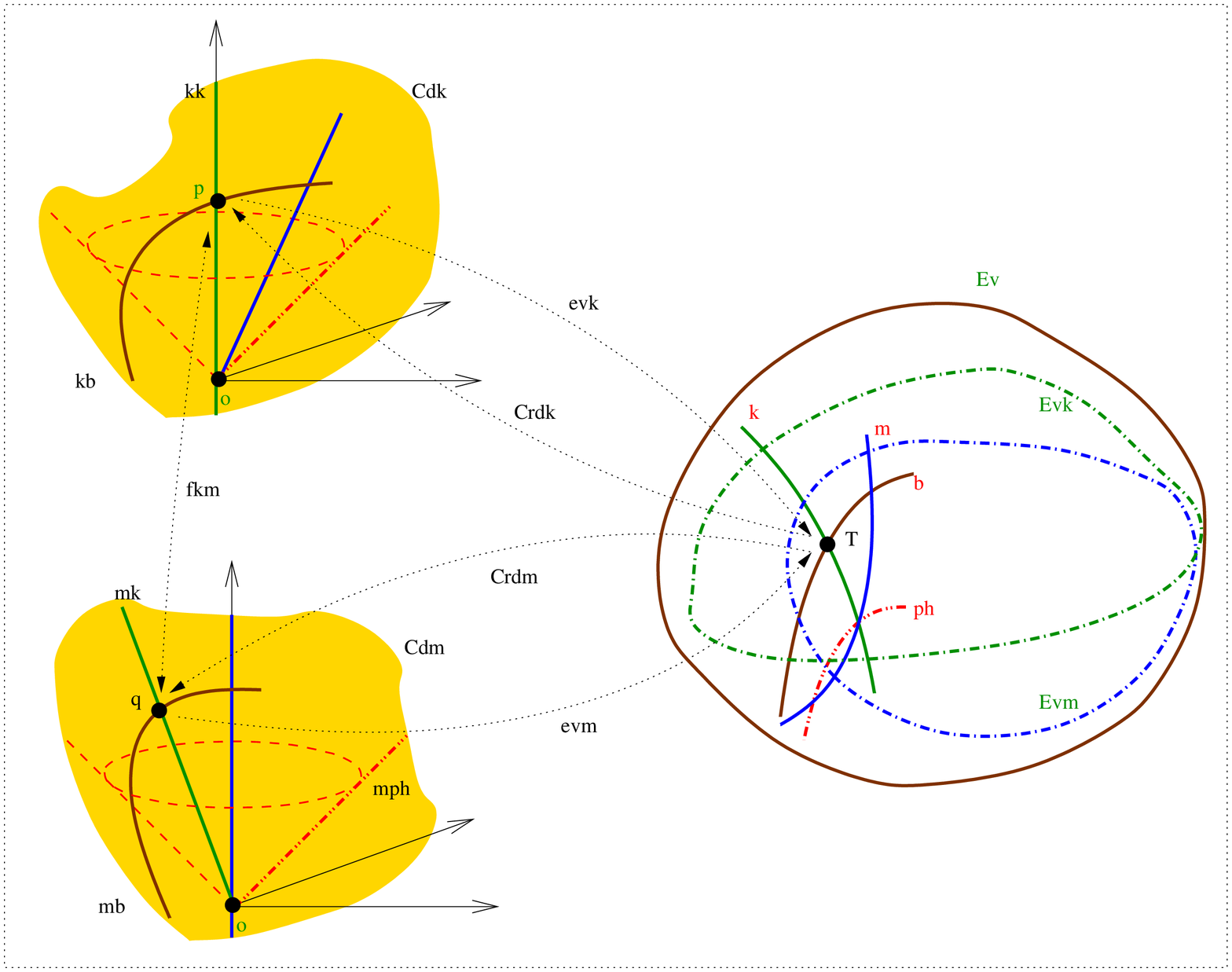}
\caption{\label{fig-fmk} Illustration of the basic definitions.}
\end{center}
\end{figure}

We need some definitions and notations to formulate our other axioms.
$\Q^d$ is called the \df{coordinate system} and its elements are referred to as \df{
coordinate points}.
We use the notations 
\begin{equation}
\Df{\vpp_\sigma}\leteq\langle p_2,\ldots, p_d\rangle \quad \text{ and }\quad \Df{p_\tau}\leteq p_1
\end{equation} 
for the \df{space component} and for the \df{time component} of $\vpp\in\Q^d$, respectively.
The \df{event} (the set of bodies) observed by observer $m$ at
coordinate point $\vpp$ is:
\begin{equation}
\Df{ev_m(\vpp)}\leteq\Setopen b\in \B \::\: \W(m,b,\vpp)\Setclose.
\end{equation}
The \df{coordinate-domain} of observer $m$ is the set of coordinate points where $m$
observes something:
\begin{equation}
\Df{Cd_m}\leteq\Setopen \vpp \in \Q^d\::\: ev_m(\vpp)\neq \emptyset \Setclose.
\end{equation}

Now we formulate our first axiom on observers.
Historically this natural axiom goes back to Galileo Galilei or even to d'Oresme of around 1350,
see, e.g., \cite[p.23, \S 5]{AMNsamp}, but it is very probably a prehistorical assumption, see remark below.
It simply states that each observer thinks that he rests in the origin of the space part of his coordinate system.

\begin{description}
\item[\Ax{AxSelf^-}] An observer observes himself at a coordinate point iff the space component of this point is the origin:
\begin{equation}
\forall m \in \Ob \enskip \forall \vpp\in Cd_m\quad \big(\,m\in ev_m(\vpp) \iff \vpp_\sigma=\vo\;\big).
\end{equation}
\end{description}

\begin{rem}
At first glance it is not clear why \ax{AxSelf} is so natural.
As an explanation, let us consider the following simple example.
Let us imagine that we are watching sunset.
What do we see? 
We do not see and feel that we are rotating with the Earth but that the Sun is moving towards the horizon;
and according to our (the Earth's) reference system we are absolutely right.
But we learned at primary school that ``the Earth rotates and goes around the Sun.''
So why does not this (that is, the adoption of the heliocentric system) mean that \ax{AxSelf} and our impression about the sunset above are simply wrong?
That is so, because the debate between geocentric and heliocentric systems was not about \ax{AxSelf}, but about how to choose the best observer (reference frame) if we want to study the motions of planets in our solar system.%
\footnote{Here we consider only the basic idea of the two systems (that is, whether the Earth or the Sun is stationary) and not their details (e.g., epicycles).
Of course, Ptolemy's geocentric model was wrong in its details since even if we fix the Earth as reference frame, the other planets will go around not the Earth but the Sun.
It is interesting to note that Tycho Brahe worked out the a correct geocentric system in which the Sun and the Moon move around the Earth and the other planets move around the Sun.} 
As reference frames, those of the Earth, the Sun, and even the Moon are equally good.
However, if we would like to calculate the motions of the planets, the Sun's is obviously the most convenient.
\end{rem}

Now we formulate our axiom about the constancy of the speed of photons.
For convenience, we choose $1$ for this speed.
\begin{description}
\item[\Ax{AxPh_0}] For every \emph{inertial} observer, there is a photon through two
coordinate points $\vpp$ and $\vqq$ iff the slope of $\vpp-\vqq$ is $1$:
\begin{equation}
\begin{split}
\forall m\in \IOb\enskip \forall \vpp,\vqq\in \Q^d\ \big(\,&
|\vpp_\sigma-\vq_\sigma|=|p_\tau-q_\tau| \iff \\ &\Ph\cap ev_m(\vpp)\cap
ev_m(\vqq)\ne\emptyset \,\big),
\end{split}
\end{equation}
\end{description}
where, the \df{Euclidean length} of $\vpp\in \Q^n$ is defined as $\Df{|\vpp|}\leteq\sqrt{\resizebox{!}{8pt}{$p_1^2+\ldots+p_n^2$}}$ 
for any $n\ge 1$.

This axiom is a well-known assumption of SR, see, e.g., \cite{logst}, \cite[\S 2.6]{d'Inverno}.

The set of nonempty events observed by observer $m$ is:
\begin{equation}
\Df{Ev_m}\leteq\Setopen ev_m(\vpp) \::\: ev_m(\vpp)\neq\emptyset\Setclose,
\end{equation}
and the set of all observed events is:
\begin{equation}
\Df{Ev}\leteq\Setopen e\in Ev_m \::\: m\in \Ob\Setclose.
\end{equation}

\begin{rem}
For convenience, we quantify over events too.
That does not mean that we abandon our first order language.
It is just a new abbreviation that simplifies the formalization of our axioms.
Instead of events we could speak about observers and spacetime locations.
For example, instead of $\forall e\in Ev_m\enskip \phi$ we could write $\forall \vpp\in Cd_m\enskip \phi[e\!\leadsto\! ev_m(\vpp)]$, where none of $p_1\ldots p_d$ occurs free in $\phi$, and $\phi[e\!\leadsto\! ev_m(\vpp)]$ is the formula achieved from $\phi$ by substituting $ev_m(\vpp)$ for $e$ in all occurrences.
Similarly, we can replace $e\in Ev_m$ by $\exists \vpp\in Cd_m\enskip e=ev_m(\vpp)$ and $\forall e\in Ev$ by $\forall m\in\Ob\enskip\forall e\in Ev_m$.
\end{rem}

By the next axiom we assume that {\em inertial} observers observe the same events.
\begin{description}
\item[\Ax{AxEv}] Any \emph{inertial} observer coordinatizes the same set of events:
\begin{equation}
\forall m,k\in \IOb \quad Ev_m=Ev_k.\footnotemark
\end{equation}
\end{description}
\footnotetext{Hint for translating this axiom into our first order language: replace $Ev_m=Ev_k$ by $\forall e\in Ev_m\enskip e\in Ev_k$ or see \cite{Twp}.}

We define the {\bf coordinate-function} of observer $m$, 
in symbols $Crd_m$, as the inverse of the event-function, that is, 
\begin{equation}
\Df{Crd_m}\leteq ev_m^{-1},
\end{equation}
where $R^{-1}:=\setopen\langle y,x\rangle : \langle x,y\rangle \in R\setclose$ is the first-order definition of the {\bf inverse} of binary relation $R$.
We note that the coordinate-functions are only binary relations by this definition, but one can easily prove from \ax{AxPh_0} that, if $m$ is an {\em inertial} observer, $Crd_m$ is a bijection from $Ev_m$ to $Cd_m$, see Proposition \ref{prop-tr} way below.

\begin{conv}\label{conv-crd}
Whenever we write $Crd_m(e)$, we mean that there is a unique $\vqq \in Cd_m$ such that $ev_m(\vqq)=e$, and this unique $\vqq$ is denoted by $Crd_m(e)$.
That is, when we talk about the value $Crd_m(e)$, we postulate that it exists and is unique.
\end{conv}

We say that events $e_1$ and $e_2$ are \df{simultaneous}
\label{sim} for observer $m$, in symbols $e_1\Df{\!\rule{0pt}{8pt}\!\sim_m\!} e_2$, iff $e_1$
and $e_2$ have the same time-coordinate in $m$'s coordinate-domain,
that is, if $Crd_m(e_1)_\tau=Crd_m(e_2)_\tau$.
To formulate time differences measured by observers, 
we use $\Df{\time_m}(e_1,e_2)$ as an
abbreviation for $|Crd_m(e_1)_\tau-Crd_m(e_2)_\tau|$, and we call it the
\df{elapsed time} between events $e_1$ and $e_2$ measured by observer $m$.
We note that $e_1\sim_m e_2$ iff $\time_m(e_1,e_2)=0$.
If $m\in e_1\cap e_2$, then
$\time_m(e_1,e_2)$ is called the {\em proper time} measured by $m$
between $e_1$ and $e_2$.
We use $\Df{\dist_m}(e_1,e_2)$ as an abbreviation
for $|Crd_m(e_1)_\sigma-Crd_m(e_2)_\sigma|$ and we call it the \df{spatial
distance} of events $e_1$ and $e_2$ according to observer $m$.
We note that when we write $\dist_m(e_1,e_2)$ or $\time_m(e_1,e_2)$, 
we assume that $e_1$ and $e_2$ have unique coordinates by
Convention~\ref{conv-crd}.

By the next axiom we assume that \emph{inertial} observers use the same units of measurement.

\begin{description}
\item[\Ax{AxSimDist}]
If events $e_1$ and $e_2$ are simultaneous for both \emph{inertial} observers $m$ and $k$, then $m$ and $k$ agree as for the spatial distance between $e_1$ and $e_2$:
\begin{equation}
\begin{split}
\forall m,k\in \IOb\enskip \forall e_1,e_2 \in Ev_m\cap Ev_k \quad \big(e_1\sim_m e_2\and e_1\sim_k e_2 & \\
\then\dist_m(e_1,e_2)=\dist_k(e_1,e_2)&\,\big).
\end{split}
\end{equation}
\end{description}

Let us collect these axioms in an axiom system:
\begin{equation}
\boxed{\ax{SpecRel_\d}\leteq\setopen \ax{AxSelf^-}, \ax{AxPh_0}, \ax{AxEv},\ax{AxSimDist} \setclose}
\end{equation}
Now for each natural number $d\ge2$, we have a first-order theory of SR.
Usually we omit the dimension parameter $d$.
From the few axioms
introduced so far, we can deduce the most frequently quoted
predictions of SR: 
\begin{itemize}
\item[(i)] ``moving clocks slow down,'' 
\item[(ii)] ``moving meter-rods shrink'' and 
\item[(iii)] ``moving pairs of clocks get out of synchronism.'' 
\end{itemize}
For more detail, see, for example, \cite{AMNsamp,pezsgo,logst}.

Obviously \ax{SpecRel} is too weak to answer any question about acceleration and hence about gravitation via EEP since \ax{AxSelf^-} is its only axiom that mentions non-inertial observers too.
To extend \ax{SpecRel}, we now formulate axioms about non-inertial observers called \df{accelerated observers}.

We assume the following very natural axiom for all observers.

\begin{description}
\item[\Ax{AxEvTr}] Whenever an observer participates in an event, he also coordinatizes this event:
\begin{equation}
\forall m\in \Ob\enskip \forall e\in Ev \quad \big(\,m\in e \then
e\in Ev_m\,\big).
\end{equation}
\end{description}
We note that \ax{AxEvTr} is not a consequence of \ax{SpecRel} even for {\em inertial} observers.

We also assume the following technical axiom:
\begin{description}
\item[\Ax{AxSelf^+}]
The set of time-instances in which an observer is present in its own
world-view is connected, that is,
\begin{equation}
\forall m\in\Ob\enskip \{ p_\tau : m\in ev_m(\vpp)\}\quad \mbox{is
connected,}
\end{equation}
\end{description}
\noindent 
where $I\subseteq \Q$ is said to be \df{connected} iff $(x,y)\subseteq I$ for all $x,y\in I$,
and the \df{interval} between $x,y\in \Q$ is defined as: 
\begin{equation}
\Df{(x,y)}\leteq\{z\in\Q:x<z<y \text{ or } y<z<x\}.
\end{equation}

To connect the coordinate-domains of the accelerated and the
inertial observers, we are going to formalize the statement that each accelerated observer, at
each moment of his life, coordinatizes the nearby
world for a short while as an \emph{inertial} observer.
First we introduce the relation of being a co-moving observer.
To do so, we define the (coordinate) \df{neighborhood} of event
$e$ with radius $\delta \in \Q^+$ according to observer $k$ as:
\begin{equation}
\Df{B^\delta_k(e)}\leteq\Setopen \vpp\in Cd_k \::\: \exists \vqq \in Cd_k \quad
ev_k(\vqq)=e \and |\vpp-\vqq|<\delta\Setclose.
\end{equation}
Observer $m$ is called a \df{co-moving observer} of observer
$k$ at event $e$, in symbols $\Df{m \com_e k}$, iff the following
holds:
%\begin{multline*}
\begin{equation}
\begin{split}
\forall \varepsilon \in \Q^+ \;\exists \delta \in \Q^+\enskip&\forall \vpp \in B^{\delta}_k(e)\\
&\left|\vpp-Crd_m\big(ev_k(\vpp)\big)\right| \leq\varepsilon\big|\vpp-Crd_k(e)\big|,
\end{split}
\end{equation}
%\end{multline*}
where $\Df{\Q^+}$ denotes the set of \df{positive elements} of $\Q$, that is, 
\begin{equation}
\Q^+\leteq\setopen x\in \Q:0<x\setclose.
\end{equation}

\begin{rem}\label{rem-comove}
Note that $Crd_m(e)=Crd_k(e)$, and thus also $e\in Ev_m$ if $m
\com_e k$ and $e\in Ev_k$ [to see that let $\vpp=Crd_k(e)\in B^{\delta}_k(e)$].
Note also that $m\com_e k$ for any observer $m$ if $e\not\in Ev_k$ since $B^\delta_k(e)=\emptyset$ if $e\not \in Ev_k$ by definition.
\end{rem}

Behind the definition of co-moving observers is the following intuitive
image: as we zoom in the neighborhood of the
coordinate point of the given event, the world-views of the
two observers are getting more and more similar.
The following axiom gives the promised connection between the
world-views of the inertial and the accelerated observers:

\begin{description}
\item[\Ax{AxAcc}] At any event in which an observer coordinatizes himself, there is a co-moving \emph{inertial} observer:
\begin{equation}
\forall k \in \Ob \enskip \forall e \in Ev_k \quad (\,k\in e \then \exists m\in \IOb \enskip m \com_e k\,).
\end{equation}
\end{description}
\emph{Inertial} observer $m$ is called a {\bf co-moving inertial observer} of observer $k$ if there is an event $e\in Ev_k$ such that $k\in e$ and $m \com_e k$.

\begin{rem}\label{rem-axacc}
(1) From \ax{AxAcc} follows by Convention \ref{conv-crd}, that 
no observer can encounter an event more than once,
that is, if $k\in \Ob$, $e\in Ev$ and $\vpp,\vqq\in Cd_k$ such that $k\in e\in Ev_k$ and $e=ev_k(\vpp)=ev_k(\vqq)$, 
then $\vpp=\vqq$.
It is true since $Crd_k(e)$ is written in \ax{AxAcc}.

(2) From \ax{AxAcc} and \ax{AxEv} follows that 
any \emph{inertial} observer coordinatizes every event that an observer encounters,
 that is, if $m\in\IOb$, $k\in\Ob$ and $e\in Ev$ such that $k\in e\in Ev_k$, 
then there is a $\vpp \in Cd_m$ such that $ev_m(\vpp)=e$.
It is true since \emph{inertial} observers coordinatize the same events by \ax{AxEv} and $e\in Ev_m$ if $m\com_e k$ and $e\in Ev_k$, see Remark \ref{rem-comove}.
\end{rem}

Let us call the set of the axioms introduced so far \ax{AccRel^0_\d}:
\begin{equation}
\boxed{
\ax{AccRel^0_\d}\leteq\ax{SpecRel_\d}\cup\Setopen\ax{AxEvTr},\ax{AxSelf^+}, \ax{AxAcc}\Setclose}
\end{equation}

Surprisingly \ax{AccRel^0_\d} is not strong enough to prove
properties of accelerated clocks such as the Twin Paradox, see
Theorems 3.5 and 3.7 and Corollary 3.6 in \cite{Twp}.
The additional assumption we need is that every bounded non-empty subset of the
quantity part has a supremum.
It expresses a second-order logic property
(because it concerns all subsets) which we cannot use in a first-order axiom
system.
So instead of it we use a kind of ``induction'' axiom schema.
Let $\phi(x,\vy\,)$ be a first-order formula of our language.
\begin{description}
\item[\Ax{AxSup_\phi}] Every subset of $\Q$ definable by $\phi(x,\vy\,)$ with parameters $\vy$ has a supremum if it is non-empty and \df{bounded}.
\end{description}
\noindent A first-order formula expressing \ax{AxSup_\phi} can be found in
\cite{Twp}, or \cite{mythes}.
Our axiom scheme \ax{IND} below says that every
non-empty bounded subset of $\Q$ that is definable in our language
has a supremum:
\begin{equation}
\Ax{IND}\leteq\Setopen \ax{AxSup_\varphi}\::\: \varphi \text{ is a first-order formula of our language} \Setclose.
\end{equation}
Note that \ax{IND} is true in any model whose quantity part is the field of real numbers.
For more detail about \ax{IND}, see \cite{Twp, mythes}.

Let us call the set of the axioms introduced so far \ax{AccRel_\d}:
\begin{equation}
\boxed{
\ax{AccRel_\d}\leteq\ax{AccRel^0_\d}\cup\ax{IND}}
\end{equation}
We note that the Twin Paradox is provable in \ax{AccRel}, see \cite{Twp, mythes}.

%%%%%%%%%%%%%%%%%%%%%%%%%%%%%%%%%%%%%%%%%%%%%%%%%%%%%%%%%%%
\section{Gravitational time dilation}%
%%%%%%%%%%%%%%%%%%%%%%%%%%%%%%%%%%%%%%%%%%%%%%%%%%%%%%%%%%%
\label{thm-sec}

Let us go on to state our theorems about GTD.
Recall that GTD roughly says that ``gravitation makes time flow slower,'' that is to say, the clocks in the bottom of a tower run slower than the clocks in the top of the tower.
We use EEP to treat gravitation in \ax{AccRel}.
So instead of gravitation we will talk about acceleration and instead of towers we will talk about spaceships.
This way GTD becomes the following statement:
``the time in the back of an (uniformly) accelerated spaceship flows slower than in the front of the spaceship.'' 
Here we concentrate on the general case when the spaceship is not necessarily uniformly accelerated.
This case corresponds to the situation when the tower is in a possibly changing gravitational field.
Now let us begin to formulate this statement in our first-order language.

\begin{figure}[h!t]
\small
\begin{center}
\psfrag{a}[tl][tl]{$k$}
\psfrag{m}[l][l]{$m$}
\psfrag{e}[b][b]{$e$}
\psfrag{e1}[tl][tl]{$e_1$}
\psfrag{e'}[bl][bl]{$e'$}
\psfrag{e2}[br][br]{$e_2$}
\psfrag{2l}[l][l]{$2\lambda$}
\psfrag{l}[b][b]{$\lambda$}
\psfrag{ph1}[tr][tr]{$ph_1$}
\psfrag{ph2}[br][br]{$ph_2$}
\psfrag{text1}[tl][tl]{$(a)$}
\psfrag{text2}[tl][tl]{$(b)$}
\includegraphics[keepaspectratio, width=\textwidth]{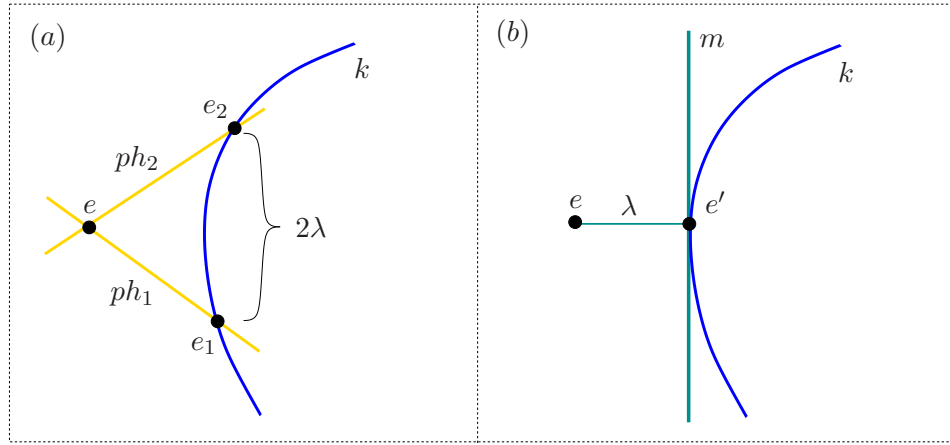}
\caption{\label{distfig} Illustrations of the radar distance and the Minkowski distance, respectively.}
\end{center}
\end{figure}

To talk about spaceships, we need a concept of distance between
events and observers.
We have two natural candidates for that:
\begin{itemize}
\item Event $e$ is at \df{radar distance} $\lambda\in\Q^+$ from observer $k$ iff there are events $e_1$ and $e_2$ and photons $ph_1$ and $ph_2$ such that $k\in e_1\cap e_2$, $ph_1\in e\cap e_1$, $ph_2\in e\cap e_2$ and $\time_k(e_1,e_2)=2\lambda$.
Event $e$ is at \df{radar distance} $0$ from observer $k$ iff $k\in e$.
See $(a)$ of Figure \ref{distfig}.
\item Event $e$ is at \df{Minkowski distance} $\lambda\in\Q$ from observer $k$ iff there is an event $e'$ such that $k\in e'$, $e\sim_m e'$ and $\dist_m(e,e')=\lambda$ for every co-moving inertial observer $m$ of $k$ at $e'$.
See $(b)$ of Figure \ref{distfig}.
\end{itemize}

We say that observer $k$ thinks that body $b$ is at constant radar distance from him iff the radar distance of every event in which $b$ participates is the same.
The notion of constant Minkowski distance is analogous.

The \df{world-line} of body $b$ according to observer $m$ is defined as
the set of the coordinate points where $b$ was observed by $m$:
\begin{equation}
\Df{wl_m(b)}\leteq\Setopen \vpp\in \Q^d \::\: b \in
ev_m(\vpp)\Setclose.
\end{equation}
To state that the {\em spaceship does not change its direction}, we need to introduce another concept.
We say that observers $k$ and $b$ are \df{coplanar} iff $wl_m(k)\cup wl_m(b)$ is a subset of a vertical plane in the coordinate system of an {\em inertial} observer $m$.
A plane is called a {\bf vertical plane} iff it is parallel with the time-axis.

Now we introduce two concepts of spaceship.
Observers $b$, $k$ and
$c$ form a \df{radar spaceship}, in symbols $\Df{\rship}$, iff $b$, $k$
and $c$ are coplanar and $k$ thinks that $b$ and $c$ are at constant
radar distances from him.
The definition of the \df{Minkowski spaceship}, 
in symbols $\Df{\mship}$, is analogous.

We say that event $e_1$ \df{precedes} event $e_2$
according to observer $k$ iff $Crd_m(e_1)_\tau\le Crd_m(e_2)_\tau$ for
all co-moving \emph{inertial} observers $m$ of $k$.
In this case, we also say that $e_2$ \df{succeeds} $e_1$ according to $k$.
We need these concepts to distinguish the past and the future light cones according to observers.
We note that since no time orientation is definable from our axiom system, we can only speak of orientation according to observers.

\begin{figure}[h!btp]
\small
\begin{center}
\psfrag{e}[r][r]{$e$}
\psfrag{he1}[l][l]{$\hat{e}_1$}
\psfrag{he2}[l][l]{$\hat{e}_2$}
\psfrag{hp1}[tr][tr]{$\hat{p}_1$}
\psfrag{hp2}[tl][tl]{$\hat{p}_2$}
\psfrag{te1}[br][br]{$\tilde{e}_1$}
\psfrag{te2}[l][l]{$\tilde{e}_2$}
\psfrag{tp1}[br][br]{$\tilde{p}_1$}
\psfrag{tp2}[bl][bl]{$\tilde{p}_2$}
\psfrag{e1}[r][r]{$e_1$}
\psfrag{e2}[lr][lr]{$e_2$}
\psfrag{ph1}[rt][rt]{$ph_1$}
\psfrag{ph2}[lt][lt]{$ph_2$}
\psfrag{m}[lb][lb]{$m$}
\psfrag{a}[tl][tl]{$k$}
\psfrag{l}[b][b]{$\lambda$}
\psfrag*{text1}[cb][cb]{(a)}
\psfrag*{text2}[cb][cb]{(b)}
\psfrag*{text3}[cb][cb]{(c)}
\includegraphics[keepaspectratio, width=\textwidth]{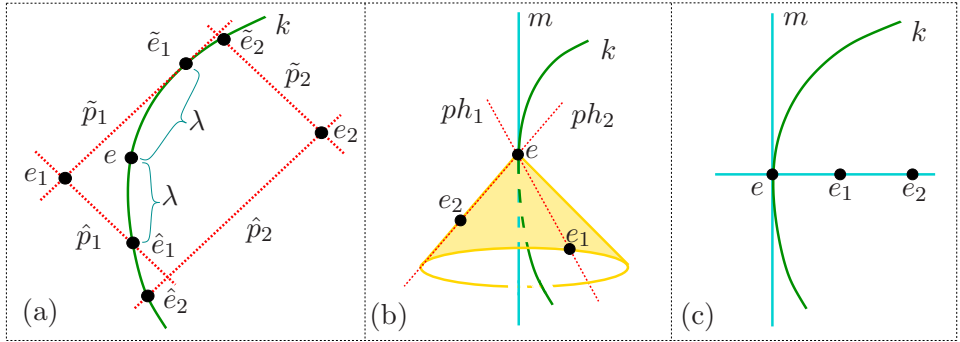}
\caption{\label{simfig} Illustrations of relations $e\simrad_k e'$, $e_1\simph_k e_2$ and $e_1\simmu_k e_2$, respectively.}
\end{center}
\end{figure}

We also need a concept to decide which events happened at the same
time according to an accelerated observer.
The following three
natural concepts offer themselves:
\begin{itemize}
\item Events $e_1$ and $e_2$ are \df{radar simultaneous} for observer $k$, in symbols $e_1\Df{\simrad_k} e_2$, 
iff there are events $e$, $\hat e_1$, $\hat e_2$, $\tilde e_1$, $\tilde e_2$ and photons $\tilde{p}_1$, $\tilde{p}_2$, $\hat{p}_1$, $\hat{p}_2$ 
such that $k\in e\cap \tilde{e}_i\cap\hat{e}_i$, $\hat{p}_i\in e_i\cap\hat{e}_i$, $\tilde{p}_i\in e_i\cap\tilde{e}_i$, ($\tilde{e}_i\neq\hat{e}_i$ or $e_i=e$) 
and $\time_k(e,\hat{e}_i)=\time_k(e,\tilde{e}_i)$ if $i\in\setopen 1,2\setclose$, see Figure \ref{simfig}.
\item Events $e_1$ and $e_2$ are \df{photon simultaneous} for observer $k$, in symbols $e_1\Df{\simph_k} e_2$, iff there are an event $e$ and photons $ph_1$ and $ph_2$ such that $k\in e$, $ph_1\in e\cap e_1$, $ph_2\in e\cap e_2$ and $e_1$ and $e_2$ precede $e$ according to $k$.
See $(b)$ of Figure \ref{simfig}.
\item Events $e_1$ and $e_2$ are \df{Minkowski simultaneous} for observer $k$, in symbols $e_1\Df{\simmu_k} e_2$, iff there is an event $e$ such that $k\in e$ and $e_1$ and $e_2$ are simultaneous for any co-moving \emph{inertial} observer of $k$ at $e$.
See $(c)$ of Figure \ref{simfig}.
\end{itemize}

We note that, for \emph{inertial} observers, the concepts of
radar simultaneity and Minkowski simultaneity coincide with the
concept of simultaneity introduced on page \pageref{sim}.

Radar simultaneity and Minkowski simultaneity are the two most natural generalisations (for non-inertial observers) of the standard simultaneity introduced by Einstein in \cite{Einstein}.
In the case of Minkowski simultaneity, 
the standard simultaneity of co-moving \emph{inertial} observers is rigidly copied, while in the case of radar simultaneity, the standard simultaneity is generalised in a more flexible way.
Dolby and Gull calculate and illustrate the radar simultaneity of some coplanar accelerated observers in \cite{Dolby-Gull}.
We note that $\simmu_k$ is an equivalence relation for observer $k$ iff $k$ does not accelerate.
So one can argue against regarding it as a simultaneity concept for non-inertial observers too.
However, we think that it is so straightforwardly generalised from the standard concept of simultaneity that it deserves to be forgiven for its weakness and to be called simultaneity.
The concept of photon simultaneity is the least usual and the most naive.
It is based on the simple idea that an event is happening right now iff it is seen to be happening right now.
Some authors require from a simultaneity concept to be an equivalence relation such that its equivalence classes are smooth spacelike hypersurfaces, see, e.g., Matolcsi~\cite{Matolcsi}.
In spite of the fact that equivalence classes of $\simph_k$ are neither smooth nor spacelike, we think that it deserves to be called simultaneity since it fulfills the most basic requirement that one may expect of a concept of simultaneity, see, e.g., Hogarth \cite{Hogarth} and Malament \cite{Malament}.
Moreover, this concept appears as a possible simultaneity concept in some of the papers investigating the question of conventionality/definability of simultaneity, see, e.g., Ben-Yami \cite{Ben-Yami}, Rynasiewicz \cite{Rynasiewicz}, Sarkar and Stachel \cite{Sarkar-Stachel}.
We also note that all of the introduced simultaneity and distance concepts are experimental ones, that is, they can be determined by observers by the means of experiments with clocks and photons.

We distinguish the front and the back of the spaceship by the
direction of the acceleration, so we need a concept for direction.
We say that the \df{directions of $\vp\in \Q^d$ and $\vq\in \Q^d$ are the same}, 
in symbols $\vpp\Df{\upp}\vqq$, if $\vp$ and $\vq$ are spacelike vectors, and there is a $\lambda \in
\Q^+$ such that $\lambda \vp_\sigma=\vq_\sigma$, see $(a)$ of
Figure~\ref{figupp}.
When $\vp$ and $\vq$ are timelike vectors, we also use this notation if $p_\tau q_\tau>0$.
Spacetime vector $\vr$ is called \df{spacelike} iff $|\vr_\sigma|>|r_\tau|$, 
\df{lightlike} iff $|\vr_\sigma|=|r_\tau|$, and
\df{timelike} iff $|\vr_\sigma|<|r_\tau|$.

\begin{figure}[h!btp]
\small
\begin{center}
\psfrag{p}[lb][lb]{$\vpp$} 
\psfrag{ps}[lb][lb]{$\vpp_\sigma$}
\psfrag{qs}[lb][lb]{$\vq_\sigma$} 
\psfrag{ph}[rb][rb]{$ph$}
\psfrag{q}[r][r]{$\vqq$} 
\psfrag{q3}[l][l]{$\vq_3$}
\psfrag{q2}[l][l]{$\vq_2$} 
\psfrag{q1}[l][l]{$\vq_1$}
\psfrag{aa}[l][l]{$(a)$} 
\psfrag{bb}[l][l]{$(b)$}
\psfrag{o}[t][t]{$\vo$} 
\psfrag{b}[t][t]{$b$} 
\psfrag{k}[t][t]{$c$}
\psfrag{e}[t][t]{$e$} 
\psfrag{eb}[t][t]{$e_b$}
\psfrag{ek}[t][t]{$e_c$} 
\psfrag{k1}[t][t]{$c'$}
\psfrag{b1}[t][t]{$b'$}
\includegraphics[keepaspectratio, width=\textwidth]{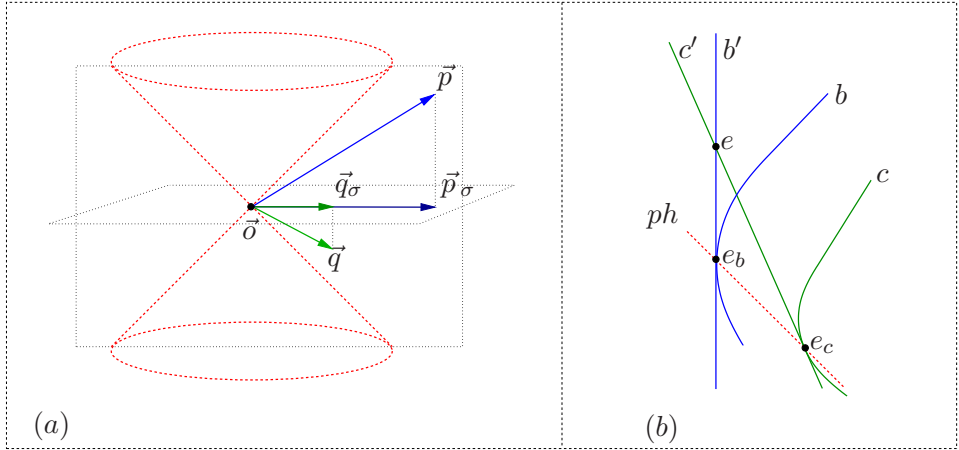}
\caption{\label{figupp} $(a)$ illustrates $\vpp\upp\vqq$, and
$(b)$ illustrates that observer $c$ is approaching observer $b$,
as seen by $b$ with photons.}
\end{center}
\end{figure}

Now let us focus on the definition of acceleration in our FOL setting.
We define the \df{life-curve} of observer $k$ according to observer
$m$ as the life-line of $k$ according to $m$ {\em parametrized by
the time measured by $k$}, formally:
\begin{equation}
\begin{split}
&\Df{lc^k_m}\leteq\Setopen \langle t,\vpp \rangle\in \Q\times Cd_m \::\right.\\
&\qquad\left.\exists \vqq\in Cd_k \quad k\in ev_k(\vqq)=ev_m(\vpp)\and q_\tau=t\Setclose.
\end{split}
\end{equation}
For the most important properties of this concept, see Proposition \ref{prop-tr} in Section \ref{thm-sec}.
The life-curves of observers and the derivative $f'$ of a given function $f$ are both
first-order definable concepts, see \cite{Twp, mythes}.
Thus if the life-curve of observer $k$ according to observer $m$ is a function, then the
following definitions are also first-order ones. 
The \df{relative velocity} $\Dff{\vvkm}$ of observer $k$ according to observer $m$ at instant
$t\in\Q$ is the derivative of the life-curve of $k$ according to $m$
at $t$ if it is differentiable at $t$ and undefined otherwise.
The \df{relative acceleration} $\Dff{\vakm}$ of
observer $k$ according to observer $m$ at instant $t\in\Q$ is the
derivative of the relative velocity of $k$ according to $m$ at $t$ if it is differentiable
at $t$ and undefined otherwise.

Spacetime vectors $\vpp$ and $\vqq$ are called 
\df{spacelike-separated}, in symbols $\vpp\Df{\seq}\vqq$, iff $\vp-\vq$ is a spacelike vector;
\df{lightlike-separated}, in symbols $\vpp\Df{\pheq}\vqq$, iff $\vp-\vq$ is a lightlike vector;
\df{timelike-separated}, in symbols $\vpp\Df{\teq}\vqq$, iff $\vp-\vq$ is a timelike vector.
Events $e_1$ and $e_2$ are called spacelike-separated (lightlike-separated; timelike-separated), in
symbols $e_1\seq e_2$ ($e_1\pheq e_2$; $e_1\teq e_2$), iff $Crd_m(e_1)$ and $Crd_m(e_2)$ are such for any {\em inertial} observer $m$.

We say that \df{the direction of the spaceship $\ship$ is the same as that of the} 
\df{acceleration of $k$} iff the following holds:
\begin{equation}
\begin{split}
&\forall m \in \IOb \enskip \forall t \in \dom \vakm\enskip \enskip \forall \vpp,\vqq \in Cd_m\quad \big(\,c\in ev_m(\vpp) \and \\
&\qquad\qquad\quad\enskip b\in ev_m(\vqq)\and \vpp\seq \vqq \then\vakm(t)\upp (\vpp-\vqq)\,\big),
\end{split}
\end{equation}
where $\Df{\dom} R\leteq\setopen x : \exists y \enskip \langle x,y\rangle \in R\setclose$ is the first-order definition of the \df{domain} of binary relation $R$.

The (signed) \df{Minkowski length} of $\vpp\in \Q^d$ is
\begin{equation}
\Df{\mu(\vpp)}\leteq\left\{
\begin{array}{rl}
\sqrt{\rule{0pt}{11pt} p_\tau^2-|\vpp_\sigma|^2} & \text{ if } p_\tau^2\ge|\vpp_\sigma|^2, \\
-\sqrt{\rule{0pt}{11pt}|\vpp_\sigma|^2-p_\tau^2} & \text{ otherwise, }
\end{array}
\right.
\end{equation}
and the \df{Minkowski distance} between $\vpp$ and $\vqq$ is
$\mu(\vpp,\vqq)\leteq\mu(\vpp-\vqq)$.
We use the signed version of the Minkowski length because it contains
two kinds of information: 
(i) the length of $\vpp$, and (ii) whether it is spacelike, lightlike or timelike.
Since the length is always
non-negative, we can use the sign of $\mu(\vpp)$ to code (ii).

The \df{acceleration} of observer $k$ at instant $t\in\Q$ is defined as the unsigned Minkowski length of the relative acceleration according to any {\em inertial} observer $m$ at $t$, that is,
\begin{equation}
\Df{a_k(t)}\leteq-\mu\big(\vakm(t)\big).
\end{equation}
The reason for the ``$-$'' sign in this definition is the fact that $\mu\big(\vakm(t)\big)$ is negative 
since $\vakm(t)$ is a spacelike vector, see Propositions \ref{prop-wellpar} and \ref{prop-vmorta}.
The acceleration is a well-defined concept since it is independent of the choice of the inertial observer $m$, see Theorem \ref{thm-poi} and Proposition \ref{prop-inv}.
We say that observer $k$ is \df{positively accelerated} iff $a_k(t)$ is defined and greater than $0$ for all $t\in \dom lc^k_k$.
Observer $k$ is called \df{uniformly accelerated} iff there is an $a\in\Q^+$ such that $a_k(t)=a$ for all $t\in \dom lc^k_k$.

We say that \df{the clock of $b$ runs slower than the clock of $c$ as} 
\df{seen by \,$k$\, by radar} iff
$\time_b(e_b,\bar{e}_b)<\time_c(e_c,\bar{e}_c)$ for all events $e_b,\bar{e}_b, e_c,
\bar{e}_c$ for which $b\in e_b\cap \bar{e}_b$, $c\in e_c\cap \bar{e}_c$ and
$e_b\simrad_k e_c$, $\bar{e}_b\simrad_k \bar{e}_c$.
If it is \df{seen by photons}, we use $\simph_k$ instead of $\simrad_k$.
Similarly, if it is \df{seen by Minkowski simultaneity}, we use $\simmu_k$ instead of $\simrad_k$.

The following theorem will show that the flow of time as seen by photons is strongly connected with the following two concepts.
We say that observer $c$ is \df{approaching} (or \df{moving away} from) observer $b$ as seen by $b$ by photons at event $e_b$ iff $b\in e_b$ and, for all events
$e_c$ if $c\in e_c$ and $e_b\simph_b e_c$, there is an event $e$ such that $b',c'\in e$ for all co-moving {\em inertial} observers $b'$ and $c'$ of $b$ and $c$ at events $e_b$ and $e_c$, respectively, and $e_b$ precedes (succeeds) $e$ according to $b$, see (b) of Figure \ref{figupp}.
We say that $c$ is approaching (moving away from) $b$ as seen by $b$ by photons iff it is so for every event $e_b$ for which $b\in e_b$.
The idea behind these definitions is the following: 
two observers are considered approaching when they would meet if they have stopped accelerating at simultaneous events.

\begin{rem} We note that coplanar {\em inertial} observers seen by photons are approaching each other before the event of meeting and moving away from each other after it.
This fact explains the names of these concepts.
\end{rem}
 
\begin{rem}
There is no direct connection between the two concepts we have just introduced.
For example, there are easily constructable models of \ax{AccRel} and (uniformly accelerated) observers $b$ and $c$ such that $c$ is approaching $b$ seen by $b$ by photons while $b$ is moving away from $c$ seen by $c$ by photons, see the proof of Theorem \ref{thm-ob}.
\end{rem}

\begin{thm} \label{thm-ph}
Let $d\ge 3$.
Assume \ax{AccRel_\d}.
Let $b,c\in\Ob$ such that $c$ and $b$ are coplanar.
Then
\begin{itemize}
\item[(1)] If $c$ is approaching $b$ as seen by $b$ by photons, the clock of $b$ runs slower
than the clock of $c$ as seen by $b$ by photons.
\item[(2)] If $c$ is moving away from $b$ as seen by $b$ by photons, the clock of $c$ runs
slower than the clock of $b$ as seen by $b$ by photons.
\end{itemize}
\end{thm}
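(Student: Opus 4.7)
The plan is to reduce the proper-time comparison to a pointwise Doppler-factor calculation. I would compute the derivative $d\sigma/d\tau$ of $c$'s proper time $\sigma$ with respect to $b$'s proper time $\tau$ along photon-connected events, show that the approaching (resp.\ moving-away) hypothesis forces $d\sigma/d\tau > 1$ (resp.\ $<1$), and then integrate via \ax{IND} to obtain the stated inequality between elapsed proper times.

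Setup: fix an inertial observer $m$ whose coordinate system contains the common vertical plane of $wl_m(b)\cup wl_m(c)$ provided by coplanarity; after an inertial change of coordinates this plane may be taken to be the $(t,x)$-plane of $m$. Parametrize the worldlines of $b$ and $c$ by their own proper times via $lc^b_m$ and $lc^c_m$, writing $e_b(\tau)$ and $e_c(s)$ for the corresponding events. Given the four events $e_b,\bar e_b,e_c,\bar e_c$ from the hypothesis, connect them by a continuous path $\tau\mapsto e_c(\sigma(\tau))$ obtained by tracing a photon-connection on $c$'s worldline as $\tau$ varies between the proper times of $e_b$ and $\bar e_b$, with $\sigma$ agreeing with the given endpoint values. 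The existence, continuity, and differentiability of such a $\sigma$ should follow from transversality in the $(t,x)$-plane, from smoothness of life-curves, and from \ax{IND}.

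For the derivative at a fixed $\tau_0$, let $b'$ be a co-moving inertial observer of $b$ at $e_b(\tau_0)$ and $c'$ a co-moving inertial observer of $c$ at $e_c(\sigma(\tau_0))$ (both exist by \ax{AxAcc}). In $b'$'s coordinate system, $b'$ sits on the time axis, $c'$ moves on a straight line $x(s)=x_0+vs$ of the $(t,x)$-plane with some $v\in(-1,1)$, and the past photon from $e_c(\sigma(\tau_0))$ to $e_b(\tau_0)=\vo$ forces $x_0>0$ after orienting. The photon condition $\tau-s=x_0+vs$ yields $ds/d\tau=1/(1+v)$, and converting $s$ from $b'$-time to $c'$-proper-time introduces the time-dilation factor $\sqrt{1-v^2}$, so $d\sigma/d\tau=\sqrt{(1-v)/(1+v)}$ by the first-order identification of $b,c$'s proper times with $b',c'$-times at the respective events. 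Approaching at $e_b(\tau_0)$ requires $b'$ and $c'$ to meet at an event in $b'$'s future, i.e.\ the intersection time $-x_0/v$ must be positive, forcing $v<0$ and hence $d\sigma/d\tau>1$; moving away forces $v>0$ and $d\sigma/d\tau<1$. Integrating via the fundamental theorem of calculus, available in the first-order setting through \ax{IND}, across the interval between the proper times of $e_b$ and $\bar e_b$ yields the strict inequality in both cases.

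The main obstacle I anticipate is not the Doppler calculation --- which is essentially one line --- but the FOL-level analytic infrastructure: well-definedness, continuity, and differentiability of the photon-connection function $\sigma$, especially when $c$'s worldline crosses the past light cone of $e_b(\tau)$ multiple times so that a consistent branch must be chosen, together with executing the integration within \ax{AccRel} rather than in full real analysis. These should all follow from preparatory lemmas on life-curves, photon-connections, and \ax{IND}-based calculus, some of which are presumably the tools the authors develop earlier in the paper.
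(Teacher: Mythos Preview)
Your proposal is essentially the same argument as the paper's, carried out in a more explicit coordinate form. The paper sets $\beta=\beta_*=lc^b_m$, $\gamma=lc^c_m$, lets $\gamma_*$ be the photon reparametrization of $\gamma$ according to $\beta$, and then your function $\sigma$ is exactly the reparametrization map $j=\gamma_*\circ\gamma^{-1}$; the paper's key inequality $\mu(\gamma'_*(t))>\mu(\beta'_*(t))=1$ is precisely your $|d\sigma/d\tau|>1$.

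Two minor differences worth noting. First, where you compute $d\sigma/d\tau=\sqrt{(1-v)/(1+v)}$ via co-moving inertial observers and an explicit Doppler calculation, the paper obtains the same derivative geometrically (Proposition~\ref{prop-ph}(2): $\gamma_*(t_0)+\gamma'_*(t_0)\in\Lambda^-_{\beta(t_0)+\beta'(t_0)}$), and then reads off $\mu(\gamma'_*(t))>1$ directly from the fact that the tangent lines of $\beta_*$ and $\gamma_*$ meet in the chronological future---which is exactly the approaching hypothesis. Second, where you invoke the fundamental theorem of calculus, the paper avoids integration entirely and uses the \ax{IND}-Mean--Value Theorem (packaged as Lemma~\ref{lem-main}); this is the cleaner route in the FOL setting since it sidesteps defining integrals over definable functions. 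Your anticipated obstacles are exactly what the paper's preparatory lemmas handle: well-definedness and differentiability of $\sigma$ is Proposition~\ref{prop-ph}, and your worry about multiple light-cone crossings is dissolved by Lemma~\ref{lem-chord}(1) (all chords of a definable timelike curve are timelike, so a past light cone meets $\ran\gamma$ at most once).
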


Now let us state a theorem about the clock-slowing effect of gravitation in the radar spaceship:

\begin{thm} \label{thm-rad}
Let $d\ge 3$.
Assume \ax{AccRel_\d}.
Let $\rship$ be a radar spaceship such that:
\begin{itemize}
\item[(i)] $k$ is positively accelerated,
\item[(ii)] the direction of the spaceship is the same as that of the acceleration of
$k$.
\end{itemize}
Then both (1) and (2) hold:
\begin{itemize}
\item[$(1)$] The clock of $b$ runs slower than the clock of $c$ as seen by $k$ by radar.
\item[$(2)$] The clock of $b$ runs slower than the clock of $c$ as seen by each of $k$, $b$ and $c$ by photons.
\end{itemize}
\end{thm}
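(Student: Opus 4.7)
The plan is to analyze both parts in co-moving inertial frames of $k$, using the constancy of radar distance together with the direction condition to pin down the spatial configuration of the spaceship. First I would prove a ``snapshot lemma'': fix an event $e$ on $k$'s world-line and a co-moving inertial observer $m$ of $k$ at $e$; then the world-lines $wl_m(b)$ and $wl_m(c)$ meet the simultaneity hyperplane through $Crd_m(e)$ at points of the form $Crd_m(e)-\lambda_b \vec{u}$ and $Crd_m(e)+\lambda_c \vec{u}$ respectively, where $\lambda_b,\lambda_c\in\Q^+$ are the constant radar distances of $b$ and $c$ from $k$, and $\vec{u}$ is the spatial unit vector in the direction of $\vakm(t)$ at the $k$-proper-time $t$ of $e$; moreover, $b$ and $c$ have zero instantaneous $m$-velocity at these events. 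This lemma uses coplanarity to reduce to a $1+1$-dimensional plane, the identification of radar distance with Minkowski spatial distance in a co-moving rest frame (a short SR computation), and the direction hypothesis to orient $\vec{u}$ toward $c$.

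For part (1), I would parametrize the radar-synchronized events on $b$ and $c$ by $k$'s proper time $\tau$ and express $\time_b(e_b,\bar e_b)$ and $\time_c(e_c,\bar e_c)$ as integrals of the instantaneous rates $d\tau_b/d\tau_k$ and $d\tau_c/d\tau_k$. The snapshot lemma applied at neighbouring instants, together with the definition of co-moving observer and a first-order Taylor expansion, yields that these rates are $1-a_k(\tau)\lambda_b$ and $1+a_k(\tau)\lambda_c$ respectively to first order, so that the $c$-rate strictly exceeds the $b$-rate pointwise in $\tau$. Integrating this strict inequality over any nontrivial $k$-proper-time interval, justified by \ax{IND} in the pattern of the Twin Paradox argument of \cite{Twp,mythes}, yields $\time_b(e_b,\bar e_b)<\time_c(e_c,\bar e_c)$.

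For part (2), I would reduce the ``seen by $b$'' and ``seen by $c$'' cases to Theorem \ref{thm-ph}: the snapshot lemma applied at an event $e_b$ on $b$ and its photon-partner $e_c$ on $c$ (which lies at an earlier $k$-proper-time) shows that the co-moving inertial $c'$ at $e_c$ has smaller velocity in the direction $\vec{u}$ than the co-moving inertial $b'$ at $e_b$, so the two inertial extrapolations meet in the future of $e_b$; hence $c$ is approaching $b$ as seen by $b$ by photons, and Theorem \ref{thm-ph}(1) delivers the conclusion. The symmetric argument shows that $b$ is moving away from $c$ as seen by $c$ by photons, and Theorem \ref{thm-ph}(2) delivers the conclusion there. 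The ``seen by $k$'' case is handled directly by the integral method of part (1), now with photon-simultaneity replacing radar-simultaneity, the snapshot lemma again supplying the pointwise ratio of proper-time rates along $b$ and along $c$ against $k$'s proper time.

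The main obstacle will be the rigorous derivation of the snapshot lemma within the weak framework of \ax{AccRel}: the constancy of radar distance is a priori a nonlocal condition on pairs of events, and converting it into local data in each co-moving inertial frame requires a careful combination of \ax{AxAcc}, the $\varepsilon$-$\delta$ definition of $\com_e$, and the coplanarity hypothesis; in addition, all the ``calculus'' in part (1) must be phrased through \ax{IND} rather than appealing to classical real analysis.
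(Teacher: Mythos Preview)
Your overall architecture---parametrize the events on $b$ and $c$ by $k$'s proper time, compare the instantaneous proper-time rates, and integrate the strict inequality via \ax{IND}---is exactly the strategy the paper uses (its Lemma~\ref{lem-main} packages the ``integrate the pointwise inequality'' step). The gap is in your snapshot lemma, which is false for a \emph{radar} spaceship. Constant radar distance from $k$ does \emph{not} imply that $b$ and $c$ are momentarily at rest in each co-moving inertial frame of $k$, nor that the radar-simultaneous event on $b$ lies on the Minkowski simultaneity hyperplane of that frame at spatial offset $\lambda_b$. Those statements hold for the Minkowski spaceship (and for the radar spaceship only when $k$ is \emph{uniformly} accelerated, where radar and Minkowski simultaneity coincide); in the general positively-accelerated case they fail. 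Correspondingly, your rate formulas $1-a_k(\tau)\lambda_b$ and $1+a_k(\tau)\lambda_c$ are the Minkowski-spaceship rates (compare Proposition~\ref{prop-mink}), not the radar-spaceship rates.

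What the paper does instead is compute the radar reparametrization $\beta_*$, $\gamma_*$ of $\beta=lc^b_m$, $\gamma=lc^c_m$ along $\alpha=lc^k_m$ and obtain their derivatives exactly via the photon-sum formula (Proposition~\ref{prop-rad}): e.g.\ $\gamma'_*(t)=\alpha'(t-r)\phsum\alpha'(t+r)$ on the side toward the acceleration and $\beta'_*(t)=\alpha'(t+R)\phsum\alpha'(t-R)$ on the other side. Since $\alpha'_2$ is monotone (Lemma~\ref{lem-vmon}), one reads off directly that $\mu(\gamma'_*(t))>1>\mu(\beta'_*(t))$, and Lemma~\ref{lem-main} finishes. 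For Item~(2) the paper does not reduce to Theorem~\ref{thm-ph} but reruns the same machinery with the photon reparametrization (Proposition~\ref{prop-ph}); your proposed reduction via ``approaching/moving away'' would again rest on the faulty snapshot lemma. The fix is to drop the snapshot lemma entirely and work with the exact reparametrization derivatives.
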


To state a similar theorem for Minkowski spaceships, we need the
following concept.
We say that observer \df{$b$ is not too far behind} 
positively accelerated observer $k$ iff the following holds:
\begin{equation}
\begin{split}
\forall m \in \IOb \enskip \forall t \in \dom \vakm\enskip \forall \vpp,\vqq \in Cd_m \quad \big(\,k\in ev_m(\vpp)\and\quad& \\
b\in ev_m(\vqq)\and ev_m(\vpp)\simmu_k ev_m(\vqq) \and \vakm(t)\upp (\vpp-\vqq) \then& \\
\qquad\qquad\quad\forall \tau \in \dom\vakm \quad -\mu(\vpp,\vqq)<1/a_k(\tau)\,&\big).
\end{split}
\end{equation}
Now we can state the theorem about the clock-slowing effect of gravitation in the Minkowski spaceship:

\begin{thm}\label{thm-mu}
Let $d\ge 3$.
Assume \ax{AccRel_\d}.
Let $\mship$ be a Minkowski spaceship such that:
\begin{itemize}
\item[(i)] $k$ is positively accelerated,
\item[(ii)] the direction of the spaceship is the same as that of the acceleration of $k$,
\item[(iii)] $b$ is not too far behind $k$.
\end{itemize}
Then both (1) and (2) hold:
\begin{enumerate}
\item The clock of $b$ runs slower than the clock of $c$ as
seen by $k$ by Minkowski simultaneity.
\item The clock of $b$ runs slower than the clock of $c$ as seen by each
of $k$, $b$ and $c$ by photons.
\end{enumerate}
\end{thm}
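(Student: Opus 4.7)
The plan is to reduce both (1) and (2) to a local analysis in co-moving inertial frames of $k$, then integrate, closely following the strategy that should already have worked for Theorem \ref{thm-rad}. Fix a $t\in\dom lc^k_k$, let $e_k^t$ denote $k$'s event at its proper time $t$, and pick by \ax{AxAcc} a co-moving inertial observer $m_t$ at $e_k^t$. By coplanarity we may work in a $2$-plane parallel to the time axis of $m_t$. The defining conditions of a Minkowski spaceship say $b$ and $c$ lie at constant Minkowski distances $L_b,L_c\in\Q^+$ from $k$; the direction assumption (ii) puts $b$ and $c$ on the spatial line through $Crd_{m_t}(e_k^t)$ parallel to $\vec a_m^{\,k}(t)_\sigma$, with $b$ at signed displacement $-L_b$ (against the acceleration) and $c$ at $+L_c$. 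A first lemma I would isolate says: under Minkowski-rigidity, $b$ and $c$ are momentarily at rest in $m_t$'s frame, which follows by differentiating the constant-$\mu$-distance identity along $k$'s life-curve and using that Minkowski simultaneity of $k$ coincides with $m_t$-simultaneity at $e_k^t$.

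For (1), parametrize by $k$'s proper time. For $t$ in an interval, let $e_b(t)$ and $e_c(t)$ be the $\simmu_k$-partners of $e_k^t$ on $b$'s and $c$'s worldlines. Using the above lemma, the $m_t$-spatial position of $b$ and $c$ is fixed in direction and has $m_t$-velocity $0$ at $e_k^t$. A standard infinitesimal computation in $m_t$'s frame (the rest frame of Rindler-type rigid motion) then yields that the proper-time rates satisfy $d\time_b/dt = 1-L_b\, a_k(t)$ and $d\time_c/dt = 1+L_c\, a_k(t)$; this is where condition (iii), $L_b<1/a_k(\tau)$, is essential to keep the first factor strictly positive so that $b$'s clock still runs forward. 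Since $a_k(t)>0$ and $L_b,L_c>0$, we obtain $d\time_b/dt<d\time_c/dt$ pointwise, and integrating across the interval defined by two $\simmu_k$-pairs $(e_b,e_c),(\bar e_b,\bar e_c)$ gives the strict inequality in (1). The integration step is where \ax{IND} enters, to upgrade the pointwise derivative comparison (first-order definable via the life-curves and the apparatus of Proposition \ref{prop-tr}) to the definite strict inequality on the supremum/infimum of the set where equality or reversed inequality could hold.

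For (2), I would show that the same rigid rest-frame picture also controls photon simultaneity. In $m_t$'s frame, a light signal emitted from $e_k^t$ toward $c$ reaches $c$ after parameter $L_c$ and a signal toward $b$ after parameter $L_b$; by the infinitesimal analysis above, the resulting proper-time advances on $b$ and $c$ still satisfy the strict inequality, and the $\simph_k$-pairing from $k$ between events on $b$'s and $c$'s worldlines matches, to leading order, the $\simmu_k$-pairing, so (2) for $k$ reduces to (1). To pass from ``as seen by $k$ by photons'' to ``as seen by $b$ by photons'' and by $c$, observe that on the coplanar rigid triple $b,k,c$ the past photon cones of $b$ and $c$ cut $c$'s and $b$'s worldlines at proper-time offsets that differ from $k$'s by the same Doppler-type factor $1\pm L a_k(t)$ (again using \ax{AxPh_0} inside $m_t$ and the momentarily-at-rest lemma); thus the sign of the comparison is preserved when the viewing observer changes to $b$ or $c$. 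The main obstacle is the first-order, ordered-field, \ax{IND}-based justification of ``integrate the pointwise inequality'' and the careful bookkeeping to ensure that every value $Crd_m$, $lc^k_m$, $a_k(\tau)$, etc.\ invoked is well-defined by Convention~\ref{conv-crd} and Proposition~\ref{prop-tr}; this is where the bulk of the technical work will lie, and where condition (iii) is indispensable to prevent the $b$-horizon phenomenon from breaking the monotonicity of $\time_b$ along $k$'s proper time.
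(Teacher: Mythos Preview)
Your approach differs from the paper's in a way worth noting. The paper does \emph{not} pass to a family of co-moving inertial observers $m_t$; it fixes a single inertial $m$ with the spaceship in a vertical plane, sets $\alpha=lc^k_m$, $\beta=lc^b_m$, $\gamma=lc^c_m$, takes the Minkowski reparametrizations $\beta_*,\gamma_*$ of $\beta,\gamma$ according to $\alpha$ (Proposition~\ref{prop-mink}), and then applies Lemma~\ref{lem-main}. Your formulas $d\,\time_b/dt=1-L_b\,a_k(t)$ and $d\,\time_c/dt=1+L_c\,a_k(t)$ are exactly $\mu(\beta'_*(t))$ and $\mu(\gamma'_*(t))$ in that framework, and your ``integrate the pointwise inequality via \ax{IND}'' is precisely what Lemma~\ref{lem-main} packages. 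So for (1) you are morally on the same track, but the co-moving-frame dressing is unnecessary and makes the FOL bookkeeping harder, not easier. You also silently assume $b$ is behind $k$ and $c$ is in front; the paper explicitly treats three possible orders of $b,k,c$ along the acceleration direction (Figure~\ref{fig-minkthm}(a)--(c)), and your signed-displacement set-up does not cover them all.

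There is a genuine gap in your argument for (2). First, you have the direction of $\simph$ backwards: by definition $e_1\simph_k e_2$ means photons travel \emph{from} $e_1,e_2$ \emph{to} a common event on $k$'s worldline (they precede $e$), not outward from $e_k^t$ as you write. Second, and more seriously, your reduction of (2) to (1) via ``the $\simph_k$-pairing matches, to leading order, the $\simmu_k$-pairing'' is not a proof: the statement to be shown is a strict inequality for \emph{every} $\simph$-simultaneous pair, and a leading-order coincidence with Minkowski pairing does not deliver that. The paper does not reduce (2) to (1); it re-runs the machinery with the \emph{photon} reparametrization (Proposition~\ref{prop-ph}) in place of the Minkowski one, and verifies requirement (iii) of Lemma~\ref{lem-main} directly in each of the many cases (which observer is viewing, and in which order $b,k,c$ sit), as sketched in Figure~\ref{fig-minkthm}(d). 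Your ``same Doppler-type factor $1\pm L\,a_k(t)$'' heuristic for changing the viewing observer to $b$ or $c$ likewise needs to be replaced by an actual computation of $\mu(\beta'_*),\mu(\gamma'_*)$ for the photon reparametrization according to $b$ and according to $c$.
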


We have seen that gravitation (acceleration) makes ``time flow slowly.'' 
However, we left the question open which feature of gravitation (its ``magnitude'' or its ``direction'') plays role in this effect.
The following theorem shows that two observers, say $b$ and $c$, can feel the same gravitation while the clock of $b$ runs slower than the clock of $c$.
Thus it is not the ``magnitude'' of the gravitation that makes ``time flow more slowly.''

\begin{thm} \label{thm-ob}
Let $d\ge 3$.
Then there is a model of \ax{AccRel_\d} and there
are $b,c\in \Ob$ such that
$a_b(t)=a_c(t)=1$ for all $t\in\Q$, but the clock of $b$ runs slower
than the clock of $c$ as seen by $b$ by photons (or
by radar or by Minkowski simultaneity).
\end{thm}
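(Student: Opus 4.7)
The plan is to exhibit an explicit ``Bell spaceships'' counterexample: two parallel uniformly-accelerated observers $b$ and $c$ with the same acceleration magnitude, spatially offset in the direction of acceleration. Both feel the same ``gravity'' ($a_b=a_c=1$), but because $b$ is behind $c$ along the accelerating direction, $b$'s clock will run slower than $c$'s from $b$'s perspective under all three of the simultaneity notions mentioned in the theorem.

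I would start from the standard Minkowski-type model of $\ax{AccRel_\d}$ over a Euclidean ordered field (as constructed in earlier work of the authors, e.g.\ \cite{Twp}), and specify $b,c$ by their life-curves in a distinguished inertial observer $m_0$'s frame: $lc^b_{m_0}(s)=(\sinh s,\cosh s-1,0,\ldots,0)$ and $lc^c_{m_0}(s)=(\sinh s,\cosh s-1+L,0,\ldots,0)$ for some fixed $L\in\Q^+$. Both are coplanar hyperbolic Rindler worldlines, and a direct Minkowski-length computation on the second proper-time derivative gives $a_b(s)=a_c(s)=1$. That the extended structure is a model of $\ax{AccRel_\d}$ is routine once one knows this for the underlying Minkowski model.

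For the three clock-slowing claims I would argue as follows. For the photon-simultaneity version, verify that $c$ is approaching $b$ as seen by $b$ by photons at every event on $b$'s worldline, and then invoke Theorem~\ref{thm-ph}(1). Concretely: for any $e_b$ at $b$-proper time $s_b$, the photon-simultaneous event on $c$'s worldline has $c$-proper time $s_c$ determined by $e^{s_b}=L+e^{s_c}$; solving the resulting $2\times 2$ linear system for the intersection $e$ of the tangent lines to $b$ at $e_b$ and $c$ at $e_c$ shows that the parameter of $e$ along $b$'s 4-velocity is strictly positive, so $e_b$ precedes $e$ according to $b$. For the radar-simultaneity version, the radar-midpoint map reduces to $R_b(s_c)=\tfrac12\ln\!\bigl((e^{s_c}+L)/(e^{-s_c}+L)\bigr)$, whose derivative is visibly strictly less than $1$ for every $s_c$; integrating yields $\time_b<\time_c$ between radar-simultaneous pairs. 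For the Minkowski-simultaneity version, Minkowski-orthogonality of $e_c-e_b$ to $b$'s 4-velocity reduces to $\sinh(s_c-s_b)=L\sinh s_b$, which differentiates to $\d s_c/\d s_b=1+L\cosh s_b/\sqrt{1+L^2\sinh^2 s_b}>1$.

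The main obstacle is the positivity check in the photon case: showing that the $b$-tangent parameter at the intersection is strictly positive requires some hyperbolic-identity manipulation, but after substituting $e^{s_b}=L+e^{s_c}$ it reduces to the manifestly true inequality $L(e^{s_c}+e^{-s_c}+L)>0$. Everything else is either direct differentiation, a sign inspection, or an appeal to Theorem~\ref{thm-ph}.
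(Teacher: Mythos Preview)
Your construction is essentially the paper's: two parallel Rindler worldlines offset in the direction of acceleration, with the photon case handled by checking that $c$ is approaching $b$ and invoking Theorem~\ref{thm-ph}(1). Where you and the paper differ is in the radar and Minkowski cases. The paper places $b$'s hyperbola so that it is centered at the origin and then uses the special fact that for such a uniformly accelerated observer both $\simrad_b$ and $\simmu_b$ coincide and are given by radial lines through $\vo$; the clock comparison then reduces to a one-line reading from the figure. You instead compute the simultaneity maps $s_c\mapsto s_b$ explicitly and differentiate, obtaining $R_b'(s_c)<1$ and $ds_c/ds_b>1$ directly. Your route is more computational but fully self-contained and does not rely on the centering trick (indeed your $-1$ shift puts $b$'s center off the origin, so the paper's shortcut is not available in your coordinates).

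One small correction: you must take $\Q$ to be the field of real numbers, not an arbitrary Euclidean ordered field, since your argument uses $\sinh$, $\cosh$, and $e^s$, none of which exist over a general Euclidean ordered field. The paper makes this choice explicitly at the outset of its proof.
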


%%%%%%%%%%%%%%%%%%%%%%%%%%%%%%%%%%%%%%%%%%%%%%%%%%%%%%%%%%%%
\section{Proofs of the Theorems}%
%%%%%%%%%%%%%%%%%%%%%%%%%%%%%%%%%%%%%%%%%%%%%%%%%%%%%%%%%%%%
\label{proofs-sec}

Our next step is to prove the theorems introduced above.
First we have to develop the necessary tools.
Let us recall some basic first-order definable notations.
The {\bf composition} of binary relations $R$ and $S$ is defined as: 
\begin{equation}\Df{R \circ S}\leteq\Setopen \langle a,c\rangle \::\: \exists b \enskip \langle a,b\rangle \in R\and \langle b,c\rangle \in S \Setclose.\end{equation} 
The \df{range} of a binary relation $R$ is defined as $\ran R\leteq\setopen y:\exists x \enskip R(x,y)\setclose$.
\begin{rem}
We think of a {\bf function} as a special binary relation.
Notation $\Df{f:A\rightarrow B}$ denotes that $f$ is a function from $A$ to $B$, that is, $Dom f=A$ and $\ran f\subseteq B$.
Note that if $f$ and $g$ are functions, then 
\begin{equation}
(f \circ g) (x)=g\big(f(x)\big)
\end{equation}
for all $x\in \dom(f\circ g)$.
Notation $\Dff{f:A\parrow B}$ denotes that $f$ is a function, $\dom f\subseteq A$ and $\ran f\subseteq B$.
\end{rem}

We find that studying the relationships between the world-views is more illuminating than studying the world-views in themselves. 
Therefore, the following definition is a fundamental one.
The {\bf world-view transformation} between the coordinate-domains of observers $k$ and $m$ is defined as:
\begin{equation}
\Df{w^k_m} \leteq\Setopen\langle \vqq,\vpp\rangle \in Cd_k \times Cd_m\::\:ev_k(\vqq)=ev_m(\vpp)\Setclose,
\end{equation}
see Figure \ref{fig-fmk}.
We note that $w^k_m=ev_k\circ Crd_m$.
We also note that although the world-view transformations are only binary relations by this definition, axiom \ax{AxPh_0} turns them into functions, see \eqref{item-crd} in Proposition~\ref{prop-tr} below.

\begin{conv}\label{fmkconv}
Whenever we write $w^k_m(\vqq) $, we mean that there is a unique
$\vpp\in \Q^d$ such that $\langle \vqq,\vpp\rangle \in w^k_m$, and this $\vp$ is denoted by $w^k_m(\vqq)$.
\end{conv}

A map $f:\Q^d\rightarrow \Q^d$ is called {\bf Poincar\'e transformation} iff it is an affine transformation preserving the Minkowski distance, that is, $\mu\big(f(\vpp),f(\vqq)\big)=\mu(p,q)$.
A map $\tilde\varphi:\Q^d\rightarrow \Q^d$ is called {\bf field-automorphism induced map} iff there is an automorphism $\varphi$ of the field $\langle\Q,\cdot,+\rangle$ such that $\tilde\varphi(\vpp)=\langle \varphi(p_1),\ldots,\varphi(p_d)\rangle$ for every $\vp\in\Q^d$.

\begin{thm}
\label{thm-poi}
Let $d\ge 3$.
Let $m,k\in\IOb$.
Then
\begin{enumerate}
\item If \ax{AxPh_0} and \ax{AxEv} are assumed, $w^k_m$ is a Poincar\'e transformation composed by a dilation $D$ and a field-automorphism induced map $\tilde\varphi$.
\item If \ax{AxPh_0}, \ax{AxEv} and \ax{AxSimDist} are assumed, $w^k_m$ is a Poincar\'e transformation.
\end{enumerate}
\end{thm}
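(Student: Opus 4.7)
The plan is to reduce Theorem~\ref{thm-poi} to a generalised Alexandrov--Zeeman theorem over ordered fields: any bijection of $\Q^d$ (with $d \ge 3$) preserving lightlike-separation $\pheq$ in both directions is the composition of a Poincaré transformation, a dilation, and a field-automorphism induced map. The first step is to show that $w^k_m\colon \Q^d \to \Q^d$ is a well-defined bijection. The ``iff'' form of \ax{AxPh_0} immediately gives $Cd_m = \Q^d$ for inertial $m$, since any $\vpp$ has lightlike partners $\vqq$ and the axiom supplies a photon observed by $m$ at both coordinate points. The same axiom forces $ev_m$ to be injective on $\Q^d$: if $ev_m(\vpp) = ev_m(\vqq)$, every photon through $\vpp$ must also pass through $\vqq$, so the lightcones at $\vpp$ and $\vqq$ coincide, which forces $\vpp = \vqq$. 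Hence $Crd_m$ is a bijection, and \ax{AxEv} makes $w^k_m = ev_k \circ Crd_m$ a bijection $\Q^d \to \Q^d$ with inverse $w^m_k$.

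Next I would verify that $w^k_m$ preserves the lightlike relation in both directions: if $\vpp \pheq \vqq$ in $\Q^d$, then \ax{AxPh_0} applied to $k$ yields a photon $ph \in ev_k(\vpp) \cap ev_k(\vqq) = ev_m\bigl(w^k_m(\vpp)\bigr) \cap ev_m\bigl(w^k_m(\vqq)\bigr)$, so \ax{AxPh_0} applied to $m$ gives $w^k_m(\vpp) \pheq w^k_m(\vqq)$; the converse is the same argument applied to $w^m_k$.

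The main obstacle is the Alexandrov--Zeeman-type theorem over an arbitrary Euclidean ordered field. The standard route is: (i) show that lightlike lines are sent to lightlike lines; (ii) use the rich cone geometry available when $d \ge 3$ to deduce that arbitrary lines are sent to lines, so the bijection is collineation-like; (iii) apply the fundamental theorem of affine geometry to express the bijection as the composition of an affine map and a field-automorphism induced map; (iv) use preservation of the lightcone structure by the resulting linear part to force it to be Lorentz up to a scalar dilation. The hypothesis $d \ge 3$ is essential here; in $d = 2$ the two lightlike directions decouple and the theorem fails. I would either import this step from the authors' earlier treatment within the same framework or adapt the classical proof to ordered fields.

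Finally, for part~(2), I would use \ax{AxSimDist} to eliminate the remaining factors. Take event pairs that are simultaneous for both $m$ and $k$; by \ax{AxSimDist}, $m$ and $k$ assign them the same spatial distance. Substituting $w^k_m = P \circ D \circ \tilde\varphi$ into the equation $\dist_m(e_1,e_2) = \dist_k(e_1,e_2)$ across a wide family of such pairs forces the dilation factor of $D$ to equal $1$ and the field-automorphism $\tilde\varphi$ to fix every element of $\Q$ arising as a spatial distance, which, since the field is Euclidean, suffices to conclude $\tilde\varphi = \mathrm{id}$. Hence $w^k_m$ is a Poincaré transformation.
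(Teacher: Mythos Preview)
Your proposal is correct and follows essentially the same route as the paper: establish that $w^k_m$ is a bijection of $\Q^d$ preserving lightlike separation, invoke the Alexandrov--Zeeman theorem over ordered fields for Item~(1), and then use \ax{AxSimDist} to kill the dilation and field-automorphism factors for Item~(2). The only point the paper makes slightly more explicit is that, from the structure obtained in Item~(1), one can find an entire line $l$ orthogonal to the time-axis whose $w^k_m$-image is again orthogonal to the time-axis, guaranteeing a supply of event pairs simultaneous for both $m$ and $k$; you assume such pairs exist without isolating this step.
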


\begin{proof}[\textcolor{proofcolor}{On the proof}]
By \ax{AxPh_0} and \ax{AxEv}, $w^k_m$ is a bijection from $\Q^d$ to $\Q^d$ that preserves lines of slope 1, see \eqref{item-fmk} in Proposition \ref{prop-tr}.
Hence Item (1) is a consequence of the Alexandrov-Zeeman theorem generalized for fields, see, e.g., \cite{vroegindewey}, \cite{VKK}.

Now let us see why Item (2) is true.
By Item (1), it is easy to see that there is a line $l$ such that both $l$ and its $w^k_m$ image is orthogonal to the time-axis.
Thus by \ax{AxSimDist}, $w^k_m$ restricted to $l$ is distance preserving.
Consequently, both dilation $D$ and field-automorphism induced map $\tilde\varphi$ in Item (1) has to be the identity map.
Hence $w^k_m$ is a Poincar\'e transformation.
\end{proof}

\begin{prop}
\label{prop-tr}
Let $m,k,h\in \Ob.$ 
Then
\begin{enumerate}
\item \label{item-trfiff} $lc^k_m$ is a function iff (i) event $e$ has a unique coordinate in $Cd_m$ if $k\in e\in Ev_m\cap Ev_k$, and (ii) $ev_k(\vqq)=ev_k(\vqq')$ if $\vqq,\vqq'\in Cd_k$ such that $ev_k(\vqq),ev_k(\vqq')\in Ev_m$, $k\in ev_k(\vqq)\cap ev_k(\vqq')$ and $q_\tau=q'_\tau$.
\item \label{item-crd} Assume \ax{AxPh_0}, and let $m\in\IOb$.
Then $Cd_m=\Q^d$ and $ev_m$ is injective.
Thus $Crd_m$ and $w^k_m$ are functions.
\item \label{item-trfunct} Assume \ax{AxPh_0}, \ax{AxSelf^-}, and let $m\in\IOb$.
Then $lc^k_m$ is a function.
\item \label{item-fmk} Assume \ax{AxPh_0}, \ax{AxEv}, and let $m,k\in\IOb$.
Then $w^k_m$ is a bijection from $\Q^d$ to $\Q^d$.
\item \label{item-trkk} Assume \ax{AxSelf^-} and \ax{AxAcc}.
Then $\langle t,\vpp\rangle \in lc^k_k$ iff $\vpp\in Cd_k$, $\vpp_\sigma=\vo$ and $p_\tau=t$.
\item \label{item-tr} $lc^h_m\supseteq lc^h_k\circ w^k_m$ always holds and if we assume $\ax{\mathit{Ev_m\subseteq Ev_k}}$, then $lc^h_m= lc^h_k\circ w^k_m$.
\item \label{item-domtr} $\setopen q_\tau : k\in ev_m(\vqq)\setclose=\dom lc^k_k\supseteq \dom lc^k_m$ always holds and, if we assume \ax{AxAcc} and $m\in \IOb$, then $\dom lc^k_m=\dom lc^k_k$.
\item \label{item-rantr} $\ran lc^k_m\subseteq wl_m(k)$ always holds and if we assume \ax{AxEvTr}, then $\ran lc^k_m=wl_m(k)$.
\end{enumerate}
\end{prop}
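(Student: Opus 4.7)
The proposition collects eight almost independent assertions about $lc^k_m$, $Crd_m$, $w^k_m$, and $wl_m(k)$, and the plan is to treat each item as a short unfolding of the relevant definitions, invoking only the axioms listed for that item.

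Item~(1) is pure bookkeeping. Writing out ``$lc^k_m$ is a function'' as the statement that two pairs $\langle t,\vpp\rangle$ and $\langle t,\vz\rangle$ in $lc^k_m$ must have $\vpp=\vz$, a direct inspection of the defining formula for $lc^k_m$ shows that the two possible obstructions to single-valuedness correspond exactly to clauses~(i) and~(ii). Items~(5)--(8) are similar unfoldings. Item~(5) combines Remark~\ref{rem-axacc}(1), which under \ax{AxAcc} gives the uniqueness of the $\vqq\in Cd_k$ occurring in the definition of $lc^k_k$, with \ax{AxSelf^-}, which forces $\vqq_\sigma=\vo$. Item~(6) amounts to chasing elements through the definition of composition; the reverse inclusion uses $Ev_m\subseteq Ev_k$ to lift any $\vpp\in Cd_m$ to a $\vqq\in Cd_k$ with $ev_k(\vqq)=ev_m(\vpp)$. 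Item~(7) proceeds the same way, with Remark~\ref{rem-axacc}(2) delivering the sharper equality for inertial~$m$ (an inertial observer coordinatizes every event that $k$ encounters). Item~(8) needs only the definitions of $\ran$ and $wl_m(k)$, with \ax{AxEvTr} supplying the converse inclusion.

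The genuine work lies in item~(2), which I expect to be the main obstacle: showing that $ev_m$ is injective for inertial~$m$. The idea is that if two distinct coordinate points $\vpp$ and $\vz$ mapped to the same event, then by the forward direction of \ax{AxPh_0} a photon through $\vpp$ and a suitable auxiliary $\vqq$ on the light cone of $\vpp$ would also pass through $\vz$, so by the backward direction of \ax{AxPh_0} the vector $\vz-\vqq$ would have to be lightlike. I would therefore choose $\vqq$ lying on the light cone of $\vpp$ but off the light cone of $\vz$ and distinct from~$\vpp$; such a $\vqq$ exists because two light cones with distinct apexes coincide only on a proper lower-dimensional subvariety, and I would check this by a short case distinction on whether $\vpp-\vz$ is spacelike, timelike, or lightlike, producing an explicit witness in each case. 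That $Cd_m=\Q^d$ is far easier: for any $\vpp\in\Q^d$ take any $\vqq\ne\vpp$ with $|\vpp_\sigma-\vqq_\sigma|=|p_\tau-q_\tau|$ and apply the forward direction of \ax{AxPh_0} to conclude $ev_m(\vpp)\ne\emptyset$.

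Once item~(2) is settled, items~(3) and~(4) come almost for free. For item~(3) I verify clauses~(i) and~(ii) of item~(1): (i) is exactly injectivity of $ev_m$ from item~(2); and for~(ii), \ax{AxSelf^-} forces the two $k$-coordinates $\vqq$ and $\vqq'$ in question to have vanishing space part, so the assumed equality $q_\tau=q'_\tau$ already implies $\vqq=\vqq'$, making the conclusion $ev_k(\vqq)=ev_k(\vqq')$ trivial. For item~(4), item~(2) applied to both $m$ and $k$ makes $ev_m$ and $ev_k$ bijections from $\Q^d$ onto $Ev_m$ and $Ev_k$ respectively, while \ax{AxEv} identifies these two codomains, so $w^k_m=ev_k\circ Crd_m$ is a bijection of $\Q^d$ onto itself. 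The only delicate bookkeeping is the witness choice in the injectivity step; everything else is a linear sequence of definitional substitutions.
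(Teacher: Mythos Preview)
Your proposal is correct and follows essentially the same route as the paper. The only stylistic differences are that the paper handles Item~(1) by writing out the factorization $lc^k_m=R\circ ev_k\circ Crd_m$ (with $R=\{\langle t,\vqq\rangle: k\in ev_k(\vqq),\ q_\tau=t\}$) and reading off conditions~(i) and~(ii) from that, rather than by direct inspection; and for Item~(2) the paper simply asserts that the distinguishing photon exists, whereas you plan an explicit case split on the causal type of $\vpp-\vz$ to exhibit the auxiliary lightlike point---your version is more explicit but the underlying idea is identical.
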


\begin{proof}
Item \eqref{item-trfiff} is a straightforward consequence of the definition of $lc^k_m$.
To see that, let $R\leteq\setopen \langle t,\vqq\rangle\in \Q\times Cd_k \::\: k\in ev_k(\vqq) \and q_\tau=t\setclose$.
Then $lc^k_m=R\circ w^k_m=R\circ ev_k\circ Crd_m$.
Since $ev_k$ is a function and $Crd_m$ is an inverse of a function, it is easy to see that $lc^k_m$ is a function iff $Crd_m$ is a function on $\ran (R\circ ev_k)$ and $R\circ ev_k$ is a function to $\dom Crd_m=Ev_m$.
It is clear that $Crd_m$ is a function on $\ran (R\circ ev_k)$ iff (i) holds;
and it is also clear that $R$ is a function to $\dom Crd_m=Ev_m$ iff (ii) holds.
Hence $lc^k_m$ is a function iff both (i) and (ii) hold.

To prove Item \eqref{item-crd}, let $\vpp\in\Q^d$.
Then by \ax{AxPh_0}, there is a $ph\in Ph$ such that $ph\in ev_m(\vpp)$.
Hence $Cd_m=\Q^d$.
Moreover, if $\vqq \in \Q^d$ and $\vqq\neq\vpp$, then it is possible to choose this $ph$ such that $ph\not\in ev_m(\vqq)$ also holds.
Thus $ev_m$ is injective.
Consequently, both $Crd_m\leteq ev_m^{-1}$ and $w^k_m\leteq ev_k\circ Crd_m$ are functions.

To prove Item \eqref{item-trfunct}, we should check (i) and (ii) of Item \eqref{item-trfiff}.
By Item \eqref{item-crd}, (i) is true.
By \ax{AxSelf_0} if $k\in ev_k(\vqq)\cap ev_k(\vqq')$, then $\vq_\sigma=\vo=\vqq'_\sigma$.
Thus if $q_\tau=q'_\tau$ also holds, then $\vqq=\vqq'$.
Hence (ii) is also true.

Let us now prove Item \eqref{item-fmk}.
By Item \eqref{item-crd}, we already have that $ev_k$ is a bijection from $Cd_k$ to $Ev_k$ and $Crd_m$ is a bijection from $Ev_m$ to $Cd_m$.
By $\ax{AxEv}$, $Ev_k=Ev_m$.
Thus $w^k_m=ev_k\circ Crd_m$ is a bijection from $Cd_k$ to $Cd_m$.
But by Item \eqref{item-crd}, we also have that $Cd_k=Cd_m=\Q^d$.
Hence $w^k_m$ is a bijection from $\Q^d$ to $\Q^d$.

To prove Item \eqref{item-trkk}, let $\langle t,\vpp\rangle \in lc^k_k$.
Then $\exists\vqq\in Cd_k$ such that $k\in ev_k(\vqq)=ev_k(\vpp)$, $q_\tau=t$.
By \ax{AxSelf^-}, we have that $\vpp_\sigma=\vo$.
By \ax{AxAcc}, we have that $\vpp=\vqq$, see (1) of Remark \ref{rem-axacc}.
Hence $\vpp\in Cd_k$, $\vpp_\sigma=\vo$ and $p_\tau=t$.
The converse is also clear since if $\vpp\in Cd_k$ and $\vpp_\sigma=\vo$, then $k\in ev_k(\vpp)$ by \ax{AxSelf^-}.

To prove Item \eqref{item-tr}, let $\langle t,\vpp\rangle\in lc^h_k\circ w^k_m$.
That means $\exists\vc\in Cd_k$ such that $\langle t,\vc\,\rangle\in lc^h_k$ and $\langle\vc,\vpp\rangle\in w^k_m$, which is equivalent with $\exists\vqq\in Cd_h$ such that $h\in ev_h(\vqq)=ev_k(\vc\,)$, $q_\tau=t$ and $ev_k(\vc\,)=ev_m(\vpp)$.
Thus $\langle t,\vpp\rangle\in lc^h_m$.
To prove the converse inclusion, let $\langle t,\vpp\rangle\in lc^h_m$.
That means that there is a $\vqq\in Cd_h$ such that $h\in ev_h(\vqq)=ev_m(\vpp)$ and $q_\tau=t$.
By the assumption $Ev_m\subseteq Ev_k$, we have that $\exists \vc\in Cd_k$ such that $ev_k(\vc\,)=ev_m(\vpp)$.
Thus $\langle t,\vpp\rangle\in lc^h_k\circ w^k_m$.
That proves Item \eqref{item-tr}.

To prove Item \eqref{item-domtr}, let us recall that $t\in \dom lc^k_m$ iff there are $\vpp\in Cd_m$ and $\vqq\in Cd_k$ such that $k\in ev_m(\vpp)=ev_k(\vqq)$ and $q_\tau=t$.
From that, it easily follows that $t\in \dom lc^k_k$ iff there is a $\vqq\in Cd_k$ such that $q_\tau=t$ and $k\in ev_k(\vqq)$.
Thus $\setopen q_\tau : k\in ev_m(\vqq)\setclose=\dom lc^k_k\supseteq \dom lc^k_m$ is clear; 
and if we assume \ax{AxAcc} and $m\in\IOb$, then $\dom lc^k_k\subseteq \dom lc^k_m$ is also clear since \emph{inertial} observers coordinatize every event encountered by observers, see Remark \ref{rem-axacc}.
 
To prove Item \eqref{item-rantr}, recall that $\vpp\in\ran lc^k_m$ iff $\vp\in Cd_m$ and there are $t\in \Q$ and $\vqq\in Cd_k$ such that $k\in ev_m(\vpp)=ev_k(\vqq)$ and $q_\tau=t$.
Thus $\ran lc^k_m\subseteq wl_m(k)\leteq\setopen \vpp\in Cd_m \::\: k\in ev_m(\vpp)\setclose$ is clear.
If $\vpp\in wl_m(k)$, then $k\in ev_m(\vpp)$.
Therefore, by \ax{AxEvTr}, we have that $ev_m(\vpp)\in Ev_k$.
Thus there is a $\vqq\in Cd_k$ such that $ev_m(\vpp)=ev_k(\vqq)$.
Hence $\ran lc^k_m=wl_m(k)$.
\end{proof}

We say that a function $\gamma:\Q\parrow \Q^d$ is a \df{curve} if $\dom\gamma$ is connected.
A curve $\gamma$ is called \df{timelike} iff it is differentiable, and $\gamma'(t)$ is timelike for all $t\in \dom \gamma$.
We call a timelike curve $\alpha$ \df{well-parametrized} if $\mu\big(\alpha'(t)\big)=1$ for all $t\in \dom \alpha$.

\begin{prop}
\label{prop-wellpar} 
Let $d\ge 3$.
Assume \ax{AccRel^0_\d}.
Let $k\in \Ob$ and $m\in\IOb$.
Then $lc^k_m$ is a definable and well-parametrized timelike curve.
\end{prop}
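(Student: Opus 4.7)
The plan is first to dispose of the easy claims---functionhood, connectedness of the domain, and first-order definability---and then to concentrate the real effort on differentiability together with well-parametrization. For the easy parts, functionhood of $lc^k_m$ is exactly item \eqref{item-trfunct} of Proposition \ref{prop-tr}; its domain equals $\dom lc^k_k$ by items \eqref{item-trkk} and \eqref{item-domtr} (using \ax{AxAcc} and $m\in\IOb$), and is connected by \ax{AxSelf^+}; first-order definability is automatic since $lc^k_m$, derivatives and $\mu$ are all first-order definable.

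For the substantive part, fix $t_0\in\dom lc^k_m$, set $e\leteq ev_k(t_0,\vo)$, and use \ax{AxAcc} to pick an inertial observer $m'$ co-moving with $k$ at $e$. My strategy is to factor $lc^k_m$ through $m'$: by item \eqref{item-tr} of Proposition \ref{prop-tr} combined with \ax{AxEv} (which yields $Ev_m\subseteq Ev_{m'}$ for inertial $m,m'$), I would write $lc^k_m = lc^k_{m'}\circ w^{m'}_m$. By Theorem \ref{thm-poi}, $w^{m'}_m$ is a Poincar\'e transformation, hence affine, with linear part $L$ preserving signed Minkowski length and causal character. So if I can show $(lc^k_{m'})'(t_0)=\vet$, the chain rule will give $(lc^k_m)'(t_0)=L(\vet)$, which is timelike with $\mu\big(L(\vet)\big)=\mu(\vet)=1$.

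The key computation is then the identity $(lc^k_{m'})'(t_0)=\vet$. For $s\in\dom lc^k_k$ near $t_0$, items \eqref{item-trkk} and \eqref{item-crd} of Proposition \ref{prop-tr} give $lc^k_{m'}(s)=w^k_{m'}(s,\vo)$, while Remark \ref{rem-comove} gives $lc^k_{m'}(t_0)=Crd_{m'}(e)=(t_0,\vo)$. The inequality defining $m'\com_e k$, applied with $\vpp=(s,\vo)$, then reads $\bigl|w^k_{m'}(s,\vo)-(s,\vo)\bigr|\le \varepsilon|s-t_0|$ on a neighborhood of $t_0$, which is precisely the statement that the difference quotient of $lc^k_{m'}$ at $t_0$ differs from $\vet$ by at most $\varepsilon$.

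The hard part will be the bookkeeping around the co-moving definition: one must verify that the reference point $Crd_k(e)$ appearing there really equals $(t_0,\vo)$ (using \ax{AxSelf^-} and item (1) of Remark \ref{rem-axacc}), that $(s,\vo)\in Cd_k$ for $s$ in a suitable neighborhood of $t_0$ (from \ax{AxSelf^+} combined with item \eqref{item-trkk}), and that the bound transfers cleanly from the Euclidean norm $|\cdot|$ used in the co-moving definition to the notion of derivative meant in the statement. A minor secondary concern is handling $t_0$ at a possible boundary of $\dom lc^k_k$, where only a one-sided derivative exists; the same argument then applies verbatim.
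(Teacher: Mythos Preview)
Your proposal is correct, and for the easy claims (definability, functionhood via item \eqref{item-trfunct}, and connectedness of the domain via item \eqref{item-domtr} and \ax{AxSelf^+}) it matches the paper's argument essentially verbatim.

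The difference lies in the substantive part. The paper does not actually prove the timelike and well-parametrized claims here: it cites Proposition 5.2 of the external reference \cite{Twp} and spends its remaining lines only on explaining why the axiomatic framework used there is compatible with the present one (that \ax{SpecRel_\d} still yields Poincar\'e world-view transformations, that \ax{AxAcc} coincides with the earlier formulation under \ax{AxSelf^-}, and that the two definitions of $lc^k_m$ agree). Your proposal, by contrast, gives the argument directly: pick a co-moving inertial $m'$ at the event in question, factor $lc^k_m=lc^k_{m'}\circ w^{m'}_m$ via item \eqref{item-tr} and \ax{AxEv}, and read off $(lc^k_{m'})'(t_0)=\vet$ from the very definition of $m'\com_e k$ applied to points $(s,\vo)$; then push this through the Poincar\'e map $w^{m'}_m$ by the chain rule and Theorem \ref{thm-poi}. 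This is almost certainly what the cited proposition in \cite{Twp} does as well, so your route is not conceptually novel, but it has the virtue of being self-contained within the present paper. The bookkeeping concerns you flag (identifying $Crd_k(e)=(t_0,\vo)$ via \ax{AxSelf^-} and Remark \ref{rem-axacc}, and ensuring $(s,\vo)\in Cd_k$ for nearby $s$) are real but routine, exactly as you say.
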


\begin{proof}
It is clear that $lc^k_m$ is definable.
By \eqref{item-trfunct} in Proposition \ref{prop-tr} we have that $lc^k_m$ is a function.
To prove that $lc^k_m$ is also a curve, we need to show that $\dom lc^k_m$ is connected.
It is so because by Item \eqref{item-domtr} in Proposition \ref{prop-tr}, $\dom lc^k_m=\setopen p_\tau : k\in ev_m(\vpp)\setclose$ and the latter is connected by \ax{AxSelf^+}.
Hence $lc^k_m$ is a curve.

To complete the proof, we have to show that $lc^k_m$ is also timelike and well-parametrized.
To see that Proposition 5.2 in \cite{Twp} says that $lc^k_m$ is timelike and well-parametrized in the models of \ax{AccRel^0_\d}, we have to consider three things: 
1.\ Even if \ax{SpecRel_\d} that is used here is weaker than what was used in \cite{Twp}, it is still strong enough to prove that the world-view transformations are Poincar\'{e} transformations (see Theorem \ref{thm-poi}) and we used \ax{SpecRel_\d} only by this consequence in \cite{Twp}.
2.\ Even if \ax{AxAcc} is formulated in a different manner here, it is the same assumption as in \cite{Twp} if \ax{AxSelf^-} is assumed.
3.\ If we assume \ax{AxSelf^-} and $m\in\IOb$, then $lc^k_m$ defined here is the same as in \cite{Twp}.
All the three are easily verifiable, which is left to the reader.
\end{proof}

By \ax{IND}, a certain fragment of real analysis can be generalised for ordered fields and \emph{definable} functions within FOL.
The generalisations used herein can be found in Section \ref{lem-sec}.
For further details, see \cite{Twp, mythes}.
We refer to these generalisations by marking them ``\ax{IND}-'' iff they can be proved by but not without \ax{IND}.
The FOL generalisations of some theorems such as the chain rule can be proved without \ax{IND},
so they are naturally referred to without the ``\ax{IND}-'' mark.

Coordinate-points $\vpp$ and $\vqq$ are called \df{Minkowski orthogonal}, in symbols $\vpp\Df{\mort}\vqq$, iff $p_\tau\cdot q_\tau=p_2\cdot q_2+\ldots+p_d\cdot q_d$.

Let us introduce the following notation: 
\begin{equation}
\Df{dw^k_m}(\vpp)\leteq w^k_m(\vpp)-w^k_m(\vo\,).
\end{equation}

\begin{prop}
\label{prop-inv} 
Let $d\ge3$.
Assume \ax{SpecRel_\d}.
Let $m,k\in\IOb$ and $h\in \Ob$.
Then
\begin{enumerate}
\item \label{item-mort} $\vpp\mort\vqq$ iff $dw^k_m(\vpp)\mort dw^k_m(\vqq)$.
\item \label{item-veloctransf} 
$\vvh_m=\vvh_k \circ dw^k_m$ and $\dom\vvh_m=\dom\vvh_k$.
\item \label{item-acctransf} 
$\vah_m=\vah_k\circ dw^k_m$ and $\dom\vah_m=\dom\vah_k$.
\end{enumerate}
\end{prop}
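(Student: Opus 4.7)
The plan is to reduce everything to the fact, established in Theorem \ref{thm-poi}, that $w^k_m$ is a Poincar\'e transformation, together with Item \eqref{item-tr} of Proposition \ref{prop-tr}, which gives the clean chain-rule-ready identity $lc^h_m = lc^h_k\circ w^k_m$. Once these two ingredients are in hand, each of (1)--(3) should be a short calculation.

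First I would unpack the structure of $dw^k_m$. Since $w^k_m$ is a Poincar\'e transformation, there exist a $\mu$-preserving linear map $A\colon\Q^d\to\Q^d$ and a vector $b\in\Q^d$ with $w^k_m(\vpp) = A\vpp + b$ for every $\vpp\in\Q^d$. Then
\begin{equation}
dw^k_m(\vpp) = w^k_m(\vpp) - w^k_m(\vo\,) = (A\vpp + b) - (A\vo + b) = A\vpp,
\end{equation}
so $dw^k_m$ is precisely the linear part $A$ of $w^k_m$, and in particular it preserves the Minkowski bilinear form $\langle \vpp,\vqq\rangle_\mu \leteq p_\tau q_\tau - p_2q_2 - \ldots - p_d q_d$. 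Since $\vpp\mort\vqq$ is by definition $\langle\vpp,\vqq\rangle_\mu = 0$, Item (1) follows at once from $\langle A\vpp, A\vqq\rangle_\mu = \langle \vpp,\vqq\rangle_\mu$.

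For Items (2) and (3), the observation is that $Ev_m = Ev_k$ by \ax{AxEv}, so Proposition~\ref{prop-tr}\eqref{item-tr} yields $lc^h_m = lc^h_k \circ w^k_m$, i.e., for every $t\in\dom lc^h_k$,
\begin{equation}
lc^h_m(t) = w^k_m\bigl(lc^h_k(t)\bigr) = A\cdot lc^h_k(t) + b.
\end{equation}
Hence $lc^h_m$ and $lc^h_k$ differ by the fixed affine, invertible transformation $\vpp\mapsto A\vpp + b$. Since $A$ is a linear bijection, differentiability of $lc^h_m$ at $t$ is equivalent to differentiability of $lc^h_k$ at $t$; applying the chain rule then gives $(lc^h_m)'(t) = A\cdot (lc^h_k)'(t)$, i.e., $\vvh_m(t) = A\vvh_k(t) = dw^k_m\bigl(\vvh_k(t)\bigr)$. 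Read as relational composition with the convention $(f\circ g)(x)=g(f(x))$, this is exactly $\vvh_m = \vvh_k\circ dw^k_m$, and the domain equality $\dom\vvh_m = \dom\vvh_k$ follows from the equivalence of differentiability noted above (together with $\dom lc^h_m = \dom lc^h_k$, which is immediate from $w^k_m$ being a bijection of $\Q^d$). Differentiating once more, $(\vvh_m)'(t) = A\cdot (\vvh_k)'(t)$, which reads $\vah_m(t) = dw^k_m\bigl(\vah_k(t)\bigr)$ and gives Item (3) together with the corresponding domain equality.

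The main conceptual obstacle is essentially notational: one must keep straight that $\circ$ denotes relational composition (so $\vvh_k\circ dw^k_m$ evaluated at $t$ is $dw^k_m(\vvh_k(t))$, not $\vvh_k(dw^k_m(t))$), and one must justify the chain rule for first-order definable functions over an arbitrary Euclidean ordered field. The latter is routine from the lemmas on definable analysis referred to in Section \ref{lem-sec} (and does not require \ax{IND} here, since the outer map $w^k_m$ is affine, so its derivative is constant and no limit argument beyond the definition is needed).
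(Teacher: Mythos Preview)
Your proposal is correct and follows essentially the same approach as the paper: both use Theorem~\ref{thm-poi} to identify $dw^k_m$ with the linear part of the Poincar\'e transformation, deduce Item~(1) from $\mu$-preservation (the paper via the polarization-type identity $\vpp\mort\vqq\iff\mu(\vpp+\vqq)^2=\mu(\vpp)^2+\mu(\vqq)^2$, you via direct bilinear-form preservation), and obtain Items~(2) and~(3) by applying the chain rule to $lc^h_m=lc^h_k\circ w^k_m$ from Proposition~\ref{prop-tr}\eqref{item-tr}. Your explicit remarks on the composition convention and on why the chain rule needs no \ax{IND} here are accurate and helpful.
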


\begin{proof}
To prove Item \eqref{item-mort}, observe that $\mu\big(dw^k_m(\vpp)\big)=\mu(\vpp)$ by Theorem \ref{thm-poi}.
The statement $\vpp\mort\vqq$ iff $\mu(\vpp+\vqq)^2=\mu(\vpp)^2+\mu(\vqq)^2$ can be shown by straightforward calculation.
Thus Item \eqref{item-mort} is clear since $dw^k_m$ is linear by Theorem \ref{thm-poi}.

To prove Items \eqref{item-veloctransf} and \eqref{item-acctransf}, let us note that $lc^h_m$ and $lc^h_k$ are functions by Item \eqref{item-trfunct} in Proposition \ref{prop-tr}.
Thus $\vvh_m=\vvh_k \circ dw^k_m$ follows by chain rule because $lc^h_k= lc^h_m\circ w^m_k$ (by \eqref{item-tr} in Proposition \ref{prop-tr}), the derivative of $w^k_m$ is $dw^k_m$ (since $w^k_m$ is affine transformation by Theorem \ref{thm-poi}), and $\vvh_x=(lc^h_x)'$ (by definition).
Hence $\dom\vvh_m=\dom\vvh_k$ also holds since $dw^k_m$ is a bijection.
$\vah_m=\vah_k\circ dw^k_m$ follows from \eqref{item-veloctransf} of this Proposition by chain rule because the derivative of $dw^k_m$ is $dw^k_m$ (since $dw^k_m$ is linear transformation), and $\vah_x=(\vvh_x)'$ (by definition). Hence $\dom\vah_m=\dom\vah_k$ also holds since $dw^k_m$ is a bijection.
\end{proof}

\begin{prop}$\,$
\label{prop-vmorta}
\begin{enumerate}
\item Let $\alpha$ be a well-parametrized timelike curve.
If $\alpha$ is twice differentiable at $t\in\dom\alpha$, then $\alpha'(t)\mort\alpha''(t)$.
\item Let $d\ge 3$.
Assume \ax{AccRel^0_\d}.
Let $k\in\Ob$ and $m\in \IOb$.
Then $\vvkm(t)\mort\vakm(t)$ for all $t\in \dom \vakm$.
\end{enumerate}
\end{prop}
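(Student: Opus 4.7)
The plan is to reduce (2) to (1) via Proposition \ref{prop-wellpar}, so the real work lies in (1). Because $\alpha$ is well-parametrized and timelike, the identity
\begin{equation*}
\mu\bigl(\alpha'(t)\bigr)^2 = \bigl(\alpha'(t)\bigr)_\tau^2 - \bigl|\bigl(\alpha'(t)\bigr)_\sigma\bigr|^2 = 1
\end{equation*}
holds on all of $\dom\alpha$. The idea is simply to differentiate this identity at any $t$ where $\alpha''(t)$ exists: the derivative of the constant function $1$ is $0$, while by the chain rule applied componentwise the derivative of the left-hand side equals $2(\alpha'(t))_\tau\cdot(\alpha''(t))_\tau - 2\sum_{i=2}^{d}(\alpha'(t))_i\cdot(\alpha''(t))_i$. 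Setting this equal to $0$ and cancelling the factor $2$ yields exactly $\alpha'(t)\mort\alpha''(t)$ by the definition of Minkowski orthogonality. This is the FOL analogue of the familiar fact that along a unit-speed curve velocity is Minkowski-orthogonal to acceleration.

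For (2), apply (1) with $\alpha \leteq lc^k_m$. By Proposition \ref{prop-wellpar}, under \ax{AccRel^0_\d} the life-curve $lc^k_m$ is a definable well-parametrized timelike curve whenever $k\in\Ob$ and $m\in\IOb$. By definition $\vvkm = (lc^k_m)'$ and $\vakm = (\vvkm)' = (lc^k_m)''$, so the condition $t\in\dom\vakm$ is precisely the condition that $lc^k_m$ be twice differentiable at $t$. Part (1) then immediately yields $\vvkm(t)\mort\vakm(t)$ for every such $t$.

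I anticipate no serious obstacle. The only point that needs mild attention is to confirm that the chain rule and the fact that ``the derivative of a constant is zero'' are available as \ax{IND}-free facts for definable functions in our FOL framework; both are among the standard analytic tools collected in Section \ref{lem-sec}, and since the argument in (1) is entirely local at each fixed $t$, no appeal to \ax{IND} is required here.
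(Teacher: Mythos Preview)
Your proposal is correct and matches the paper's own proof essentially line for line: differentiate the identity $\mu(\alpha'(t))^2=1$ componentwise to obtain the Minkowski-orthogonality relation, then deduce (2) from (1) via Proposition~\ref{prop-wellpar} and the definitions $\vvkm=(lc^k_m)'$, $\vakm=(lc^k_m)''$. Your remark that no appeal to \ax{IND} is needed is also in line with the paper's treatment.
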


\begin{proof}
To prove Item (1), let $t\in\dom\alpha$ such that $\alpha$ is twice differentiable at $t$.
Since $\alpha$ is a well-parametrized timelike curve, we have that
\begin{equation}\label{eq-wellp}
\big(\alpha'_1(t)\big)^2-\big(\alpha'_2(t)\big)^2-\ldots-\big(\alpha'_d(t)\big)^2=1.
\end{equation}
By derivation of both sides of equation \eqref{eq-wellp} we have that
\begin{equation}\label{eq-mort}
2\alpha'_1(t)\cdot\alpha''_1(t)-2\alpha'_2(t)\cdot\alpha''_2(t)-\ldots-2\alpha'_d(t)\cdot\alpha''_d(t)=0.
\end{equation}
Thus $\alpha'(t)\mort\alpha''(t)$, which is what we wanted to prove.

Item (2) is an easy consequence of Item (1) since $\vvkm=(lc^k_m)'$, $\vakm=(lc^k_m)''$, $lc^k_m$ is a well-parametrized timelike curve by Proposition \ref{prop-wellpar}, and $lc^k_m$ is twice differentiable at $t$ iff $t\in \dom \vakm$.
\end{proof}

It is practical to introduce a notation for the next vertical plane 
\begin{equation}
\Df{Plane(t,x)}\leteq\setopen \vpp\in \Q^d\::\: p_3=\ldots=p_d=0\setclose.
\end{equation}
If $f:\Q\parrow\Q$, we abbreviate $f(t)>0$ for all $t\in\dom f$ to $f>0$.
We also use the analogous notation $f<0$.

\begin{lem}$\,$ 
\label{lem-vmon}
\begin{enumerate}
\item Assume \ax{IND}.
Let $\alpha$ be a definable and twice differentiable timelike curve such that $\ran\alpha\subset Plane(t,x)$.
If $\alpha''\circ\mu<0$, then $\alpha'_2$ is increasing or decreasing.
\item Let $d\ge 3$.
Assume \ax{AccRel_\d}.
Let $k\in\Ob$ and $m\in \IOb$ such that $wl_m(k)\subset Plane(t,x)$ and $\dom\vakm=\dom lc^k_m$.
If $k$ is positively accelerated, $(\vvkm)_2$ is increasing or decreasing.
\end{enumerate}
\end{lem}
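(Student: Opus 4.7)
My plan is to prove Item (1) by reducing the claim to Rolle's theorem applied to $\alpha'_2$, and then to obtain Item (2) by applying Item (1) to the life-curve $\alpha\leteq lc^k_m$.

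For Item (1) the key step is to rule out the existence of $s_1<s_2$ in $\dom\alpha$ with $\alpha'_2(s_1)=\alpha'_2(s_2)$. Were such $s_1,s_2$ to exist, the \ax{IND}-version of Rolle's theorem applied to the differentiable function $\alpha'_2$ on $[s_1,s_2]$ would yield some $s^*\in(s_1,s_2)$ with $\alpha''_2(s^*)=0$. Since $\ran\alpha\subset Plane(t,x)$, we have $\alpha_i\equiv 0$ and hence $\alpha''_i\equiv 0$ for every $i\ge 3$. Thus the spatial part of $\alpha''(s^*)$ would be $\vo$, giving $\mu(\alpha''(s^*))=|\alpha''_1(s^*)|\ge 0$ by the definition of $\mu$, in contradiction with the hypothesis $\mu\circ\alpha''<0$. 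Hence $\alpha'_2$ is injective on $\dom\alpha$; since $\alpha'_2$ is definable and continuous on the connected interval $\dom\alpha$, the \ax{IND}-version of the intermediate value theorem forces injectivity to coincide with strict monotonicity, which is the desired conclusion.

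For Item (2) I apply Item (1) to $\alpha\leteq lc^k_m$. By Proposition \ref{prop-wellpar}, $lc^k_m$ is a definable (well-parametrized) timelike curve, and the hypothesis $\dom\vakm=\dom lc^k_m$ guarantees twice differentiability throughout the domain. From $wl_m(k)\subset Plane(t,x)$ combined with Item \eqref{item-rantr} of Proposition \ref{prop-tr} we get $\ran lc^k_m\subseteq wl_m(k)\subset Plane(t,x)$. Finally, $k$ being positively accelerated means $0<a_k(t)=-\mu(\vakm(t))$ for all $t\in\dom lc^k_k=\dom lc^k_m$ (the equality of domains coming from Item \eqref{item-domtr} of Proposition \ref{prop-tr} together with $m\in\IOb$ and \ax{AxAcc}); hence $\mu\circ(lc^k_m)''<0$. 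Item (1) then yields that $(\vvkm)_2=(lc^k_m)'_2$ is increasing or decreasing.

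I expect the main obstacle to lie purely in the calculus step: invoking Rolle's theorem, and via the intermediate value theorem the passage from injectivity to strict monotonicity, for definable functions over the ordered field $\Q$ rather than over the reals. This is standard in the present framework and is supplied by the \ax{IND}-toolkit of Section \ref{lem-sec}; everything else in the proof is bookkeeping with the pre-established properties of life-curves, the definitions of $Plane(t,x)$, $\mu$ and acceleration, and the relevant clauses of Proposition \ref{prop-tr}.
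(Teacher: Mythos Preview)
Your proof is correct and follows essentially the same line as the paper's: both hinge on the observation that $\ran\alpha\subset Plane(t,x)$ together with $\mu(\alpha''(t))<0$ forces $\alpha''_2(t)\neq 0$ everywhere, and then pass to monotonicity of $\alpha'_2$; your Item~(2) is virtually identical to the paper's. The only cosmetic difference in Item~(1) is the choice of calculus tool---the paper invokes \ax{IND}-Darboux on $\alpha''_2$ to get constant sign and then \ax{IND}-Mean--Value for monotonicity, whereas you use \ax{IND}-Rolle to obtain injectivity of $\alpha'_2$ and then Lemma~\ref{lem-injcont} (Bolzano) for monotonicity---but the two routes are interchangeable here.
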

\begin{proof}
To prove Item (1), let $t\in\dom \alpha$.
By Proposition \ref{prop-vmorta}, $\alpha''(t)$ is a spacelike vector since it is Minkowski orthogonal to a timelike one.
Therefore, $\mu(\alpha''(t))<0$ iff $|\alpha''_\sigma(t)|\neq0$.
Thus since $\ran\alpha\subset Plane(t,x)$, we have that $\mu(\alpha''(t))<0$ iff $\alpha''_2(t)\neq0$.
Thus by \ax{IND}-Darboux's Theorem, $\alpha''\circ\mu<0$ iff $\alpha''_2>0$ or $\alpha''_2<0$ since $\alpha'_2$ is definable and $\dom\alpha'=\dom\alpha$ is connected.
Then by \ax{IND}-Mean-Value Theorem, $\alpha'_2$ is increasing or decreasing.

Item (2) is an easy consequence of Item (1) because of the following.
Let $\alpha=lc^k_m$.
Then by Proposition \ref{prop-wellpar}, $\alpha$ is definable (well-parametrized) timelike curve.
$\alpha$ twice differentiable since $\dom\alpha''=\dom\vakm=\dom lc^k_m=\dom\alpha$;
and $\ran\alpha\subset Plane(t,x)$ since by \eqref{item-rantr} in Proposition \ref{prop-tr}, $wl_m(k)=\ran lc^k_m$.
Then $k$ is positively accelerated iff $\alpha''\circ\mu<0$.
Hence by Item (1) if $k$ is positively accelerated, $(\vvkm)_2=\alpha'_2$ is increasing or decreasing.
\end{proof}

The \df{light cone} of $\vpp\in\Q^d$ is defined as 
$\Df{\Lambda}{}_{\vpp}\leteq\setopen \vqq\in\Q^d:\vpp\pheq\vqq \setclose$.
The \df{past light cone} of $\vpp\in\Q^d$ is defined as 
$\Df{\Lambda}{}^-_{\vpp}\leteq\setopen\vqq\in\Q^d\::\: \vpp\pheq\vqq \and q_\tau\le p_\tau\setclose$.
The \df{future light cone} of $\vpp\in\Q^d$ is defined as 
$\Df{\Lambda}{}^+_{\vpp}\leteq\setopen\vqq\in\Q^d\::\: \vpp\pheq\vqq \and q_\tau\ge p_\tau\setclose$.
We say that $\vpp\in\Q^d$ \df{chronologically precedes} $\vqq\in\Q^d$, in symbols $\vpp\Df{\ll}\vqq$, iff $\vpp\teq\vqq$ and $p_\tau<q_\tau$.
The \df{chronological past} of $\vpp\in\Q^d$ is defined as $\Df{I}{}^-_{\vpp}\leteq\Setopen\vqq\in\Q^d\::\: \vqq\ll\vpp\Setclose$.
The \df{chronological future} of $\vpp\in\Q^d$ is defined as $\Df{I}{}^+_{\vpp}\leteq\Setopen\vqq\in\Q^d\::\: \vpp\ll\vqq \Setclose$.
The \df{chronological interval} between $\vpp\in\Q^d$ and $\vqq\in\Q^d$ is defined as $\Df{\llangle\vpp, \vqq\rrangle}\leteq \setopen \vr\in\Q^d\::\: \vpp\teq\vr \and \vqq\teq\vr \and r_\tau\in (p_\tau,q_\tau)\setclose$.
We also use the notation $\Df{I}{}_{\vpp}\leteq I^-_{\vpp}\cup I^+_{\vpp} \cup \{\vpp\}$.

\begin{lem}
\label{lem-cau}
Let $\vpp,\vqq\in\Q^d$.
Then
\begin{enumerate}
\item \label{item-distjcones} If $\vpp\teq \vqq$, then $\Lambda^-_{\vpp}\cap \Lambda^-_{\vqq}=\Lambda^+_{\vpp}\cap \Lambda^+_{\vqq}=\emptyset$.
\item If $\vpp\ll\vqq$, then $\Lambda^-_{\vqq}\cap I^-_{\vpp}=\emptyset$, and $\Lambda^-_{\vpp}\cup I^-_{\vpp}\subset I^-_{\vqq}$.
\item $\vpp\ll\vqq$ iff $I^+_{\vpp}\cap I^-_{\vqq}\neq\emptyset$.\qed
\end{enumerate}
\end{lem}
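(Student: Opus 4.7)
The plan is to reduce each of the three items to the reverse/forward triangle inequality for the Euclidean norm on the spatial part $\Q^{d-1}$, combined with the sign constraints imposed by the definitions of $\Lambda^\pm$ and $I^\pm$. Recall that $\vpp\teq\vqq$ means $|\vpp_\sigma-\vqq_\sigma|<|p_\tau-q_\tau|$, $\vpp\pheq\vqq$ means equality, and $\vpp\ll\vqq$ means $\vpp\teq\vqq$ together with $p_\tau<q_\tau$. All three items are purely linear statements about Minkowski geometry over the ordered field $\Q$; no axiom of \ax{AccRel_\d} or induction is needed.

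For Item (1), I would assume $\vpp\teq\vqq$ and, without loss of generality, $p_\tau\ge q_\tau$, and argue by contradiction. If $\vr\in \Lambda^-_{\vpp}\cap \Lambda^-_{\vqq}$, then from $r_\tau\le p_\tau$ and $r_\tau\le q_\tau$ the lightlike conditions unfold to $|\vpp_\sigma-\vr_\sigma|=p_\tau-r_\tau$ and $|\vqq_\sigma-\vr_\sigma|=q_\tau-r_\tau$. The reverse triangle inequality then gives
\begin{equation*}
|\vpp_\sigma-\vqq_\sigma|\;\ge\;\bigl||\vpp_\sigma-\vr_\sigma|-|\vqq_\sigma-\vr_\sigma|\bigr|\;=\;p_\tau-q_\tau,
\end{equation*}
contradicting $\vpp\teq\vqq$. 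The case $\Lambda^+_{\vpp}\cap \Lambda^+_{\vqq}=\emptyset$ is symmetric (flip all time inequalities).

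For Item (2), assume $\vpp\ll\vqq$. For the first assertion, suppose $\vr\in \Lambda^-_{\vqq}\cap I^-_{\vpp}$; then $q_\tau-r_\tau=|\vqq_\sigma-\vr_\sigma|$, $p_\tau-r_\tau>|\vpp_\sigma-\vr_\sigma|$, and $q_\tau-p_\tau>|\vpp_\sigma-\vqq_\sigma|$. Adding the latter two and applying the triangle inequality $|\vqq_\sigma-\vr_\sigma|\le|\vpp_\sigma-\vr_\sigma|+|\vpp_\sigma-\vqq_\sigma|$ yields $q_\tau-r_\tau>q_\tau-r_\tau$, a contradiction. For the inclusion $\Lambda^-_{\vpp}\cup I^-_{\vpp}\subset I^-_{\vqq}$, any $\vr$ in the union has $r_\tau\le p_\tau<q_\tau$ (strict via $\vpp\ll\vqq$), and the triangle inequality
\begin{equation*}
|\vr_\sigma-\vqq_\sigma|\;\le\;|\vr_\sigma-\vpp_\sigma|+|\vpp_\sigma-\vqq_\sigma|\;<\;(p_\tau-r_\tau)+(q_\tau-p_\tau)\;=\;q_\tau-r_\tau
\end{equation*}
follows by combining the (non-strict) lightlike bound or (strict) timelike bound on the first summand with the strict timelike bound on the second.

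For Item (3), the forward direction is witnessed by the midpoint $\vr\leteq(\vpp+\vqq)/2$: since $\vr-\vpp=(\vqq-\vpp)/2$ is a positive scalar multiple of a timelike future-pointing vector, $\vr\in I^+_{\vpp}$, and analogously $\vr\in I^-_{\vqq}$. For the converse, take any $\vr\in I^+_{\vpp}\cap I^-_{\vqq}$; then $p_\tau<r_\tau<q_\tau$ gives the correct sign, and the triangle inequality $|\vpp_\sigma-\vqq_\sigma|\le|\vpp_\sigma-\vr_\sigma|+|\vr_\sigma-\vqq_\sigma|<(r_\tau-p_\tau)+(q_\tau-r_\tau)=q_\tau-p_\tau$ yields $\vpp\teq\vqq$, so $\vpp\ll\vqq$. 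There is no serious obstacle: each item is essentially one application of the triangle inequality on $\Q^{d-1}$ after unpacking definitions; the only place requiring modest care is the WLOG orientation in Item (1), where the definitions of $\Lambda^-$ and $\Lambda^+$ force one to keep track of which of $p_\tau,q_\tau$ is the larger.
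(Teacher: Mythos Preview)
Your argument is correct. Note that the paper does not actually supply a proof of this lemma: the \qedsymbol{} placed at the end of the statement signals that the authors regard it as routine and omit the verification. Your triangle-inequality computations are exactly the sort of elementary check the authors are leaving to the reader, so there is nothing to compare against; your write-up simply fills in what the paper suppresses.
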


\begin{lem}
\label{lem-chord}
Assume \ax{IND}.
Let $\gamma$ be a definable timelike curve, and let $x,y\in \dom\gamma$ such that $x\neq y$.
Then
\begin{enumerate}
\item\label{item-chord} All the chords of $\gamma$ are timelike, that is, $\gamma(x)\teq\gamma(y)$.
\item\label{item-pcone} If $\gamma(x)\in I^-_{\vpp}$ and $\gamma(y)\not\in I^-_{\vpp}$, there is a $z\in[x,y]$ such that $\gamma(z)\in\Lambda^-_{\vpp}$.
\item\label{item-fcone} If $\gamma(x)\in I^+_{\vpp}$ and $\gamma(y)\not\in I^+_{\vpp}$, there is a $z\in[x,y]$ such that $\gamma(z)\in\Lambda^+_{\vpp}$.
\item\label{item-cord} If $\gamma_\tau$ is increasing (decreasing), $\gamma(x)\ll\gamma(y)$ iff $x<y$ ($y<x$).
\item\label{item-cinv} $z\in(x,y)$ iff $\gamma(z)\in \llangle \gamma(x),\gamma(y)\rrangle$.
\end{enumerate}
\end{lem}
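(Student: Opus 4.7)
The plan is to reduce items (2)--(5) to item (1) and the monotonicity of $\gamma_\tau$, with item (1) itself established by a supremum argument. Since $\gamma$ is timelike, $\gamma'_\tau$ is nowhere zero on the connected set $\dom\gamma$; applying \ax{IND}-Darboux's theorem, $\gamma'_\tau$ has constant sign, so $\gamma_\tau$ is strictly monotonic.

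For item (1), assume without loss of generality that $\gamma_\tau$ is strictly increasing and $x<y$. Consider the definable, differentiable function
\begin{equation*}
f(t)\leteq\bigl(\gamma_\tau(t)-\gamma_\tau(x)\bigr)^2-|\gamma_\sigma(t)-\gamma_\sigma(x)|^2,
\end{equation*}
so $f(t)>0$ iff $\gamma(t)\teq\gamma(x)$. Writing $\gamma(t)-\gamma(x)=(t-x)\bigl(\gamma'(x)+\rho(t)\bigr)$ with $\rho(t)\to 0$ and exploiting openness of the timelike cone, $f(t)>0$ on some right-neighborhood of $x$. Suppose toward a contradiction that $f(y)\le 0$, and let
\begin{equation*}
s^*\leteq\sup\Setopen t\in[x,y] \::\: f(s)>0 \text{ for all } s\in(x,t]\Setclose,
\end{equation*}
which exists by \ax{IND}. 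One checks $s^*>x$ and, by continuity, $f(s^*)=0$; so $\gamma(s^*)-\gamma(x)$ is a nonzero future-lightlike vector. A direct computation gives
\begin{equation*}
f'(s^*)=2\bigl[(\gamma_\tau(s^*)-\gamma_\tau(x))\gamma'_\tau(s^*)-(\gamma_\sigma(s^*)-\gamma_\sigma(x))\cdot\gamma'_\sigma(s^*)\bigr],
\end{equation*}
which is the Minkowski pairing of a future-lightlike vector with a future-timelike vector, and is therefore strictly positive. But $f>0$ on $(x,s^*)$ together with $f(s^*)=0$ forces $f'(s^*)\le 0$, the sought contradiction.

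Items (2) and (3) will follow by applying \ax{IND}-IVT to the definable continuous function $\Phi(t)\leteq(p_\tau-\gamma_\tau(t))-|p_\sigma-\gamma_\sigma(t)|$ and its future analogue $\Psi(t)\leteq(\gamma_\tau(t)-p_\tau)-|\gamma_\sigma(t)-p_\sigma|$. A short case analysis shows $\Phi(t)>0$ iff $\gamma(t)\in I^-_{\vpp}$, $\Phi(t)=0$ iff $\gamma(t)\in\Lambda^-_{\vpp}$, and $\Phi(t)<0$ otherwise; so the hypotheses of (2) give $\Phi(x)>0$ and $\Phi(y)\le 0$, and \ax{IND}-IVT delivers a zero $z\in[x,y]$ of $\Phi$, which lies in $\Lambda^-_{\vpp}$. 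Items (4) and (5) are then routine: item (1) makes all non-trivial chords timelike, so $\ll$ and the chronological interval $\llangle\cdot,\cdot\rrangle$ both reduce to conditions on $\tau$-components, and strict monotonicity of $\gamma_\tau$ translates these into conditions on the parameters.

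The main obstacle is item (1), and within it the two first-derivative facts that drive the sup argument: that $f>0$ on $(x,x+\varepsilon]$ for some $\varepsilon>0$, and that $f'(s^*)>0$. Both follow from the product-form expansion $\gamma(t)-\gamma(s)=(t-s)\bigl(\gamma'(s)+\rho_s(t)\bigr)$ together with openness of the timelike condition, and require only the once-differentiability of $\gamma$ already built into the definition of a timelike curve.
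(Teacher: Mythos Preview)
Your argument is correct, and for items (3)--(5) it matches the paper's reductions essentially verbatim. The genuine differences are in items (1) and (2).

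For item (1) the paper simply cites an external source (Proposition~A.0.15 in \cite{mythes}) and gives no argument, so your self-contained supremum argument is an actual addition. The mechanism you use---locate the first parameter $s^*$ at which the chord from $\gamma(x)$ degenerates to lightlike, then obtain a sign contradiction from $f'(s^*)$ being the Minkowski pairing of a future-lightlike and a future-timelike vector---is a clean and standard way to prove this, and it stays within the single-differentiability hypothesis.

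For item (2) the approaches diverge. The paper takes the supremum of the set of parameters in $[x,y]$ at which $\gamma$ still lies in $I^-_{\vpp}$ and argues by continuity that the sup lands on $\Lambda^-_{\vpp}$. You instead package the past-cone membership into a single scalar function $\Phi(t)=(p_\tau-\gamma_\tau(t))-|\vpp_\sigma-\gamma_\sigma(t)|$ and invoke \ax{IND}-Bolzano (what you call \ax{IND}-IVT). Your route is shorter and avoids the bookkeeping of which inequalities survive passage to the supremum; it does, however, silently use continuity of the Euclidean length $|\cdot|$, which holds in any Euclidean ordered field but is worth one line of justification. Both proofs rely on the same underlying completeness content of \ax{IND}, so neither is strictly more elementary.
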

\begin{proof}
For the proof of Item \eqref{item-chord}, see Proposition A.0.15 in \cite{mythes}.

To prove Item \eqref{item-pcone}, let 
\begin{equation}
H\leteq\Setopen t\in[x,y]\::\: \mu\big(\gamma(t),\vpp\big)<0 \and \gamma(t)_\tau<p_\tau\Setclose.
\end{equation}
It is clear that $H\subseteq \dom \gamma$ is definable, bounded and nonempty.
Let $z\leteq\sup H$ which exists by \ax{IND}.
Thus by continuity of $t\mapsto \mu(\gamma(t),\vpp)$ and $\gamma_\tau$, we have that $\gamma(z)\not\in I^-_{\vpp}$ since $z$ is an upper bound of $H$; furthermore $\mu(\gamma(t),\vpp)\le0$ and $\gamma(t)_\tau\le p_\tau$ since $z$ is the least upper bound of $H$.
But $\gamma(t)_\tau=p_\tau$ and $\mu(\gamma(t),\vpp)<0$ is impossible.
Thus $\gamma(t)_\tau\le p_\tau$ and $\mu(\gamma(t),\vpp)=0$.
Hence $\gamma(\vpp)\in\Lambda^-_{\vpp}$.

Item \eqref{item-fcone} is clear from Item \eqref{item-pcone} since the continuous bijection $\vpp\mapsto -\vpp$ takes $I^+_{\vpp}$ to $I^-_{\vpp}$ and $\Lambda^+_{\vpp}$ to $\Lambda^-_{\vpp}$.

Item \eqref{item-cord} is clear by Item \eqref{item-chord}.

Item \eqref{item-cinv} is an easy consequence of Item \eqref{item-cord} since $\gamma_\tau$ is either increasing or decreasing by Lemmas \ref{lem-inj} and \ref{lem-nice}.
\end{proof}

We use the following notations: 
\begin{equation}
\Df{B_\varepsilon(\vpp)}\leteq\Setopen \vqq\in\Q^n\::\: |\vpp-\vqq|<\varepsilon\Setclose, 
\end{equation}
\begin{equation}
\Df{line(\vpp,\vqq)}\leteq\Setopen \vpp+\lambda(\vpp-\vqq) \::\:\lambda\in\Q\Setclose,
\end{equation}
\begin{equation}
\Df{Cone_{\varepsilon}(\vpp;\vqq)}\leteq\bigcup_{\vr\,\in B_\varepsilon(\vqq)}line(\vpp,\vr\,) \text{ and}
\end{equation}
\begin{equation}
\Df{\Lambda^-[H]}\leteq\bigcup_{\vpp\in H}\Lambda^-_{\vpp}.
\end{equation}

Let $\alpha$ and $\beta$ be timelike curves.
We say that $\beta_*$ is the \df{photon reparametrization of $\beta$ according to $\alpha$} if 
\begin{equation}
\beta_*=\setopen \langle t,\vpp\rangle\in\dom\alpha\times\ran\beta \::\:\vpp\in\Lambda^-_{\alpha(t)}\setclose.
\end{equation}

\begin{prop}
\label{prop-ph}
Assume \ax{IND}.
Let $\alpha$ and $\beta$ be definable timelike curves.
Let $\beta_*$ be the photon reparametrization of $\beta$ according to $\alpha$.
\begin{enumerate} 
\item \label{item-phcont} Then $\beta_*$ is a definable, continuous and injective curve.
\item \label{item-phder} If $\ran\alpha\cap\ran\beta=\emptyset$, and $\ran \alpha\cup \ran \beta$ is in a vertical plane, $\beta_*$ is a timelike curve, and $\beta_*(t_0)+\beta'_*(t_0)\in\Lambda^-_{\alpha(t_0)+\alpha'(t_0)}$.
\end{enumerate}
\end{prop}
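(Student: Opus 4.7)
For Item (1), definability is immediate. I plan to show $\beta_*$ is a function via light-cone geometry: if two distinct points $\vpp_1,\vpp_2\in\ran\beta$ both lay in $\Lambda^-_{\alpha(t)}$, then Lemma \ref{lem-chord}\eqref{item-chord} applied to the timelike curve $\beta$ would force $\vpp_1\teq\vpp_2$, while $\alpha(t)\in\Lambda^+_{\vpp_1}\cap\Lambda^+_{\vpp_2}$ would contradict Lemma \ref{lem-cau}\eqref{item-distjcones}. The mirror argument run on $\alpha$ --- if $\beta_*(t_1)=\beta_*(t_2)=\vpp$ for $t_1\neq t_2$ then $\vpp\in\Lambda^-_{\alpha(t_1)}\cap\Lambda^-_{\alpha(t_2)}$, again contradicting Lemma \ref{lem-cau}\eqref{item-distjcones} --- yields injectivity of $\beta_*$.

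The key step is the connectedness of $\dom\beta_*$. Given $t_1<t<t_2$ with $t_1,t_2\in\dom\beta_*$ and $\beta_*(t_i)=\beta(s_i)$, orient $\alpha$ (monotone in $\tau$ by the preliminary lemmas of Section \ref{lem-sec}) so that $\alpha(t_1)\ll\alpha(t)\ll\alpha(t_2)$ via Lemma \ref{lem-chord}\eqref{item-cord}. Then Lemma \ref{lem-cau}(2) places $\beta(s_1)\in\Lambda^-_{\alpha(t_1)}\subset I^-_{\alpha(t)}$; meanwhile $\beta(s_2)\notin I^-_{\alpha(t)}$, because otherwise $\beta(s_2)\ll\alpha(t)\ll\alpha(t_2)$ would force $\beta(s_2)\in I^-_{\alpha(t_2)}$ in conflict with $\beta(s_2)\in\Lambda^-_{\alpha(t_2)}$. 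Lemma \ref{lem-chord}\eqref{item-pcone}, applied to the timelike curve $\beta$ and the point $\alpha(t)$, then yields $z\in[s_1,s_2]$ with $\beta(z)\in\Lambda^-_{\alpha(t)}$, so $t\in\dom\beta_*$. Continuity of $\beta_*$ is a routine $\varepsilon$-$\delta$ argument based on continuity of $\alpha,\beta$ and the uniqueness already established.

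For Item (2), the vertical-plane hypothesis reduces the picture to two-dimensional Minkowski geometry, in which null vectors lie along one of two generators. The assumption $\ran\alpha\cap\ran\beta=\emptyset$ together with the connectedness of $\dom\beta_*$ forces the null generator $\ell$ carrying $\alpha(t)-\beta_*(t)$ to be locally, hence globally, constant in $t$. Writing $s(t)$ for the unique $\beta$-parameter with $\beta(s(t))=\beta_*(t)$, implicit differentiation of the null equation, justified by the \ax{IND}-versions of the chain and implicit function theorems developed in Section \ref{lem-sec}, produces a differentiable $s$ with $s'(t)\neq 0$; the nonvanishing uses that $\alpha$ is timelike (so its tangent has nontrivial projection onto the relevant null generator). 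Consequently $\beta_*'(t)=s'(t)\,\beta'(s(t))$ is a nonzero scalar multiple of the timelike vector $\beta'(s(t))$, so $\beta_*$ is a timelike curve.

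Finally, setting $u=\alpha(t_0)-\beta_*(t_0)$ and $v=\alpha'(t_0)-\beta_*'(t_0)$: the null equation puts $u\in\ell$ and its derivative puts $v\in\ell$, hence $u+v\in\ell$ is null; the past-directedness required for $\beta_*(t_0)+\beta_*'(t_0)\in\Lambda^-_{\alpha(t_0)+\alpha'(t_0)}$ is inherited from the past-directedness of $u$ built into the definition of $\beta_*$ combined with the constancy of the null generator $\ell$. The main obstacle I anticipate is the first-order analysis bookkeeping, especially justifying the implicit differentiation and the constancy of $\ell$ in the ordered-field framework; once those ingredients are granted, the remaining content is elementary linear algebra in the Minkowski plane.
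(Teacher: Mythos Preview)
Your Item~(1) plan is close to the paper's, but the continuity step is not a routine $\varepsilon$--$\delta$ argument in this ordered-field setting. The paper introduces the reparametrization map $f\leteq\beta_*\circ\beta^{-1}$, proves (via a chronological-order chain much like your connectedness argument) that $f$ preserves betweenness and hence is monotone, and then proves separately that $\ran f$ is connected by a second light-cone argument dual to the first. Continuity then comes for free from Lemma~\ref{lem-moncont} (monotone with connected range implies continuous), and $\beta_*=f\circ\beta$ inherits it. A direct $\varepsilon$--$\delta$ proof would need some compactness-flavored control that \ax{IND} does not obviously supply; the monotonicity route sidesteps this entirely, and you should adopt it.

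Your Item~(2) approach is genuinely different from the paper's. The paper does not differentiate the null constraint; instead it \emph{constructs} the target derivative geometrically as the unique point $\vpp\in\Lambda^-_{\alpha(t_0)+\alpha'(t_0)}$ lying on the tangent line of $\beta$ at $\bar t_0$, and then runs a direct $\varepsilon$--$\delta$ argument showing the difference quotient of $\beta_*$ lands in $\Lambda^-[B_{\varepsilon_1}(\vqq)]\cap Cone_{\varepsilon_2}(\beta_*(t_0);\vr\,)\subset B_\varepsilon(\vpp)$. Your implicit-differentiation route is conceptually cleaner, but be aware that Section~\ref{lem-sec} contains no implicit function theorem. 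What it does contain is enough: once the null generator $\ell$ is fixed, the constraint in $Plane(t,x)$ reads $\alpha_1(t)\pm\alpha_2(t)=\beta_1(s)\pm\beta_2(s)$, and both sides are \emph{nice maps} in the sense of that section (their derivatives are null-coordinate components of timelike vectors, hence nonzero), so Lemma~\ref{lem-nice} lets you invert and compose to obtain a differentiable $s(t)$ with $s'(t)\neq 0$. That rescues your plan. One further caution: your final paragraph shows only that $u+v$ lies on the null line $\ell$, not that it lies on the \emph{past} half of it; the constancy of $\ell$ alone does not force the sign of $u+v$. The paper gets past-directedness for free because $\vpp$ is chosen on $\Lambda^-_{\vqq}$ from the outset; in your approach you will need an extra inequality, e.g.\ by tracking the sign of $s'$ once orientations are fixed.
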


\begin{figure}[h!btp]
\small
\begin{center}
\psfrag{p}[l][l]{$\vpp$}
\psfrag{q}[bl][bl]{$\vqq$}
\psfrag{r}[l][l]{$\vr$}
\psfrag{ph}[l][l]{$ph$}
\psfrag{e}[tr][tr]{$\varepsilon$}
\psfrag{e1}[tr][tr]{$\varepsilon_1$}
\psfrag{e2}[bl][bl]{$\varepsilon_2$}
\psfrag{a}[l][l]{$\alpha$}
\psfrag{a'0}[bl][bl]{$\alpha'(t_0)$}
\psfrag{b}[tr][tr]{$\beta,\beta_*$}
\psfrag{b'0}[br][br]{$\beta'(\bar{t}_0)$}
\psfrag{a0}[l][l]{$\alpha(t_0)$}
\psfrag{b0}[l][l]{$\beta(\bar{t}_0)=\beta_*(t_0)$}
\psfrag{adif}[bl][bl]{$\frac{\alpha(t)-\alpha(t_0)}{t-t_0}$}
\psfrag{b*'0}[br][br]{$\beta_*'(t_0)$}
\psfrag{b*t}[tl][tl]{$\beta_*(t)$}
\psfrag{at}[l][l]{$\alpha(t)$}
\includegraphics[keepaspectratio, width=\textwidth]{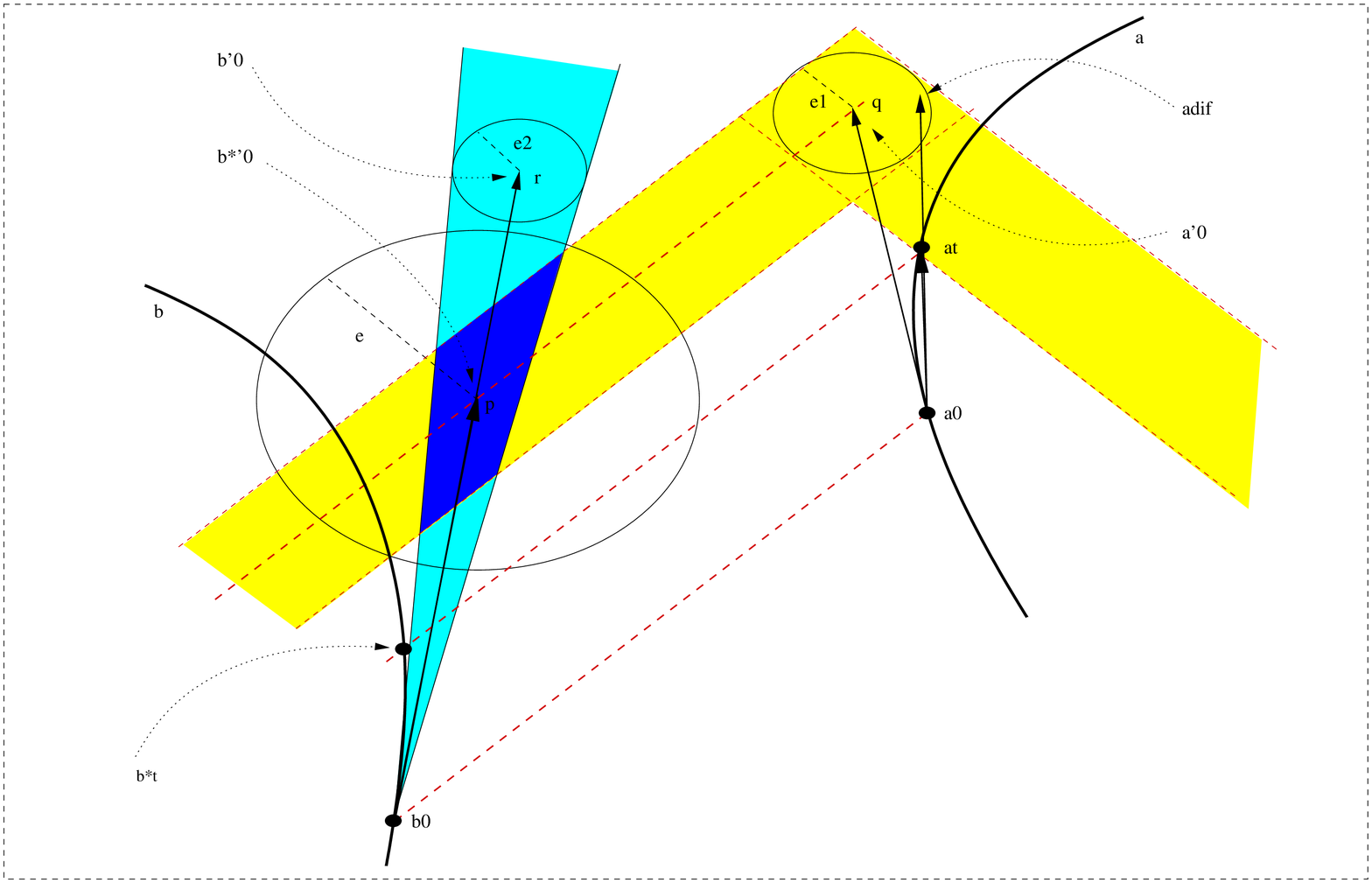}
\caption{\label{figphpar} Illustration for the proof of Proposition \ref{prop-ph}.}
\end{center}
\end{figure}

\begin{proof}
It is clear that $\beta_*$ is definable.

To show that $\beta_*$ is a function, we need to prove that $\Lambda^-_{\alpha(t)}\cap\ran \beta$ has one element at the most for all $t\in\dom \alpha$.
It is clear by Lemma \ref{lem-chord} since if it had two distinct elements, say $\vpp$ and $\vqq$, then $\setopen \vpp,\vqq\setclose$ would not be a timelike chord of $\beta$, but a lightlike one.

For all $t\in\dom\beta_*$, let $\bar{t}\in\dom\beta$ such that $\beta(\bar t\,)=\beta_*(t)$, and let $f:t \mapsto \bar t$ be the reparametrization map, that is, $f\leteq\beta_*\circ\beta^{-1}$.
First we show that 
\begin{multline*} t\in (t_1,t_2) \iff
\alpha(t)\in\llangle \alpha(t_1),\alpha(t_2)\rrangle \stackrel{(*)}{\iff} \\
\beta_*(t)\in\llangle \beta_*(t_1),\beta_*(t_2)\rrangle
\iff \bar{t}\in (\bar{t}_1,\bar{t}_2)
\end{multline*}
if $t,t_1,t_2\in\dom\beta_*$.
The first and the last equivalence are clear by \eqref{item-cinv} in Lemma \ref{lem-chord} since $\alpha$, 
$\beta$ are timelike curves and $\beta(\bar{t}\,)=\beta_*(t)$ for all $t\in\dom\beta_*$.
To prove $(*)$, we can assume that $\alpha(t_1)\ll\alpha(t)\ll\alpha(t_2)$.
Thus $\beta_*(t)\ll\alpha(t_2)$ since $\Lambda^-_{\alpha(t)}\subset I^-_{\alpha(t_2)}$ (2) of by Lemma \ref{lem-cau}.
Therefore, $\beta_*(t)\ll\beta_*(t_2)$ since $\beta_*(t)\in I_{\beta_*(t_2)}$ by \eqref{item-chord} in Lemma \ref{lem-chord}, but $I^+_{\beta_*(t_2)}\cap I^-_{\alpha(t_2)}=\emptyset$ (3) by Lemma \ref{lem-cau}.
A similar argument can show that $\beta_*(t_1)\ll\beta_*(t)$, so $(*)$ is proved.
Now we have that $f$ preserves betweenness, so it is monotonous.

To show that $\dom \beta_*$ is connected, let $x,y\in\dom\beta_*$, and let $z\in(x,y)$.
Then $z\in\dom\alpha$ since $x,y\in\dom \alpha$ and $\dom\alpha$ is connected.
Since $\alpha$ is a timelike curve, $\alpha(z)\in\llangle \alpha(x),\alpha(y)\rrangle$.
Without losing generality, we can assume that $\alpha(x)\ll \alpha(z)\ll\alpha(y)$.
Then $\beta_*(x)\in I^-_{\alpha(z)}$ since $\beta_*(x)\in\Lambda^-_{\alpha(x)}\subset I^-_{\alpha(z)}$;
and $\beta_*(y)\not\in I^-_{\alpha(z)}$ since $\beta_*(y)\in\Lambda^-_{\alpha(y)}$ and $\Lambda^-_{\alpha(y)}\cap I^-_{\alpha(z)}=\emptyset$, see Lemma \ref{lem-cau}.
Then by \eqref{item-pcone} in Lemma \ref{lem-chord}, there is a $\hat{z}\in\dom\beta$ such that $\beta(\hat{z})\in\Lambda^-_{\alpha(z)}$ since $\beta(\bar{x})\in I^-_{\alpha(z)}$ and $\beta({\bar{y}})\not\in I^-_{\alpha(z)}$.
Thus $\langle z,\beta(\hat{z})\rangle\in\beta_*$.
Consequently, $z\in\dom\beta_*$.
Hence $\dom\beta_*$ is connected.

Now using a similar argument, we show that $\ran f\subseteq\dom\beta$ is also connected.
To do so, let $\bar{x},\bar{y}\in\ran f$ and $\hat{z}\in(\bar{x},\bar{y})$.
Then $\hat{z}\in\dom \beta$.
We can assume that $\beta(\bar{x})\ll\beta(\hat{z})\ll\beta((\bar{y})$.
Then $\alpha(x)\in I^+_{\beta(\hat{z})}$ and $\alpha(y)\not\in I^+_{\beta(\hat{z})}$.
Thus there is a $z\in\dom\alpha$ such that $\alpha(z)\in\Lambda^+_{\beta(\hat{z})}$.
Consequently, $\beta(\hat{z})\in\Lambda^-_{\alpha(z)}$, so $\langle z,\beta(\hat{z})\rangle\in\beta_*$.
Therefore, $\hat{z}\in\ran f$, and hence $\ran f$ is connected.

Since $\ran f$ is connected and $f$ is monotonous, $f$ must be continuous by Lemma \ref{lem-moncont}.
Hence $\beta_*=f\circ\beta$ is also continuous and $\beta_*$ injective since both $\beta$ and $f$ are such.
So Item \eqref{item-phcont} is proved.

To prove Item \eqref{item-phder}, let $\vqq=\alpha'(t_0)+\alpha(t_0)$, $\vr=\beta'(\bar{t}_0)+\beta(\bar{t}_0)$, and let $\vpp$ be the unique element of $\Lambda^-_{\vqq}\cap line(\beta(\bar{t}_0),\vr\,)$, see Figure \ref{figphpar}.
We will show that $\beta'_*(t_0)=\vpp-\beta_*(t_0)$.
To do so, let $\varepsilon\in\Q^+$ be fixed.
We have to show that there is a $\delta\in\Q^+$ such that $\frac{\beta_*(t)-\beta_*(t_0)}{t-t_0}\in B_\varepsilon(\vpp)$ if $t\in\dom\beta_*\cap B_\delta(t_0)$.
It is clear that we can choose $\varepsilon_1$ and $\varepsilon_2$ such that 
\begin{equation}\label{eq-inball}
\Lambda^-[B_{\varepsilon_1}(\vqq)]\cap Cone_{\varepsilon_2}(\beta_*(t_0);\vr\,)\subset B_\varepsilon(\vpp).
\end{equation}
Since $\alpha$ is differentiable at $t_0$, there is a $\delta_1\in \Q^+$ such that
\begin{equation}\label{eq-alpha}
\frac{\alpha(t)-\alpha(t_0)}{t-t_0}+\alpha(t_0)\in B_{\varepsilon_1}(\vqq)
\end{equation}
if $t\in\dom\alpha\cap B_{\delta_1}(t_0)$.
Since $\ran\beta\cap\ran\alpha=\emptyset$, and $\ran\beta\cup\ran\alpha$ is in a vertical plane, $line\big(\beta_*(t),\alpha(t)\big)$ and $line\big(\beta_*(t_0),\alpha(t_0)\big)$ are parallel.
Hence
\begin{equation}
\frac{\beta_*(t)-\beta_*(t_0)}{t-t_0}+\beta_*(t_0)\in \Lambda^-_{\frac{\alpha(t)-\alpha(t_0)}{t-t_0}+\alpha(t_0)}.
\end{equation}
Thus by \eqref{eq-alpha}, we have that 
\begin{equation}\label{eq-inpast}
\frac{\beta_*(t)-\beta_*(t_0)}{t-t_0}+\beta_*(t_0)\in \Lambda^-[B_{\varepsilon_1}(\vqq)]
\end{equation}
if $t\in\dom\beta_*\cap B_{\delta_1}(t_0)$.
Since $\beta$ is differentiable at $\bar{t}_0$, there is a $\bar{\delta}_2\in \Q^+$ such that 
\begin{equation} \label{eq-beta}
\frac{\beta(\bar{t}\,)-\beta(\bar{t}_0)}{\bar{t}-\bar{t}_0}+\beta(\bar{t}_0)\in B_{\varepsilon_2}(\vr\,)
\end{equation}
if $\bar{t}\in\dom\beta\cap B_{\bar{\delta}_2}(\bar{t}_0)$.
Since $f:t\mapsto\bar{t}$ is continuous, there is a $\delta_2\in\Q^+$ such that \eqref{eq-beta} holds if $t\in\dom\beta_*\cap B_{\delta_2}(t_0)$.
Since
\begin{equation}
\frac{\beta_*(t)-\beta_*(t_0)}{t-t_0}=\frac{\beta(\bar{t}\,)-\beta(\bar{t}_0)}{\bar{t}-\bar{t}_0}\cdot\frac{\bar{t}-\bar{t}_0}{t-t_0},
\end{equation}
we have that 
\begin{equation}\label{eq-incone}
\frac{\beta_*(t)-\beta_*(t_0)}{t-t_0}+\beta_*(t_0)\in Cone_{\varepsilon_2}(\beta_*(t_0);\vr\,)
\end{equation}
if $t\in\dom\beta_*\cap B_{\delta_2}(t_0)$.
Let $\delta=\min(\delta_1,\delta_2)$.
Therefore, by equations \eqref{eq-inpast} and \eqref{eq-incone}, we have that 
\begin{equation}
\frac{\beta_*(t)-\beta_*(t_0)}{t-t_0}+\beta_*(t_0)\in \Lambda^-[B_{\varepsilon_1}(\vqq)]\cap Cone_{\varepsilon_2}(\beta_*(t_0);\vr\,)
\end{equation}
if $t\in\dom\beta_*\cap B_{\delta}(t_0)$.
But the latter is a subset of $B_\varepsilon(\vpp)$ by equation \eqref{eq-inball}.
Consequently,
\begin{equation}
\frac{\beta_*(t)-\beta_*(t_0)}{t-t_0}+\beta_*(t_0)\in B_\varepsilon(\vpp)
\end{equation}
if $t\in\dom\beta_*\cap B_{\delta}(t_0)$.
Hence $\beta_*$ is differentiable at $t_0$ and $\beta'_*(t_0)=\vp-\beta_*(t_0)$, as desired.
\end{proof}

Let $\vpp,\vqq\in Plane(t,x)$.
Then the \df{photon sum} of $\vpp$ and $\vqq$, in symbols $\vpp\Df{\phsum}\vqq$, is the intersection of the two photon lines $\setopen \vpp+\langle A,A,0,\ldots 0\rangle:A\in\Q \setclose$ and $\setopen \vqq+\langle B,-B,0,\ldots 0\rangle:B \in\Q \setclose$.

\begin{figure}[h!btp]
\small
\begin{center}
\psfrag{ph1}[l][l]{$ph_1$}
\psfrag{ph2}[r][r]{$ph_2$}
\psfrag{p2}[r][r]{$\left\langle \frac{p_\tau-p_2}{2}, -\frac{p_\tau-p_2}{2}, \vo\,\right\rangle$}
\psfrag{p}[br][br]{$\langle p_\tau, p_2,\vo\,\rangle=\vpp$}
\psfrag{q1}[l][l]{$\left\langle\frac{q_\tau+q_2}{2}, \frac{q_\tau+q_2}{2}, \vo\,\right\rangle$}
\psfrag{q}[bl][bl]{$\vqq=\langle q_\tau, q_2,\vo\,\rangle$}
\psfrag{qlp}[l][l]{$\vqq\phsum\vpp$}
\psfrag{plq}[l][l]{$\vpp\phsum\vqq$}
\psfrag{t}[l][l]{$t$}
\psfrag{x}[l][l]{$x$}
\psfrag{pt}[bl][bl]{$p_\tau$}
\psfrag{ps}[t][t]{$p_\sigma$}
\psfrag{qt}[tr][tr]{$q_\tau$}
\psfrag{qs}[t][t]{$q_\sigma$}
\includegraphics[keepaspectratio, width=\textwidth]{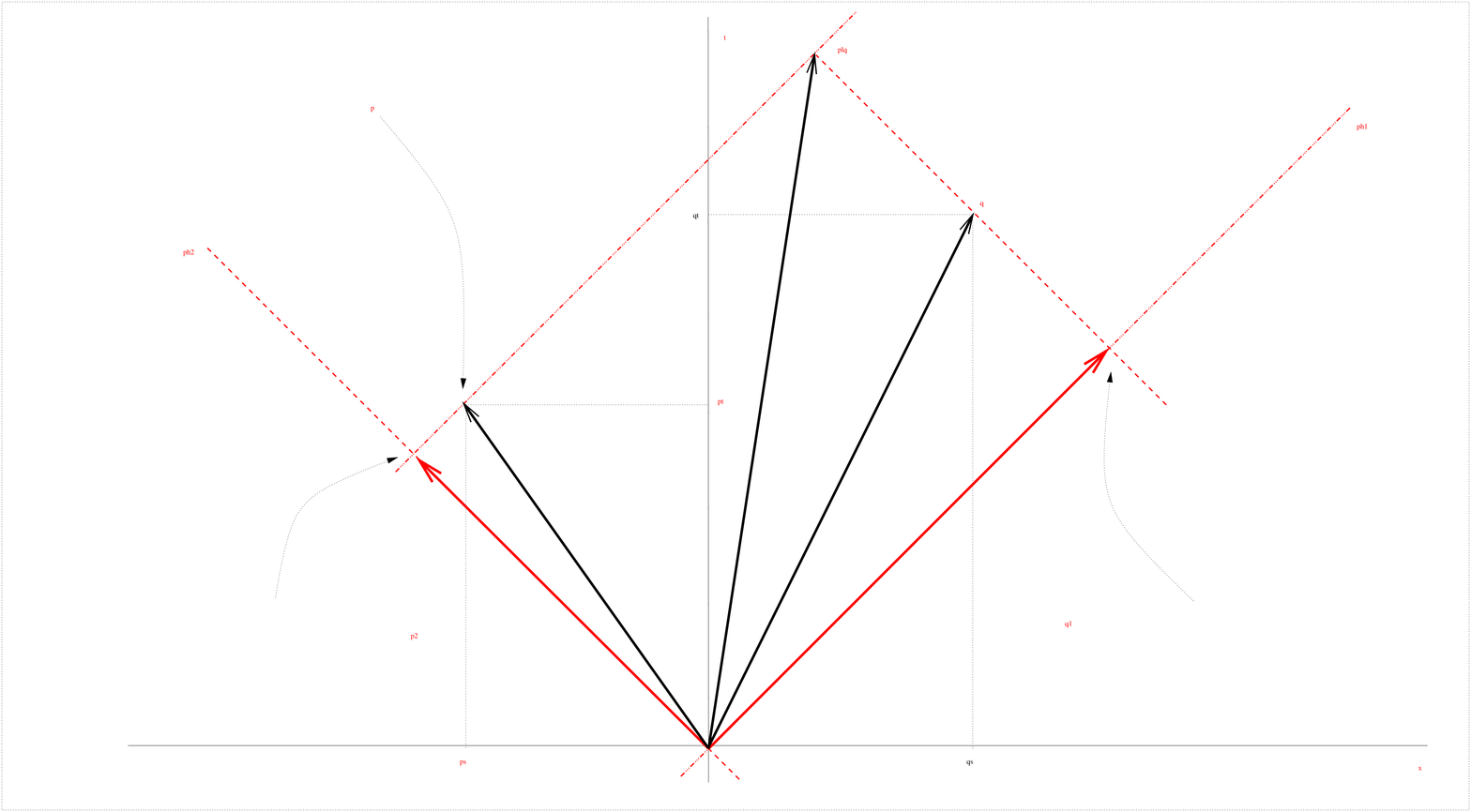}
\caption{\label{figphcrd} Illustration of the photon sum $\vpp\phsum\vqq$, and for the proof of Lemma \ref{lem-phsum}.}
\end{center}
\end{figure}

\begin{lem}
\label{lem-phsum} 
Let $\vpp,\vqq\in Plane(t,x)$, and let $a=\frac{q_\tau+q_2}{2}$ and $b=\frac{p_\tau-p_2}{2}$.
Then $\vpp\phsum\vqq=\langle a+b,a-b,0,\ldots,0\rangle$.
\end{lem}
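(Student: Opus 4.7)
The plan is to compute the intersection of the two photon lines explicitly and show it matches the claimed formula. This is essentially a linear-algebra exercise in $Plane(t,x)$: the photon sum is defined as the intersection of
\begin{equation}
L_1 = \{\vpp + \langle A,A,0,\ldots,0\rangle : A\in \Q\} \quad\text{and}\quad L_2 = \{\vqq + \langle B,-B,0,\ldots,0\rangle : B\in \Q\},
\end{equation}
so the task reduces to solving a $2\times 2$ linear system in $A,B$.

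First I would set the two parametrizations equal componentwise. Since $\vpp,\vqq \in Plane(t,x)$ have zero coordinates from the third component onward, the only nontrivial equations are the time and $x$-coordinate equalities: $p_\tau + A = q_\tau + B$ and $p_2 + A = q_2 - B$. Adding and subtracting these yields $2A = (q_\tau - p_\tau) + (q_2 - p_2)$ and $2B = (q_2 - p_2) - (q_\tau - p_\tau)$, so both $A$ and $B$ exist and are uniquely determined (using that $\Q$ is an ordered field and hence $2 \neq 0$). This establishes in passing that the intersection is a single point, not empty or a whole line.

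Next I would substitute the value of $A$ back into the parametrization of $L_1$ to read off the intersection point. The time component is $p_\tau + A = \tfrac{p_\tau + q_\tau + q_2 - p_2}{2} = \tfrac{q_\tau + q_2}{2} + \tfrac{p_\tau - p_2}{2} = a + b$, and the $x$-component is $p_2 + A = \tfrac{p_2 + q_\tau + q_2 - p_\tau}{2} = \tfrac{q_\tau + q_2}{2} - \tfrac{p_\tau - p_2}{2} = a - b$. The remaining coordinates are all $0$ since both $\vpp,\vqq$ and both displacement directions lie in $Plane(t,x)$. Thus $\vpp \phsum \vqq = \langle a+b, a-b, 0, \ldots, 0 \rangle$, as required.

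There is no real obstacle here; the only thing to be careful about is to verify that the intersection is in fact a single point (so that $\phsum$ is well-defined on $Plane(t,x)$), which is automatic from the nondegenerate $2\times 2$ system above. The proof is a one-paragraph direct computation.
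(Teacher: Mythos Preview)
Your proof is correct and is precisely the direct computation the paper has in mind: the paper's own proof consists of the single sentence ``The proof is straightforward, see Figure~\ref{figphcrd},'' and your argument spells out that straightforward calculation. There is nothing to add.
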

\begin{proof}
The proof is straightforward, see Figure \ref{figphcrd}.
\end{proof}

\begin{lem}
\label{lem-tlinvcont}
Assume \ax{IND}.
Let $\beta$ be a definable timelike curve.
Then $\beta^{-1}:\ran\beta\rightarrow \dom\beta$ is definable, injective and continuous.
\end{lem}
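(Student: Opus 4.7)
The plan is to establish the three claims in sequence: definability, invertibility (needed for $\beta^{-1}$ to be a function), and continuity. Definability of $\beta^{-1}$ is immediate, because the inverse of a first-order definable binary relation is first-order definable from it; and injectivity of $\beta^{-1}$ is automatic from the fact that $\beta$ is a function.

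For $\beta^{-1}$ to be a function, I will show that $\beta$ is injective. The sharper statement I intend to prove is that the time-component $\beta_\tau:\dom\beta\to\Q$ is strictly monotonic, since I will need this for continuity as well. Because $\beta$ is timelike, $\beta'(t)$ is a timelike vector, so $|\beta'_\sigma(t)|<|\beta'_\tau(t)|$; in particular $\beta'_\tau(t)\neq 0$ for every $t\in\dom\beta$. Applying \ax{IND}-Darboux's Theorem to the definable function $\beta'_\tau$ on the connected set $\dom\beta$, this derivative has constant sign. The \ax{IND}-Mean-Value Theorem then yields that $\beta_\tau$ is strictly monotonic, whence $\beta$ itself is injective.

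For continuity I use the identity $\beta^{-1}(\vpp)=\beta_\tau^{-1}(p_\tau)$ for every $\vpp\in\ran\beta$: if $\vpp=\beta(t)$ then $p_\tau=\beta_\tau(t)$, and strict monotonicity of $\beta_\tau$ ensures that this $t$ is unique, so the identity holds. The projection $\vpp\mapsto p_\tau$ is $1$-Lipschitz, hence continuous. The map $\beta_\tau^{-1}$ is monotonic, and its range $\dom\beta$ is connected since $\beta$ is a curve; therefore $\beta_\tau^{-1}$ is continuous by Lemma \ref{lem-moncont}. Consequently $\beta^{-1}$, being the composition of two continuous maps, is continuous.

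The main obstacle is the strict monotonicity of $\beta_\tau$: over an arbitrary ordered field one cannot simply invoke the classical calculus facts, and the argument relies essentially on the \ax{IND}-versions of Darboux's Theorem and the Mean-Value Theorem, which is exactly why the assumption \ax{IND} appears in the hypothesis. Once monotonicity is in hand, reducing continuity of $\beta^{-1}$ to continuity of the scalar monotonic inverse $\beta_\tau^{-1}$ via the projection is a routine observation.
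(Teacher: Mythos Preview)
Your proof is correct. The injectivity part is essentially identical to the paper's: the paper simply cites Lemmas~\ref{lem-tlnice} and~\ref{lem-inj} (which establish that $\beta_\tau$ is a nice map and hence monotonic via \ax{IND}-Rolle), while you spell out the same argument with \ax{IND}-Darboux and \ax{IND}-Mean--Value.

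The continuity argument, however, is genuinely different. The paper argues directly from the causal structure: by item~\eqref{item-cinv} of Lemma~\ref{lem-chord}, $t\in(t_0-\varepsilon,t_0+\varepsilon)$ iff $\beta(t)$ lies in the chronological interval $\llangle\beta(t_0-\varepsilon),\beta(t_0+\varepsilon)\rrangle$, and since this interval is open in $\Q^d$ one can fit a $\delta$-ball around $\beta(t_0)$ inside it. Your route instead factors $\beta^{-1}$ as the time-coordinate projection followed by the scalar inverse $\beta_\tau^{-1}$, and obtains continuity of the latter from Lemma~\ref{lem-moncont} (monotonic with connected range). Your approach is more elementary in that it sidesteps Lemma~\ref{lem-chord} and the chronological-interval machinery entirely, reducing everything to one-dimensional order facts; the paper's approach is more geometric and yields an explicit open neighborhood witnessing continuity, at the cost of invoking the timelike-chord lemma.
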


\begin{proof}
It is clear that $\beta^{-1}$ is definable and injective.

Since by Lemmas \ref{lem-tlnice} and \ref{lem-inj}, $\beta$ is injective, $\beta^{-1}$ is a function from $\ran\beta$ to $\dom\beta$.
To prove that it is also continuous, let $t_0\in\dom\beta$.
We have to show that for all $\varepsilon \in \Q^+$, there is a $\delta\in\Q^+$ such that if $t\in \dom\beta$ and $|\beta(t)-\beta(t_0)|<\delta$, then $|t-t_0|<\varepsilon$.
By Lemma \ref{lem-chord}, $t\in(t_0-\varepsilon,t_0+\varepsilon)$ iff $\beta(t)\in\llangle \beta(t_0-\varepsilon),\beta(t_0+\varepsilon)\rrangle$.
Thus since $\llangle \beta(t_0-\varepsilon),\beta(t_0+\varepsilon)\rrangle$ is an open set, there is a good $\delta$.
\end{proof}

\begin{lem}
\label{lem-repar}
Assume \ax{IND}.
Let $\beta$ be a definable timelike curve and $\beta_*$ a definable continuous curve such that $\ran \beta_*\subseteq \ran\beta$, and let $f\leteq\beta_*\circ\beta^{-1}$.

\begin{enumerate}
\item Then $f$ is a definable and continuous function.
\item If $\beta_*$ is injective, $f$ is also injective.
Moreover, $\dom f$ and $\ran f$ are connected and $f^{-1}$ is also a definable, monotonous and continuous function.
\item If $\beta_*$ is differentiable such that $\beta_*'(t)\neq \vo$ for all $t\in\dom\beta$, then $f$ is injective and differentiable, and $f'(t)\neq0$.
Hence $f^{-1}$ is also a differentiable function.
\end{enumerate}
\end{lem}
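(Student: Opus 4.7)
The plan is to handle the three items in sequence, relying on Lemma \ref{lem-tlinvcont} for the nontrivial fact that $\beta^{-1}$ is a continuous function, and on the \ax{IND}-versions of Darboux's theorem, the mean value theorem, and the one-variable inverse function theorem.

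For Item (1), I would simply observe that $f=\beta_*\circ\beta^{-1}$ is first-order definable by definition of the composition, and is a function because $\beta^{-1}$ is a function from $\ran\beta$ to $\dom\beta$ by Lemma \ref{lem-tlinvcont} while $\ran\beta_*\subseteq\ran\beta=\dom\beta^{-1}$ by hypothesis. Continuity of $f$ is then immediate from continuity of $\beta_*$ (assumed) and of $\beta^{-1}$ (Lemma \ref{lem-tlinvcont}).

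For Item (2), if $\beta_*$ is injective then $f$ is a composition of two injective functions, hence injective. $\dom f=\dom\beta_*$ is connected since $\beta_*$ is a curve, and $\ran f\subseteq\dom\beta$ is connected because it is the continuous image of a connected subset of $\Q$ (the \ax{IND}-intermediate value property). An injective continuous function on a connected subset of $\Q$ is monotonous by Lemma \ref{lem-moncont}, and the inverse of such a function between connected subsets of $\Q$ is continuous by the same interval-based argument as in the proof of Lemma \ref{lem-tlinvcont}.

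Item (3) is the main obstacle, since the bare hypothesis $\beta_*'(t)\neq\vo$ does not translate directly into differentiability of the scalar reparametrization $f$. My strategy is to exploit the timelike character of $\beta$ by reducing to a one-variable inverse function theorem. Because $\beta$ is timelike we have $|\beta_\tau'(t)|>|\beta_\sigma'(t)|\geq 0$, so $\beta_\tau'$ is nowhere zero on the connected set $\dom\beta$. By \ax{IND}-Darboux, $\beta_\tau'$ has a constant sign, so $\beta_\tau$ is strictly monotonous, and the \ax{IND}-version of the one-variable inverse function theorem gives that $\beta_\tau^{-1}$ is differentiable on $\ran\beta_\tau$. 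The identity $\beta_*(t)=\beta(f(t))$ read on the $\tau$-coordinate yields $f(t)=\beta_\tau^{-1}\bigl(\beta_{*,\tau}(t)\bigr)$, so the chain rule makes $f$ differentiable. Differentiating $\beta_*=\beta\circ f$ then gives $\beta_*'(t)=f'(t)\cdot\beta'(f(t))$, and the hypothesis $\beta_*'(t)\neq\vo$ forces $f'(t)\neq 0$. Finally, $f'\neq 0$ together with the \ax{IND}-mean value theorem yields strict monotonicity, hence injectivity, of $f$, and the one-variable inverse function theorem supplies the differentiability of $f^{-1}$.
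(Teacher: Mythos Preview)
Your proof is correct. Items~(1) and~(2) are essentially the paper's argument, though you misattribute one step: ``injective $+$ continuous on a connected domain $\Rightarrow$ monotonous'' is Lemma~\ref{lem-injcont}(2), not Lemma~\ref{lem-moncont} (which states the converse direction).

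For Item~(3) you take a genuinely different route from the paper. The paper argues directly at the vector level: from $\beta_*(t)=\beta\bigl(f(t)\bigr)$ one writes
\[
\frac{\beta_*(t)-\beta_*(t_0)}{t-t_0}
   =\frac{f(t)-f(t_0)}{t-t_0}\cdot\frac{\beta\bigl(f(t)\bigr)-\beta\bigl(f(t_0)\bigr)}{f(t)-f(t_0)},
\]
so the scalar difference quotient of $f$ is the ratio of two parallel vectors, each of which has a limit (the first to $\beta_*'(t_0)$, the second to $\beta'\bigl(f(t_0)\bigr)$ via continuity of $f$); since $\beta'\bigl(f(t_0)\bigr)$ is timelike and hence nonzero, this ratio converges. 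Your approach instead makes the implicit coordinate projection explicit: you observe that $\beta_\tau$ is a nice map (this is exactly Lemma~\ref{lem-tlnice}), invert it via the one-variable inverse function theorem, and recover $f(t)=\beta_\tau^{-1}\bigl(\beta_{*,\tau}(t)\bigr)$ as a composition of differentiable scalar functions. This is in fact the same reduction the paper uses later in Lemma~\ref{lem-mink}. Your version buys a cleaner chain-rule computation and avoids the somewhat informal ``ratio of parallel vectors'' step; the paper's version is more compact but leaves the reader to unpack precisely the projection you spelled out.
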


\begin{proof}
Item (1) is clear by Lemma \ref{lem-tlinvcont}.

Item (2) is clear by Item (1) and Lemma \ref{lem-injcont} since $\dom f=\dom \beta_*$ which is connected.

To prove Item (3), let $t_0\in\dom f$.
Since $\ran\beta_*\subseteq\ran\beta$, we have that there is a $\lambda\in\Q$ such that $\lambda\cdot\beta'(t_0)=\beta'_*(f(t_0))$.
Since $\big(f(t)-f(t_0)\big)/(t-t_0)$ is the ratio of parallel vectors 
\begin{equation}
\frac{\beta(t)-\beta(t_0)}{t-t_0} \quad\text{ and }\quad \frac{\beta_*\big(f(t)\big)-\beta_*\big(f(t_0)\big)}{f(t)-f(t_0)},
\end{equation} 
we have that $\big(f(t)-f(t_0)\big)/(t-t_0)$ tends to $\beta'(t_0)/\beta'_*(f(t_0))=1/\lambda$ if $t$ tends to $t_0$.
%$\lambda\neq 0$ since $\beta'_*(f(t_0))\neq\vo\,$.
Thus $f$ is differentiable, and $f'(t_0)=1/\lambda$.
\end{proof}

\begin{lem}
\label{lem-cordmap}
Assume \ax{IND}.
Let $\alpha$ be a definable timelike curve.
Let $t\in\dom\alpha$ and $x=\alpha_\tau(t)$.
Let $f_\alpha\leteq\alpha^{-1}_\tau\circ\alpha_\sigma$.
\begin{enumerate}
\item Then $f_\alpha$ is a differentiable curve, and $f'_\alpha(x)=\alpha'_\sigma(t)/\alpha'_\tau(t)$.
\item If $\alpha$ is twice differentiable at $t$, then so is $f_\alpha$ at $x$, and 
\begin{equation}f''_\alpha(x)=\frac{\alpha'_\tau(t)\alpha''_\sigma(t)-\alpha''_\tau(t)\alpha'_\sigma(t)}{\alpha'_\tau(t)^3}.\end{equation}
\end{enumerate}
\end{lem}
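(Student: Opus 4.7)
The plan is to reduce the statement to the standard inverse-function / chain-rule calculation, relying on the \ax{IND}-versions of these calculus tools that the paper has already set up. The key observation is that, because $\alpha$ is a timelike curve, $|\alpha'_\sigma(t)|<|\alpha'_\tau(t)|$ for every $t\in\dom\alpha$, so in particular $\alpha'_\tau(t)\neq 0$. Combined with the \ax{IND}-Darboux theorem (used analogously in Lemma \ref{lem-vmon}), this forces $\alpha'_\tau$ to be of constant sign, and then the \ax{IND}-Mean-Value Theorem makes $\alpha_\tau:\dom\alpha\to\ran\alpha_\tau$ strictly monotonic, hence injective, and its inverse $\alpha^{-1}_\tau$ is continuous by Lemma \ref{lem-tlinvcont} (applied after recognising that monotonicity plus connected range forces continuity of the inverse, as in Lemma \ref{lem-repar}).

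First I would establish Item (1). By the inverse-function theorem for definable functions (available under \ax{IND}, and implicit in the last paragraph of the proof of Lemma \ref{lem-repar}(3)), $\alpha^{-1}_\tau$ is differentiable at $x=\alpha_\tau(t)$ with
\begin{equation}
(\alpha^{-1}_\tau)'(x)=\frac{1}{\alpha'_\tau(t)}.
\end{equation}
Applying the chain rule to $f_\alpha(x)=\alpha_\sigma\bigl(\alpha^{-1}_\tau(x)\bigr)$ componentwise gives
\begin{equation}
f'_\alpha(x)=\alpha'_\sigma(t)\cdot(\alpha^{-1}_\tau)'(x)=\frac{\alpha'_\sigma(t)}{\alpha'_\tau(t)},
\end{equation}
which is the formula in (1).

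For Item (2), I would write $f'_\alpha = g\circ \alpha^{-1}_\tau$ where $g(t):=\alpha'_\sigma(t)/\alpha'_\tau(t)$. If $\alpha$ is twice differentiable at $t$, then both $\alpha'_\sigma$ and $\alpha'_\tau$ are differentiable at $t$, so $g$ is differentiable at $t$ by the quotient rule (valid componentwise since $\alpha'_\tau(t)\neq 0$):
\begin{equation}
g'(t)=\frac{\alpha''_\sigma(t)\alpha'_\tau(t)-\alpha'_\sigma(t)\alpha''_\tau(t)}{\alpha'_\tau(t)^2}.
\end{equation}
A second application of the chain rule, together with $(\alpha^{-1}_\tau)'(x)=1/\alpha'_\tau(t)$ from (1), yields
\begin{equation}
f''_\alpha(x)=g'(t)\cdot(\alpha^{-1}_\tau)'(x)=\frac{\alpha'_\tau(t)\alpha''_\sigma(t)-\alpha''_\tau(t)\alpha'_\sigma(t)}{\alpha'_\tau(t)^3},
\end{equation}
as claimed.

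The only non-routine ingredient is the inverse-function step: one must verify that, in the FOL-with-\ax{IND} framework, differentiability of $\alpha_\tau$ at $t$ together with $\alpha'_\tau(t)\neq 0$ is enough to conclude differentiability of $\alpha^{-1}_\tau$ at $x$ with the reciprocal derivative. I expect this to be the main potential obstacle, but since the paper already uses precisely this fact in Lemma \ref{lem-repar}(3) (the ``ratio of parallel vectors'' argument), it can be cited or repeated verbatim with $\beta=\alpha_\tau$ viewed as a one-dimensional monotonic curve. Everything else is a direct application of rules of differentiation that hold for \emph{definable} functions over a Euclidean ordered field assuming \ax{IND}.
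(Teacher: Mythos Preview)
Your proposal is correct and follows essentially the same route as the paper: establish that $\alpha_\tau$ is injective (the paper packages this via Lemmas~\ref{lem-tlnice} and~\ref{lem-inj}, while you spell out the Darboux/Mean-Value argument directly), invoke the inverse-function derivative $(\alpha^{-1}_\tau)'(x)=1/\alpha'_\tau(t)$, and then apply the chain rule and quotient rule. The paper's proof of Item~(2) is terser---it just writes $f'_\alpha=\alpha^{-1}_\tau\circ(\alpha'_\sigma/\alpha'_\tau)$ and appeals to ``a straightforward calculation''---but your explicit quotient-rule computation is exactly what that calculation amounts to.
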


\begin{proof}
Let us first prove Item (1).
We have that $\alpha_\tau$ is injective by Lemmas \ref{lem-tlnice} and \ref{lem-inj}.
Hence $f_\alpha$ is a function.
$\dom f_\alpha$ is connected since $\dom f_\alpha=\ran\alpha_\tau$ and $\ran\alpha_\tau$ is connected by Lemma \ref{lem-injcont}.
Thus $f_\alpha$ is a curve.
Since $\alpha_\tau$ is an injective differentiable curve, $\alpha^{-1}_\tau$ is also such and $(\alpha^{-1}_\tau)'(x)=1/\alpha'_\tau(t)$.
Thus by chain rule, we have that $f'_\alpha(x)=\alpha'_\sigma(t)/\alpha'_\tau(t)$.

Now let us prove Item (2).
If $\alpha$ is twice differentiable at $t$, then so are $\alpha_\sigma$ and $\alpha_\tau$.
By Item (1), $f'_\alpha=\alpha^{-1}_\tau\circ\alpha'_\sigma/\alpha'_\tau$.
Thus $f_\alpha$ is twice differentiable at $x$ and a straightforward calculation based on the rules of differential calculus can show that $f''_\alpha(x)$ is what was stated.
\end{proof}

\begin{lem}
\label{lem-twocurve}
Assume \ax{IND}.
Let $\alpha$ and $\beta$ be definable timelike curves such that $\ran\alpha\cup\ran\beta$ is in a vertical plane.
Let $t_1,t_2\in\dom\alpha$ and $\bar{t}_1,\bar{t}_2\in\dom\beta$ such that $\alpha(t_1)\seq\beta(\bar{t}_1)$, $\alpha(t_2)\seq\beta(\bar{t}_2)$ and $\big(\beta(\bar{t}_1)-\alpha(t_1)\big)\upp\big(\alpha(t_2)-\beta(\bar{t}_2)\big)$.
Then there is a $t\in(t_1,t_2)$ such that $\alpha(t)\in\ran\beta$.
Hence $Ran\alpha\cap\ran\beta\neq\emptyset$.
\end{lem}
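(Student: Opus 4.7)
The plan is to pass to the two-dimensional vertical plane containing both ranges, reparametrize each curve by its time-coordinate, and apply the $\ax{IND}$-Intermediate Value Theorem to the signed spatial gap between the two time-parametrized graphs; the key preliminary step is to show that the two time-ranges overlap in a nondegenerate subinterval.

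I would work inside the vertical plane $V \supseteq \ran\alpha \cup \ran\beta$; identifying $V$ with $Plane(t,x)$ via an isometry, both $\ran\alpha$ and $\ran\beta$ sit in $Plane(t,x)$. By Lemmas~\ref{lem-tlnice} and~\ref{lem-inj}, each of $\alpha_\tau, \beta_\tau$ is injective and strictly monotonic on its connected domain, so after reparametrizing by $s \mapsto -s$ where needed, both may be taken strictly increasing. Set $f_\alpha := \alpha_\sigma \circ \alpha_\tau^{-1}$ on $I_\alpha := \alpha_\tau(\dom\alpha)$, so that $\ran\alpha$ is the graph of the definable continuous function $f_\alpha$; likewise define $f_\beta$ on $I_\beta$. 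Lemma~\ref{lem-chord}\eqref{item-chord} (every chord of a timelike curve is timelike) immediately yields the strict chord bound $|f_\alpha(s) - f_\alpha(t)| < |s-t|$ for distinct $s,t \in I_\alpha$, and analogously for $f_\beta$.

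Write $\tau_i := \alpha_\tau(t_i)$ and $\bar\tau_i := \beta_\tau(\bar t_i)$; swapping both pairs if necessary (which leaves the hypothesis invariant, since $\upp$ is symmetric under simultaneous negation), I assume $\tau_1 < \tau_2$ and that the common spacelike direction is positive. The two spacelike-separation assumptions then translate into
\begin{equation}
f_\beta(\bar\tau_1) - f_\alpha(\tau_1) > |\bar\tau_1 - \tau_1| \quad\text{and}\quad f_\alpha(\tau_2) - f_\beta(\bar\tau_2) > |\bar\tau_2 - \tau_2|.
\end{equation}
Adding these and inserting the strict chord inequalities for $f_\alpha$ on $[\tau_1,\tau_2]$ and for $f_\beta$ on $[\min(\bar\tau_1,\bar\tau_2),\max(\bar\tau_1,\bar\tau_2)]$ yields
\begin{equation}
|\bar\tau_1 - \tau_1| + |\bar\tau_2 - \tau_2| < (\tau_2 - \tau_1) + |\bar\tau_2 - \bar\tau_1|.
\end{equation}
A brief sign analysis on the order of $\tau_1, \tau_2, \bar\tau_1, \bar\tau_2$ rules out every configuration in which $[\tau_1,\tau_2]$ fails to meet $[\min(\bar\tau_1,\bar\tau_2),\max(\bar\tau_1,\bar\tau_2)]$, so $J := [t_\ast, t^\ast]$ with $t_\ast := \max(\tau_1, \min(\bar\tau_1,\bar\tau_2))$ and $t^\ast := \min(\tau_2, \max(\bar\tau_1,\bar\tau_2))$ is a nondegenerate subinterval of $I_\alpha \cap I_\beta$.

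At the endpoints of $J$, whichever candidate equals $t_\ast$, combining the corresponding spacelike-separation inequality with the strict chord bound on the other function yields $f_\beta(t_\ast) > f_\alpha(t_\ast)$; symmetrically $f_\beta(t^\ast) < f_\alpha(t^\ast)$ at the right endpoint. Since $g := f_\alpha - f_\beta$ is definable and continuous on the connected set $J$, the $\ax{IND}$-Intermediate Value Theorem delivers $t^\star \in (t_\ast, t^\ast)$ with $g(t^\star) = 0$; the point $(t^\star, f_\alpha(t^\star)) = (t^\star, f_\beta(t^\star))$ thus lies in $\ran\alpha \cap \ran\beta$, and $t_\ast \ge \tau_1$, $t^\ast \le \tau_2$ give $t := \alpha_\tau^{-1}(t^\star) \in (t_1, t_2)$ with $\alpha(t) \in \ran\beta$, as required. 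The main obstacle is the overlap step: one has to balance the spacelike excesses in the hypotheses against the strict chord estimates from Lemma~\ref{lem-chord}\eqref{item-chord}, a calculation requiring careful attention because the direction hypothesis only constrains spatial components while the chord bounds limit the temporal drift of the curves.
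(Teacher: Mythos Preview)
Your approach is essentially the paper's: reduce to the $tx$-plane, replace each curve by the graph $f_\alpha=\alpha_\tau^{-1}\circ\alpha_\sigma$, $f_\beta=\beta_\tau^{-1}\circ\beta_\sigma$, and apply \ax{IND}-Bolzano to $f_\alpha-f_\beta$. The paper compresses the whole overlap issue into a single clause (``By \ax{IND}-Bolzano's theorem, we can also assume that $\alpha(t_1)_\tau=\beta(\bar t_1)_\tau$ and $\alpha(t_2)_\tau=\beta(\bar t_2)_\tau$'') and then reads off the opposite signs of $f_\alpha-f_\beta$ at $x_1,x_2$ directly from the $\upp$-hypothesis; you instead make the overlap explicit via the chord bound $|f_\alpha(s)-f_\alpha(t)|<|s-t|$ coming from Lemma~\ref{lem-chord}\eqref{item-chord}. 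That is a genuine gain in transparency over the paper's terse reduction.

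One point deserves tightening. Your inequality
\[
|\bar\tau_1-\tau_1|+|\bar\tau_2-\tau_2|<(\tau_2-\tau_1)+|\bar\tau_2-\bar\tau_1|
\]
does more work than you state: it excludes not only the disjoint configurations but also the ``crossed'' orderings such as $\tau_1\le\bar\tau_2\le\tau_2\le\bar\tau_1$ and $\tau_1\le\bar\tau_2\le\bar\tau_1\le\tau_2$ (in those cases the two sides are equal, so the strict inequality fails). Your endpoint claim ``$f_\beta(t_\ast)>f_\alpha(t_\ast)$ and $f_\beta(t^\ast)<f_\alpha(t^\ast)$'' is only valid once these crossed orderings have been eliminated; in the surviving orderings $t_\ast\in\{\tau_1,\bar\tau_1\}$ and $t^\ast\in\{\tau_2,\bar\tau_2\}$, and then combining the matching spacelike hypothesis with the chord bound on the \emph{other} curve gives the asserted signs. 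So the argument is correct, but the ``brief sign analysis'' should record that it also kills the crossed cases, otherwise the endpoint step does not go through as written.
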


\begin{proof}
Since $\ran\alpha\cup\ran\beta$ is in a vertical plane, we can assume, without losing generality, that $d=2$.
By \ax{IND}-Bolzano's theorem, we can also assume that $\alpha(t_1)_\tau=\beta(\bar{t}_1)_\tau$ and $\alpha(t_2)_\tau=\beta(\bar{t}_2)_\tau$.
Let $x_1=\alpha(t_1)_\tau$ and $x_2=\alpha(t_2)_\tau$.
Let $f_\alpha\leteq \alpha_\tau^{-1}\circ\alpha_\sigma$ and $f_\beta\leteq\beta_\tau^{-1}\circ\beta_\sigma$.
Then $f_\alpha$ and $f_\beta$ are continuous curves, see Lemma \ref{lem-cordmap}.
By the assumption $\big(\beta(\bar{t}_1)-\alpha(t_1)\big)\upp\big(\alpha(t_2)-\beta(\bar{t}_2)\big)$, we have that $\big(f_\beta(x_1)-f_\alpha(x_1)\big)\big(f_\alpha(x_2)-f_\beta(x_2)\big)<0$.
Thus by \ax{IND}-Bolzano's theorem, there is an $x\in(x_1,x_2)$ such that $f_\alpha(x)=f_\beta(x)$.
Let $t\leteq \alpha_\tau^{-1}(x)$.
Then $\alpha(t)\in\ran\beta$.
\end{proof}

Let $\alpha$ and $\beta$ be timelike curves.
We say that $\beta_*$ is the \df{radar reparametrization of $\beta$ according to $\alpha$} if 
\begin{equation}
\beta_*=\setopen \langle t,\vpp\rangle\in\dom\alpha\times\ran\beta \::\:\exists r\in\Q \quad \vpp\in\Lambda^-_{\alpha(t+r)}\cap\Lambda^+_{\alpha(t-r)}\setclose.
\end{equation}
We say that $\beta$ is constant $r$ radar distance from $\alpha$ iff 
\begin{equation}
\ran\beta\subseteq \bigcup_{t\pm r\in \dom\alpha}\Lambda^-_{\alpha(t+r)}\cap\Lambda^+_{\alpha(t-r)}.
\end{equation}
We note that this $r$ can be negative if $\alpha_\tau$ is decreasing since by this definition, $\alpha(t-r)\ll\alpha(t+r)$.

We will also use the notation
\begin{equation}
\Df{\vex}=\langle 0,1,0,\ldots,0\rangle.
\end{equation}

\begin{prop}
\label{prop-rad}
Assume \ax{IND}.
Let $\alpha$ and $\beta$ be definable timelike curves.
Let $\beta_*$ be the radar reparametrization of $\beta$ according to $\alpha$.
\begin{enumerate} 
\item\label{item-radcont} Then $\beta_*$ is a definable, injective, and continuous curve.
\item\label{item-raddiff} If $\ran \alpha\cup \ran \beta$ is in a vertical plane, and $\beta$ is constant $r$ radar distance from $\alpha$, then $\beta_*$ is differentiable.
\item\label{item-radder} Let us further assume that this vertical plane is $Plane(t,x)$.
Then 
\begin{itemize}
\item[] $\beta'_*(t)=\alpha'(t-r)\phsum\alpha'(t+r)\enskip\text{ iff }\enskip \big(\beta_*(t)-\alpha(t)\big)\upp\phantom{-}\vex$, 
\item[] $\beta'_*(t)=\alpha'(t+r)\phsum\alpha'(t-r)\enskip\text{ iff }\enskip \big(\beta_*(t)-\alpha(t)\big)\upp -\vex$.
\end{itemize}
\end{enumerate}
\end{prop}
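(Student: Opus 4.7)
The plan is to mirror Proposition \ref{prop-ph}, by viewing $\beta_*$ as an ``average'' of a past-cone and a future-cone photon reparametrization. Let $P$ denote the photon reparametrization of $\beta$ according to $\alpha$ (as in Proposition \ref{prop-ph}), and let $F$ denote the analogous curve obtained by replacing $\Lambda^-_{\alpha(s)}$ with $\Lambda^+_{\alpha(s)}$ in the definition. The proof of Proposition \ref{prop-ph} applies verbatim, with sign changes, to show that $F$ is also a definable, continuous, injective curve. I will use the key observation that, for any $\vqq \in \ran \alpha$, each of $\ran \beta \cap \Lambda^-_{\vqq}$ and $\ran \beta \cap \Lambda^+_{\vqq}$ has at most one element, by \eqref{item-chord} of Lemma \ref{lem-chord} (two distinct points in such an intersection would form a lightlike chord of $\beta$). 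Hence the parameter $r$ in the definition of $\beta_*$ is uniquely determined by $\vpp$, and a short calculation yields $\langle t, \vpp\rangle \in \beta_*$ iff $\vpp \in \ran P \cap \ran F$ and $t = \tfrac{1}{2}\bigl(P^{-1}(\vpp)+F^{-1}(\vpp)\bigr)$.

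For Item \eqref{item-radcont}, I set $g(\vpp) \leteq \tfrac{1}{2}\bigl(P^{-1}(\vpp) + F^{-1}(\vpp)\bigr)$ on $\ran P \cap \ran F$, so that $\beta_*$ is exactly the graph of $g^{-1}$. The betweenness-preserving argument in the proof of Proposition \ref{prop-ph} shows that the reparametrization map $t \mapsto P^{-1}(\beta(t))$ is monotonic on its domain, and likewise for $F^{-1}$; a case analysis using the time orientation of $\alpha$ shows they are monotonic in the same sense. Therefore $t \mapsto g(\beta(t))$ is strictly monotonic, so $g$ is injective and $\beta_* = g^{-1}$ is a definable, injective function; continuity follows from continuity of $P^{-1}, F^{-1}$ via Lemma \ref{lem-tlinvcont}. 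Connectedness of $\dom \beta_* = \ran g$ is obtained by an intermediate value argument parallel to the one used in the proof of Proposition \ref{prop-ph}, invoking \ax{IND}.

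For Items \eqref{item-raddiff} and \eqref{item-radder}, I specialize to the assumption that the vertical plane is $Plane(t,x)$ and $\beta$ is at constant radar distance $r$ from $\alpha$. In this 2D setting, the set $\Lambda^-_{\alpha(t+r)} \cap \Lambda^+_{\alpha(t-r)} \cap Plane(t,x)$ consists of exactly two points, lying on opposite spatial sides of the chord from $\alpha(t-r)$ to $\alpha(t+r)$, and $\beta_*(t)$ is the unique one of these belonging to $\ran \beta$. Solving the two pairs of lightlike line equations gives an explicit affine formula for $\beta_*(t)$ in terms of the coordinates of $\alpha(t\pm r)$; differentiability of $\alpha$ then immediately yields differentiability of $\beta_*$, which is Item \eqref{item-raddiff}. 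Differentiating coordinate-by-coordinate and comparing with the formula in Lemma \ref{lem-phsum}, the right-hand case $(\beta_*(t)-\alpha(t)) \upp \vex$ produces $\alpha'(t-r)\phsum\alpha'(t+r)$, and the left-hand case $(\beta_*(t)-\alpha(t)) \upp -\vex$ produces $\alpha'(t+r)\phsum\alpha'(t-r)$. Since $\alpha(t-r)\ll\alpha(t+r)$ forces $\beta_*(t)\ne\alpha(t)$, these two cases are exhaustive and mutually exclusive, so both iff-statements of Item \eqref{item-radder} follow.

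The main obstacle I expect is Item \eqref{item-radcont}: one must carefully import from the proof of Proposition \ref{prop-ph} not only the existence, continuity, and injectivity of $P$ and $F$, but also their joint monotonicity with respect to $\beta$, and then establish connectedness of $\dom \beta_*$ via an intermediate value argument in the style of the proof of Proposition \ref{prop-ph}. Items \eqref{item-raddiff} and \eqref{item-radder}, by contrast, reduce to a pleasant 2D computation once the explicit affine formula for $\beta_*(t)$ is written down.
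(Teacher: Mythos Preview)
Your approach is essentially the paper's: write the radar time as the average of two photon times and then, in the planar constant-distance case, compute via the photon-sum formula. The bookkeeping differs slightly. You introduce the future-cone analogue $F$ of $P$; the paper instead swaps roles and uses the (past-cone) photon reparametrization map $h$ of $\alpha$ according to $\beta$. These carry the same information, since $\beta(\tilde t)\in\Lambda^+_{\alpha(s)}$ iff $\alpha(s)\in\Lambda^-_{\beta(\tilde t)}$, so your $t\mapsto F^{-1}(\beta(t))$ is exactly the paper's $h$. The paper's choice is marginally cleaner: Proposition~\ref{prop-ph} applies to $h$ directly (no ``verbatim with sign changes'' needed), and the auxiliary maps $g^{-1},h$ are already $\Q\to\Q$, so connectedness of $\dom\beta_*$ reduces to the intersection of two intervals.

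One small omission in your Item~\eqref{item-raddiff}: you say ``an explicit affine formula for $\beta_*(t)$,'' but there are \emph{two} candidate formulas (one for each spatial side), and you have not argued that $\beta_*$ is given by the \emph{same} one throughout $\dom\beta_*$. Pointwise uniqueness (``the unique one belonging to $\ran\beta$'') is not enough for differentiability. The fix is immediate: since $\beta_*$ is continuous by Item~\eqref{item-radcont}, $\dom\beta_*$ is connected, and the two candidates never coincide when $r\ne 0$, $\beta_*$ must agree with one of them on all of $\dom\beta_*$. The paper handles this via Lemma~\ref{lem-twocurve}; once noted, your direct-formula route for \eqref{item-raddiff} and \eqref{item-radder} is slightly more streamlined than the paper's, which first proves differentiability of the reparametrization map $f$ through $g^{-1}$ and $h$ before invoking the explicit formula for \eqref{item-radder}.
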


\begin{proof}
It is clear that $\beta_*$ is definable.

\begin{figure}[h!btp]
\small
\begin{center}
\psfrag{df}[r][r]{$\dom f$}
\psfrag{t}[l][l]{$t$}
\psfrag{dg}[r][r]{$\dom g$}
\psfrag{rh}[r][r]{$\ran h$}
\psfrag{g-1tt}[l][l]{$g^{-1}(\tilde{t})$}
\psfrag{htt}[l][l]{$h(\tilde{t})$}
\psfrag{f}[b][b]{$f$}
\psfrag{g}[bl][bl]{$g$}
\psfrag{h}[br][br]{$h$}
\psfrag{a}[tl][tl]{$\alpha$}
\psfrag{b}[tl][tl]{$\beta,\beta_*$}
\psfrag{tt}[l][l]{$\tilde{t}$}
\psfrag{dh}[l][l]{$\dom h$}
\psfrag{rf}[r][r]{$\ran f$}
\psfrag{rg}[l][l]{$\ran g$}
\includegraphics[keepaspectratio, width=\textwidth]{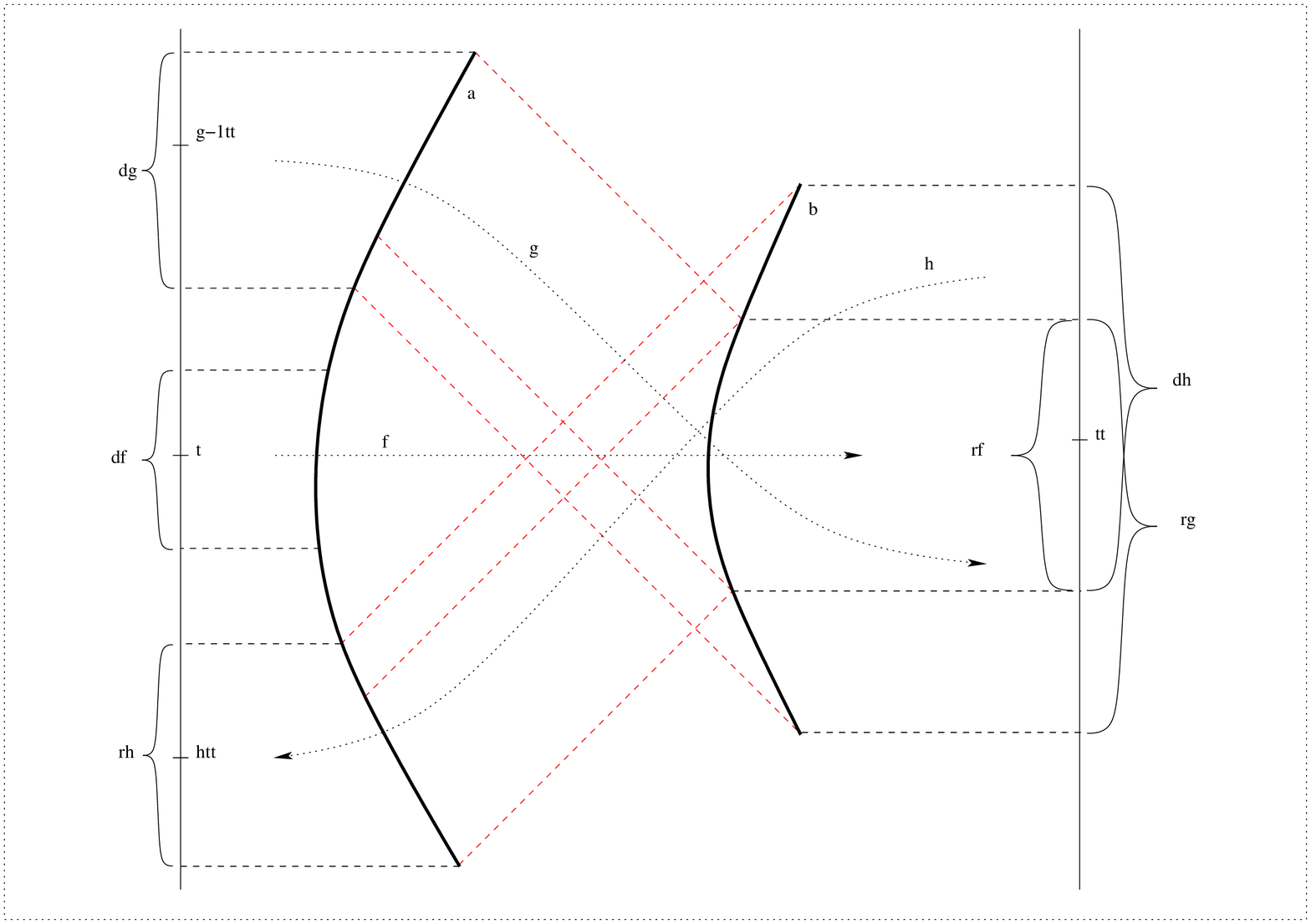}
\caption{\label{fig-radpar} Illustration for the proof of Proposition \ref{prop-rad}}
\end{center}
\end{figure}

Without losing generality, we can assume that $\alpha_\tau$ is increasing, see Lemmas \ref{lem-tlnice} and \ref{lem-inj}.

To show that $\beta_*$ is a function, let $\langle t,\vpp\rangle,\langle t,\vqq\rangle \in\beta_*$.
Then there are $r,s\in \Q$ such that $\vpp\in \Lambda^-_{\alpha(t+r)}\cap\Lambda^+_{\alpha(t-r)}$ and $\vqq\in \Lambda^-_{\alpha(t+s)}\cap\Lambda^+_{\alpha(t-s)}$.
We can assume that $0\le r\le s$.
Since both $\alpha$ and $\beta$ are timelike curves, $\vp=\vq$ iff $r=s$.
Therefore, if $\vp\neq\vq$, $\alpha(t+r)\ll\alpha(t+s)$ and $\alpha(t-s)\ll\alpha(t-r)$.
Thus $\vq\not\in I^-_{\vp}$ since $I^-_{\vpp}\subset I^-_{\alpha(t+r)}$ and $I^-_{\alpha(t+r)}\cap\Lambda^-_{\alpha(t+s)}=\emptyset$;
and $\vq\not\in I^+_{\vp}$ since $I^+_{\vpp}\subset I^+_{\alpha(t-r)}$ and $I^+_{\alpha(t-r)}\cap\Lambda^+_{\alpha(t-s)}=\emptyset$.
Thus $\vpp=\vqq$ since $\vqq\in I_{\vpp}$ by Lemma \ref{lem-chord}.

For all $t\in\dom\beta_*$, let $\tilde{t}\in\dom\beta$ such that $\beta(\tilde{t}\,)=\beta_*(t)$, and let $f:t \mapsto \tilde{t}$ be the (radar )reparametrization map, that is, $f\leteq\beta_*\circ\beta^{-1}$.
Then $f$ is injective since if $\Lambda^-_{\alpha(t_1+r)}\cap\Lambda^+_{\alpha(t_1-r)}\cap\Lambda^-_{\alpha(t_2+s)}\cap\Lambda^+_{\alpha(t_2-s)}\neq \emptyset$, then $t_1=t_2$ and $r=s$, see \eqref{item-distjcones} in Lemma \ref{lem-cau}.
Let $g$ and $h$ be the photon reparametrization maps of $\beta$ according to $\alpha$ and of $\alpha$ according to $\beta$, respectively.
Then $g$, $g^{-1}$ and $h$, $h^{-1}$ are monotonous and continuous bijections between connected sets, see Proposition \ref{prop-ph} and Lemma \ref{lem-injcont}.
It is clear by definition, that
\begin{equation}
f^{-1}(\tilde{t}\,)=t=\frac{g^{-1}(\tilde{t}\,)+h(\tilde{t}\,)}{2}
\end{equation}
for all $\tilde{t}\in\ran f$, see Figure \ref{fig-radpar}.
Thus $f^{-1}$ is continuous since both $h$ and $g^{-1}$ are so.
It is clear that $\dom f^{-1}=\ran f=\dom h \cap \ran g$.
Thus $\dom f^{-1}$ is connected since both $\dom h$ and $\ran g$ are such.
Therefore, $\dom\beta_*=\dom f=\ran f^{-1}$ is also connected and $f$ is definable and continuous, see Lemma \ref{lem-injcont}.
Hence $\beta_*=f\circ\beta$ is also continuous;
and $\beta_*$ is injective since both $\beta$ and $f$ are such.
So Item \eqref{item-radcont} is proved.

Now let us prove Item \eqref{item-raddiff}.
If $r=0$, then $\beta_*$ is the restriction of $\alpha$ to $\dom\beta_*$ which is connected, thus it is obviously differentiable.
If $r\neq0$, then $\ran \alpha\cap \ran\beta=\emptyset$.
Thus by \eqref{item-phder} in Proposition \ref{prop-ph} and Lemma \ref{lem-repar}, we have that $h$ and $g^{-1}$ are differentiable.
Thus $f$ is also differentiable.

To prove Item \eqref{item-radder}, let $\ran \alpha\cup \ran\beta\subset Plane(t,x)$.
By Item \eqref{item-raddiff} of this Proposition, $\beta_*$ is differentiable.
It is easy to see that 
\begin{equation}\label{eq-radpar}
\begin{split}
&\beta_*(t)=\alpha(t-r)\phsum \alpha(t+r) \text{ iff } \big(\beta_*(t)-\alpha(t)\big)\upp\vex \text{ and}\\
&\beta_*(t)=\alpha(t+r)\phsum\alpha(t-r) \text{ iff } \big(\beta_*(t)-\alpha(t)\big)\upp -\vex
\end{split}
\end{equation}
if $t\in\dom\beta_*$ since $\beta$ is constant $r$ radar distance from $\alpha$.
By Lemma \ref{lem-twocurve}, we have that the direction of $\beta_*(t)-\alpha(t)$ cannot change.
Thus it is always the same equation in \eqref{eq-radpar} that holds for $\beta_*$.
Hence Item \eqref{item-radder} follows by an easy calculation from Lemma \ref{lem-phsum}.
\end{proof}

If $\alpha:\Q\parrow\Q^d$ and $p\in\Q^d$, we abbreviate $\alpha(t)\upp\vp$ for all $t\in\dom\alpha$ to $\alpha\upp\vp$.
We use analogously the notation $\alpha\upp\beta$ if $\alpha,\beta:\Q\parrow\Q^d$.
Let $\Df{\bar{\alpha}}\leteq\langle \alpha_2,\alpha_1,\alpha_3,\ldots,\alpha_d\rangle$ for all $\alpha: \Q\rightarrow\Q^d$, that is, the first two coordinates are interchanged.
Let us also use the following notation
\begin{equation}
\Df{\vet}=\langle 1,0,0,\ldots,0\rangle.
\end{equation}

\begin{lem}
 \label{lem-accdir}
Assume \ax{IND}.
Let $\alpha$ be a definable timelike curve.
\begin{enumerate}
\item \label{item-velocdir} Then $\alpha'\upp\vet$ or $\alpha'\upp-\vet$.
\item \label{item-condir} If $\ran\alpha\subset Plane(t,x)$, then $\bar{\alpha}'\upp\vex$ iff $\alpha'\upp\vet$ and $\bar{\alpha}'\upp-\vex$ iff $\alpha'\upp-\vet$.
\item \label{item-accdir} If $\alpha$ is twice differentiable, $\ran\alpha\subset Plane(t,x)$ and $\vo\not\in\ran\alpha''$,
then $\alpha''\upp\vex$ ($\alpha''\upp-\vex$) iff $\alpha'_2$ is increasing (decreasing).
\item \label{item-accsame} If $\alpha$ is twice differentiable, $\ran\alpha$ is in a vertical plane and $\vo\not\in\ran\alpha''$, then $\alpha''(t_1)\upp\alpha''(t_2)$ for all $t_1,t_2\in \dom\alpha$.
\item \label{item-conacc} If $\alpha$ is twice differentiable and $\ran\alpha\subset Plane(t,x)$, then for all $t\in\dom\alpha$, there is a $\lambda_t\in\Q$ such that $\lambda_t\alpha'(t)=\alpha''(t)$.
Furthermore, if $\vo\not\in\ran\alpha''$, the sign of $\lambda_t$ is the same for all $t\in\dom\alpha$ and
\begin{equation}\label{eq-accdir}
\begin{split}
& \lambda_t>0 \quad\text{ iff }\enskip \mbox{$\phantom{-}\bar\alpha'\upp\alpha''$}\\
&\lambda_t<0 \quad\text{ iff }\enskip \mbox{$-\bar\alpha'\upp\alpha''$}
\end{split}
\end{equation}
\end{enumerate}
\end{lem}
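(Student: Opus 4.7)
My strategy is to treat the five items sequentially, exploiting Proposition~\ref{prop-vmorta} (which gives $\alpha'\mort\alpha''$ whenever $\alpha''$ is defined) together with the principle that a definable function on a connected set which is nowhere zero must have constant sign. Since $\alpha$ is definable, so are all derivatives in sight, and the IND-based analytic tools (IND-Darboux and IND-Mean-Value) apply throughout.

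For \eqref{item-velocdir}, since $\alpha'(t)$ is timelike we have $|\alpha'_\tau(t)|>|\alpha'_\sigma(t)|\ge 0$, so $\alpha'_\tau$ is a definable, nowhere-zero derivative on the connected set $\dom\alpha$. IND-Darboux forces constant sign, and unfolding the timelike clause of $\upp$ yields either $\alpha'\upp\vet$ or $\alpha'\upp-\vet$. For \eqref{item-condir}, $\ran\alpha\subset Plane(t,x)$ forces $\alpha_3\equiv\dots\equiv\alpha_d\equiv 0$, so $\bar\alpha'(t)=\langle\alpha'_2(t),\alpha'_1(t),0,\dots,0\rangle$; the inequality $|\alpha'_1(t)|>|\alpha'_2(t)|$, i.e.\ $\alpha'$ timelike, is precisely the statement that $\bar\alpha'(t)$ is spacelike. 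Unfolding $\bar\alpha'\upp\vex$ in the spacelike clause then reduces to $\alpha'_1>0$, which is $\alpha'_\tau>0$, which is $\alpha'\upp\vet$; the negative sides match symmetrically.

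For \eqref{item-accdir}, Proposition~\ref{prop-vmorta} gives $\alpha'(t)\mort\alpha''(t)$; since $\alpha'(t)$ is timelike and $\alpha''(t)\ne\vo$, standard Minkowski linear algebra (no nonzero vector Minkowski-orthogonal to a timelike one is timelike or lightlike) forces $\alpha''(t)$ spacelike. Combined with $\ran\alpha\subset Plane(t,x)$ this gives $\alpha''_2(t)\ne 0$ for every $t$, so IND-Darboux yields $\alpha''_2$ of constant sign, and IND-Mean-Value converts constant sign into monotonicity of $\alpha'_2$; the $\vex$-unfolding from \eqref{item-condir} then matches signs with $\alpha''\upp\vex$. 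Item \eqref{item-accsame} reduces to \eqref{item-accdir} by applying a first-order definable spatial rotation carrying the given vertical plane onto $Plane(t,x)$: such a rotation is a linear Minkowski-isometry preserving $\mort$, $\upp$, definability, and the nonvanishing of $\alpha''$, so \eqref{item-accdir} supplies $\alpha''(t)\upp\pm\vex$ uniformly in $t$, and hence $\alpha''(t_1)\upp\alpha''(t_2)$ for all $t_1,t_2$.

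For \eqref{item-conacc}, a direct computation under $\ran\alpha\subset Plane(t,x)$ gives $\bar\alpha'(t)\mort\alpha'(t)$, so within the two-dimensional plane $Plane(t,x)$ the Minkowski-orthogonal complement of the timelike line through $\alpha'(t)$ is one-dimensional and spanned by $\bar\alpha'(t)$. Since $\alpha''(t)$ lies in this plane and $\alpha''(t)\mort\alpha'(t)$, it must be a scalar multiple of $\bar\alpha'(t)$, yielding the required $\lambda_t\in\Q$ as in the displayed equation \eqref{eq-accdir}; explicitly $\lambda_t=\alpha''_2(t)/\alpha'_1(t)$. Under $\vo\not\in\ran\alpha''$, $\lambda_t$ is nowhere zero, and its sign is constant because both $\alpha'_1$ and $\alpha''_2$ have constant sign by items \eqref{item-velocdir} and \eqref{item-accdir}; the equivalences in \eqref{eq-accdir} then reduce to the spacelike case of $\upp$ applied to the collinear pair $\bar\alpha'(t)$ and $\lambda_t\bar\alpha'(t)$. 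The only delicate step is the rotation reduction in \eqref{item-accsame}---one must verify that definability, differentiability, $\mort$, $\upp$ and the vertical-plane condition all transport---but since the rotation is a fixed first-order definable linear isometry, this is routine; the rest of the argument is two-dimensional Minkowski geometry combined with IND-Darboux and IND-Mean-Value.
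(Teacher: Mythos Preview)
Your proof is correct and follows essentially the same route as the paper's: both arguments hinge on $\alpha'_\tau\ne 0$ and $\alpha''_2\ne 0$ (via Proposition~\ref{prop-vmorta}) followed by \ax{IND}-Darboux to force constant sign, and then one-dimensional Minkowski-orthogonality inside $Plane(t,x)$ for Item~\eqref{item-conacc}. The only cosmetic difference is in Item~\eqref{item-accsame}: the paper does the same ``without loss of generality'' reduction to $Plane(t,x)$ but then cites Lemma~\ref{lem-vmon} to obtain monotonicity of $\alpha'_2$ before invoking the biconditional of Item~\eqref{item-accdir}, whereas you observe directly that the proof of Item~\eqref{item-accdir} already yields $\alpha''\upp\pm\vex$ uniformly---a slightly more economical path to the same conclusion.
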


\begin{proof}
Item \eqref{item-velocdir} is easy since by Lemma \ref{lem-tlnice}, $0\not\in\ran\alpha_\tau$.
Thus by \ax{IND}-Darboux's Theorem, we have that $\alpha'_\tau>0$ or $\alpha'_\tau<0$.

To prove Item \eqref{item-condir}, let us first note that $\alpha=\langle\alpha_\tau,\alpha_2,0,\ldots,0\rangle$ since $\ran\alpha\subset Plane(t,x)$.
Therefore, $\bar\alpha'=\langle\alpha'_2,\alpha'_\tau,0,\ldots,0\rangle$.
Hence $\bar\alpha'\upp\vex$ iff $\alpha'_\tau>0$, and $\bar\alpha'\upp-\vex$ iff $\alpha'_\tau<0$.

To prove Item \eqref{item-accdir}, let $t\in\dom\alpha$.
It is clear that $\alpha''(t)$ is spacelike or $\vo$ since $\alpha''(t)\mort\alpha'(t)$ by Proposition \ref{prop-vmorta}.
Thus $\vo\not\in\ran\alpha''$ iff $\vo\not\in\ran\alpha''_\sigma$.
We have that $\alpha_\sigma=\langle\alpha_2,0,\ldots,0\rangle\in\Q^{d-1}$ since $\ran\alpha\subset Plane(t,x)$.
Thus $\vo\not\in\ran\alpha''_\sigma$ iff $0\not\in\ran\alpha''_2$.
Hence $0\not\in\ran\alpha''_2$.
Therefore, by \ax{IND}-Darboux's theorem, we have that $\alpha''_2>0$ or $\alpha''_2<0$.
Consequently, $\alpha''\upp\vex$ iff $\alpha''_2>0$, and $\alpha''\upp-\vex$ iff $\alpha''_2<0$.
Thus since $0\not\in\ran\alpha''_2$, $\alpha''\upp\vex$ iff $\alpha'_2$ is increasing, and $\alpha''\upp-\vex$ iff $\alpha'_2$ is decreasing.

Let us now prove Item \eqref{item-accsame}.
Without losing generality, we can assume that the vertical plane is $Plane(t,x)$.
By Lemma \ref{lem-vmon}, we have that $\alpha'_2$ is increasing or decreasing since $\alpha''\circ\mu<0$ iff $\vo\not\in\ran\alpha''$.
Thus Item \eqref{item-accsame} follows by Item \eqref{item-accdir}.

Let us finally prove Item \eqref{item-conacc}.
Since both $\bar\alpha'(t)$ and $\alpha''(t)$ are Minkowski orthogonal to $\alpha'(t)$ and are in $Plane(t,x)$, there is a $\lambda_t\in\Q$ such that $\bar\alpha'(t)=\lambda_t\alpha''(t)$.
By Items \eqref{item-condir} and \eqref{item-accdir}, equation \eqref{eq-accdir} is clear.
\end{proof}

Let $\alpha$ and $\beta$ be timelike curves.
We say that $\beta_*$ is the \df{Minkowski reparametrization of $\beta$ according to $\alpha$} if 
\begin{equation}
\beta_*=\setopen \langle t,\vpp\rangle\in\dom\alpha \times\ran\beta \::\:\big(\vpp-\alpha(t)\big)\mort \alpha'(t)\setclose.
\end{equation}
We say that \df{$\beta$ is at constant $r\in\Q^+$ Minkowski distance from $\alpha$} iff for all $\vp\in\ran\beta$, there is a $t\in \dom\alpha$ such that $-\mu\big(\vp,\alpha(t)\big)=r$.

\begin{prop}
\label{prop-mink}
Assume \ax{IND}.
Let $\alpha$ and $\beta$ be definable timelike curves such that $\alpha$ is well-parametrized, and let $\beta_*$ be the Minkowski reparametrization of $\beta$ according to $\alpha$ such that.
\begin{itemize}
\item[(i)]$ \alpha$ is twice differentiable, and $\vo\not\in\ran\alpha''$.
\item[(ii)] $\ran \alpha\cup \ran \beta$ is in a vertical plane.
\item[(iii)] If $\langle t,\vpp\rangle\in\beta_*$ and $\big(\alpha(t)-\vpp\big)\upp \alpha''(t)$, then $-\mu\big(\vp,\alpha(t)\big)<-1/\mu(\alpha''(\tau))$ for all $\tau\in\dom \alpha$.\item[(iv)] $\beta$ is at constant $r\in\Q^+$ Minkowski distance from $\alpha$.
\end{itemize}
Then $\beta_*$ is a definable timelike curve.
Furthermore,
\begin{equation}\label{eq-muder}
\begin{split}
&\beta'_*(t)=\alpha'(t)+ r\cdot\bar{\alpha}''(t)\text{ iff } \alpha''(t)\upp \big(\beta_*(t)-\alpha(t)\big),\\
&\beta'_*(t)=\alpha'(t)- r\cdot\bar{\alpha}''(t)\text{ iff } \alpha''(t)\upp \big(\alpha(t)-\beta_*(t)\big)
\end{split}
\end{equation}
 if $\ran \alpha\cup \ran \beta\subseteq Plane(t,x)$, $\alpha'\upp\vet$ and $\alpha''\upp\vex$.
\end{prop}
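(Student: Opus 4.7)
The proof follows the pattern of Propositions \ref{prop-ph} and \ref{prop-rad}. Definability of $\beta_*$ is immediate. I will work under the assumption that the vertical plane is $Plane(t,x)$ with $\alpha'\upp\vet$ and $\alpha''\upp\vex$ (the other cases are symmetric). Since $\alpha$ is well-parametrized, a direct calculation shows that $\bar\alpha'(t)\mort\alpha'(t)$ and that $\bar\alpha'(t)$ is spacelike of unsigned Minkowski length $1$. In the $2$-dimensional Minkowski plane, the $\mort$-complement of a timelike vector is a spacelike line, so any $\vpp\in Plane(t,x)$ with $(\vpp-\alpha(t))\mort\alpha'(t)$ has the form $\alpha(t)+\lambda\bar\alpha'(t)$ for some $\lambda\in\Q$. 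Condition (iv) then forces $|\lambda|=r$, so for each $t\in\dom\beta_*$, $\beta_*(t)\in\{\gamma_+(t),\gamma_-(t)\}$ where $\gamma_\pm(t):=\alpha(t)\pm r\bar\alpha'(t)$.

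Next, $\gamma_+(t)-\gamma_-(t)=2r\bar\alpha'(t)$ is spacelike, so by Lemma \ref{lem-chord}\eqref{item-chord} at most one of $\gamma_\pm(t)$ lies in $\ran\beta$; hence $\beta_*$ is a partial function, and I define $\epsilon(t)\in\{+1,-1\}$ by $\beta_*(t)=\gamma_{\epsilon(t)}(t)$. The heart of the argument is to show that $\epsilon$ is constant on $\dom\beta_*$ and that $\dom\beta_*$ is connected. For sign-constancy, suppose $\epsilon(t_1)\ne\epsilon(t_2)$ at some $t_1<t_2$ in $\dom\beta_*$. Since $\beta$ is timelike, the chord $\beta_*(t_2)-\beta_*(t_1)$ must be timelike (Lemma \ref{lem-chord}\eqref{item-chord}); expanding with $\beta_*(t_i)=\gamma_{\epsilon(t_i)}(t_i)$ and comparing with the fact that the two branches $\gamma_+,\gamma_-$ are uniformly spacelike-separated at every $t$, Lemma \ref{lem-twocurve} produces an intermediate $t\in(t_1,t_2)$ at which the two branches would have to intersect, which is impossible. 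For domain-connectedness, once $\epsilon$ is fixed on $\dom\beta_*$, Lemma \ref{lem-twocurve} applied to $\beta$ and $\gamma_\epsilon$ (a differentiable timelike curve, see below) shows that for any $t\in(t_1,t_2)$ the point $\gamma_\epsilon(t)$ lies in $\ran\beta$, and an \ax{IND}-supremum argument then fills the interval $[t_1,t_2]$.

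Once $\beta_*(t)=\alpha(t)+\epsilon r\bar\alpha'(t)$ on the connected set $\dom\beta_*$, differentiability is immediate and $\beta'_*(t)=\alpha'(t)+\epsilon r\bar\alpha''(t)$. Using $(\alpha'_\tau)^2-(\alpha'_x)^2=1$ together with $\alpha'\mort\alpha''$, a direct computation gives
\begin{equation*}
\bigl((\beta'_*)_\tau\bigr)^2-\bigl((\beta'_*)_x\bigr)^2=\bigl(1+\epsilon r(\alpha'_\tau\alpha''_x-\alpha'_x\alpha''_\tau)\bigr)^2,
\end{equation*}
which is non-negative and vanishes only in the regime $(\alpha(t)-\beta_*(t))\upp\alpha''(t)$ with $r\cdot a_\alpha(t)=1$; condition (iii) is designed precisely to exclude this, so $\beta'_*$ is everywhere timelike and $\beta_*$ is a timelike curve. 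The two iffs in \eqref{eq-muder} then follow by matching the sign $\epsilon$ with the parallelism $\alpha''\parallel\bar\alpha'$ from Lemma \ref{lem-accdir}\eqref{item-conacc}: in the case $(\beta_*(t)-\alpha(t))\upp\alpha''(t)$ (equivalently $\epsilon\bar\alpha'\upp\alpha''$) the formula $\beta'_*(t)=\alpha'(t)+r\bar\alpha''(t)$ holds, and the opposite orientation gives the minus sign.

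The main obstacle is the combined sign-constancy and domain-connectedness step. The intuitive picture --- that a jump between the two branches would create a spacelike chord in $\beta$ --- must be promoted to a precise \ax{IND}-Bolzano argument in the vertical plane, with Lemma \ref{lem-twocurve} the natural workhorse. Secondary care is needed in matching condition (iii) to the exact vanishing locus of $((\beta'_*)_\tau)^2-((\beta'_*)_x)^2$, and in verifying the parallelism signs in the derivative formula via Lemma \ref{lem-accdir}\eqref{item-conacc}.
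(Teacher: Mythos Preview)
Your overall architecture matches the paper's: reduce to $Plane(t,x)$ with $\alpha'\upp\vet$, $\alpha''\upp\vex$; identify the two candidate branches $\gamma_\pm(t)=\alpha(t)\pm r\bar\alpha'(t)$; show the sign is constant; show $\dom\beta_*$ is connected; differentiate and use (iii) for timelikeness. The derivative computation and the use of Lemma~\ref{lem-accdir}\eqref{item-conacc} to read off \eqref{eq-muder} are fine. But the two central steps are not carried out correctly.

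\textbf{Sign-constancy.} Lemma~\ref{lem-twocurve} should be applied to $\alpha$ and $\beta$, not to the branches $\gamma_\pm$. If $\epsilon(t_1)\neq\epsilon(t_2)$ then $\big(\beta_*(t_1)-\alpha(t_1)\big)\upp\big(\alpha(t_2)-\beta_*(t_2)\big)$ (both are spacelike multiples of $\bar\alpha'$ with the same sign), and since $\beta_*(t_i)\in\ran\beta$ and these are spacelike separations, Lemma~\ref{lem-twocurve} gives $\ran\alpha\cap\ran\beta\neq\emptyset$, contradicting $r>0$. Your phrase ``the two branches would have to intersect'' is not what the lemma delivers, and the branches never intersect anyway.

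\textbf{Domain-connectedness.} Here Lemma~\ref{lem-twocurve} is simply the wrong tool: its hypothesis is a direction flip, which you no longer have once $\epsilon$ is constant, and its conclusion is existence of \emph{one} crossing, not ``for any $t$''. The paper's route is to \emph{first} prove that $\gamma_\epsilon$ is a timelike curve (this is where (iii) enters, precisely when $\epsilon=-1$: one gets $\mu\big(\alpha'(t)-r\bar\alpha''(t)\big)=1-r\mu\big(\bar\alpha''(t)\big)>0$ because $r<-1/\mu(\alpha''(\tau))$), and \emph{then}, for $t\in(t_1,t_2)$, use $\beta_*(t_1)\ll\gamma_\epsilon(t)\ll\beta_*(t_2)$ together with \ax{IND}-Bolzano on the Minkowski inner product $\bar t\mapsto\big(\beta(\bar t)-\alpha(t)\big)\cdot_\mu\alpha'(t)$ to find $\bar t\in\dom\beta$ with $\big(\beta(\bar t)-\alpha(t)\big)\mort\alpha'(t)$. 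Your forward reference ``(a differentiable timelike curve, see below)'' therefore hides a genuine dependency: timelikeness of $\gamma_\epsilon$ is needed \emph{before} connectedness, not after. Reordering these two steps and replacing the Lemma~\ref{lem-twocurve} invocation by the Bolzano argument fixes the proof.
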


\begin{proof}
It is clear that $\beta_*$ is definable.

To see that $\beta_*$ is a function, let $\langle t,\vqq\rangle,\langle t,\vpp\rangle\in\beta_*$.
Then $(\vpp-\vqq)\mort\alpha'(t)$.
If $\vpp\neq\vqq$, they are timelike-separated by Lemma \ref{lem-chord} since $\vpp,\vqq\in\ran\beta$.
Thus since two timelike vectors cannot be Minkowski orthogonal, we have that $\vpp=\vqq$.
Hence $\beta_*$ is a function.

Without losing generality, we can assume that the vertical plane that contains $\ran \alpha\cup \ran \beta$ is $Plane(t,x)$, $\alpha'\upp\vet$ and $\alpha''\upp\vex$, see Lemmas \ref{lem-vmon} and \ref{lem-accdir}.

Since $\beta$ is at constant $r$ Minkowski distance from $\alpha$, 
\begin{equation}\label{eq-mupar}
\begin{split}
&\beta_*(t)=\alpha(t)+r\cdot\bar\alpha'(t) \text{ iff } \bar\alpha'(t)\upp \big(\beta_*(t)-\alpha(t)\big),\\ 
&\beta_*(t)=\alpha(t)-r\cdot\bar\alpha'(t) \text{ iff } \bar\alpha'(t)\upp \big(\alpha(t)-\beta_*(t)\big)
\end{split}
\end{equation}
if $t\in\dom\beta_*$.

Since $\beta$ is at constant $r>0$ Minkowski distance from $\alpha$, we have that $\ran\alpha\cap\ran\beta=\emptyset$.
Hence by Lemma \ref{lem-twocurve}, we have that the direction of $\beta_*(t)-\alpha(t)$ cannot change.
Thus it is always the same equation in \eqref{eq-mupar} that holds for $\beta_*$.

Since $\alpha$ is twice differentiable, so is $\bar\alpha$.
Thus both $\alpha+r\cdot\bar\alpha'$ and $\alpha-r\cdot\bar\alpha'$ are definable differentiable curves.

Now we will show that $\alpha+r\cdot\bar\alpha'$ is a timelike curve and if $\bar\alpha'(t)\upp \big(\alpha(t)-\beta_*(t)\big)$ for some $t\in\dom\beta_*$, then $\alpha-r\cdot\bar{\alpha}'$ is also a timelike curve.
It is clear that $(\alpha\pm r\cdot\bar\alpha')'=\alpha'\pm r\cdot\bar\alpha''$.
Let $t\in\dom\alpha$.
%Since $\ran \alpha\subseteq Plane(t,x)$ and $\alpha'_2$ is increasing, we have that $\alpha''\upp\vex$.
By (5) in Lemma \ref{lem-accdir}, we have that $\mu\big(\alpha'(t)+r\cdot\bar\alpha''(t)\big)=\mu\big(\alpha'(t)\big)+r\mu\big(\bar\alpha''(t)\big)$ and $\mu\big(\alpha'(t)-r\cdot\bar\alpha''(t)\big)=\mu\big(\alpha'(t)\big)-r\mu\big(\bar\alpha''(t)\big)$.
By Proposition \ref{prop-wellpar}, we have that $\mu\big(\alpha'(t)\big)=1$.
Thus $\mu\big((\alpha+ r\cdot\bar\alpha')'(t)\big)>0$.
Hence $\alpha+r\cdot\bar\alpha'$ is a timelike curve.
Since $\alpha'\upp\vet$ and $\alpha''\upp\vex$, we have that $\alpha''(t)\upp\bar{\alpha}'(t)$ by Lemma \ref{lem-accdir}.
Thus by assumption (iii) and the fact that $\beta$ is at constant $r$ Minkowski distance from $\alpha$, we have that 
$r<-1/\mu(\alpha''(\tau))$ for all $\tau\in\dom\alpha$ if $\bar\alpha'(t)\upp\big(\alpha(t)-\beta_*(t)\big)$ for some $t\in\dom\alpha$.
Since $\ran \alpha\subseteq Plane(t,x)$, we have that $\mu(\alpha''(t))=-\mu(\bar\alpha''(t))$.
Thus $\mu(\bar{\alpha}''(t))<1/r$.
Consequently, $\mu\big(\alpha'(t)- r\cdot\bar\alpha''(t)\big)>0$.
Hence $\alpha-r\cdot\bar\alpha'$ is also a timelike curve.

Here we only prove that $Dom\beta_*$ is connected when $\bar\alpha'(t)\upp\big(\alpha(t)-\beta_*(t)\big)$ for some $t\in\dom\beta_*$ because the proof in the other case is almost the same.
Let $t_1,t_2\in \dom\beta_*$, and let $t\in(t_1,t_2)$.
Then $t_1,t_2\in\dom\alpha$, and thus $t\in\dom\alpha$ since $\dom\alpha$ is connected.
Since $\alpha-r\cdot\bar\alpha'$ is a timelike curve and $\alpha'-r\cdot\bar\alpha''\upp\vet$, we have that 
\begin{equation}
\begin{split}
\beta_*(t_1)=\alpha(t_1)-r\cdot\bar{\alpha}'(t_1)&\ll\alpha(t)-r\cdot\bar\alpha'(t)\\
&\ll\alpha(t_2)-r\cdot\bar{\alpha}'(t_2)=\beta_*(t_2).
\end{split}
\end{equation}
Thus by \ax{IND}-Bolzano's theorem, there is a $\bar{t}\in\dom\beta$ such that $\big(\beta(\bar{t}\,)-\alpha(t)\big)\mort \alpha'(t)$.
Since $\beta$ is at constant $r$ Minkowski distance from $\alpha$, we have that $\beta(\bar{t}\,)=\alpha(t)-r\cdot\bar\alpha'(t)$.
Hence $t\in\dom\beta_*$, as desired.

Since $\beta_*$ agrees with one of the two timelike curves $\alpha+r\cdot\bar\alpha'$ and $\alpha-r\cdot\bar\alpha'$ on the connected set $\dom\beta_*$, we have that $\beta_*$ is also a timelike curve.
Since $\alpha''\upp\vex$ and $\alpha\upp\vet$ we have that $\alpha''\upp\bar{\alpha}'$.
Thus by derivation of the equations of \eqref{eq-mupar}, we have that the derivate of $\beta_*$ is what was stated in \eqref{eq-muder}.
\end{proof}

\begin{lem}
\label{lem-time}
Assume \ax{AxSelf^-}, \ax{AxPh_0}, and let $m\in\IOb$.
Let $k\in \Ob$.
Let $x,y\in\dom lc^k_m$.
Then
\begin{equation}\label{eq-time}
\time_k\big(ev_m\big(lc^k_m(x)\big),ev_m\big(lc^k_m(y)\big)\big)=|x-y|.
\end{equation}
\end{lem}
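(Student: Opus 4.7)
The plan is to unpack the two definitions and let \ax{AxSelf^-} do the work of pinning down the time-coordinate uniquely. First, under \ax{AxSelf^-}, \ax{AxPh_0} and $m\in\IOb$, item \eqref{item-trfunct} of Proposition \ref{prop-tr} guarantees that $lc^k_m$ is a function, so $\vpp_1\leteq lc^k_m(x)$ and $\vpp_2\leteq lc^k_m(y)$ are well-defined elements of $Cd_m$. Set $e_i\leteq ev_m(\vpp_i)$ for $i=1,2$. By the very definition of $lc^k_m$, the membership $x\in\dom lc^k_m$ yields some $\vqq_1\in Cd_k$ with $k\in ev_k(\vqq_1)=e_1$ and $(q_1)_\tau=x$, and analogously some $\vqq_2\in Cd_k$ with $k\in ev_k(\vqq_2)=e_2$ and $(q_2)_\tau=y$.

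Next I would argue that these $\vqq_i$ must in fact be $Crd_k(e_i)$. Writing $\time_k(e_1,e_2)$ at all presupposes, by Convention \ref{conv-crd}, that each $e_i$ has a unique $\vqq\in Cd_k$ with $ev_k(\vqq)=e_i$. Since $k\in e_i$, any such $\vqq$ satisfies $k\in ev_k(\vqq)$, so by \ax{AxSelf^-} we get $\vqq_\sigma=\vo$. Thus the unique $Crd_k(e_i)$ shares its space-component with $\vqq_i$; since $\vqq_i$ itself is one such preimage and the preimage is unique, we must have $Crd_k(e_i)=\vqq_i$. In particular $Crd_k(e_1)_\tau=x$ and $Crd_k(e_2)_\tau=y$.

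Finally, plugging these into the definition of $\time_k$ gives
\begin{equation}
\time_k(e_1,e_2)=|Crd_k(e_1)_\tau-Crd_k(e_2)_\tau|=|x-y|,
\end{equation}
which is precisely equation \eqref{eq-time}. There is no real obstacle here: the whole content of the lemma is that the parameter of the life-curve of $k$ (as seen from $m$) is exactly $k$'s own time-coordinate, and this is immediate from the definition of $lc^k_m$ once \ax{AxSelf^-} is used to eliminate any ambiguity in the spatial coordinate of $k$ in its own world-view. The step that needs most care is invoking Convention \ref{conv-crd} correctly, since \ax{AxSelf^-} alone guarantees only that all preimages lie on the time-axis and not that the time-coordinate itself is unique; that uniqueness is, however, built into the very act of writing $\time_k(e_1,e_2)$.
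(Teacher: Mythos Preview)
Your proof is correct and follows essentially the same route as the paper's: both invoke Proposition~\ref{prop-tr}\eqref{item-trfunct} to ensure $lc^k_m$ is a function, unpack the definition of $lc^k_m$ to locate witnesses $\vqq_i\in Cd_k$ with the correct time-coordinates, and then read off $\time_k$ from the definition. If anything, you are more careful than the paper: where the paper simply asserts that \ax{AxSelf^-} yields unique coordinates in $Cd_k$, you correctly observe that \ax{AxSelf^-} only pins down the space-component and that the uniqueness of the time-coordinate is supplied by Convention~\ref{conv-crd} implicit in writing $\time_k(e_1,e_2)$.
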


\begin{proof}
By \eqref{item-trfunct} in Proposition \ref{prop-tr}, $lc^k_m$ is a function.
Thus $lc^k_m(x)$ and $lc^k_m(y)$ are meaningful.
We have that $k\in ev_m\big(lc^k_m(x)\big)\bigcap ev_m\big(lc^k_m(y)\big)$ by the definition of $lc^k_m$.
Thus by \ax{AxSelf^-}, both events $ev_m\big(lc^k_m(x)\big)$ and $ev_m\big(lc^k_m(y)\big)$ have unique coordinates in $Cd_k$.
Thus the left hand side of equation \eqref{eq-time} is defined and equal with 
\begin{equation}
\left|Crd_k\big(ev_m\big(lc^k_m(x)\big)\big)_\tau-Crd_k\big(ev_m\big(lc^k_m(y)\big)\big)_\tau \right|
\end{equation} 
by definition.
However, by the definition of $lc^k_m$, $Crd_k\big(ev_m\big(lc^k_m(x)\big)\big)_\tau=x$ and $Crd_k\big(ev_m\big(lc^k_m(y)\big)\big)_\tau=y$.
Hence equation \eqref{eq-time} holds.
\end{proof} 

%%%%%%%%%%%%%%%%%%%%%%%%%%%%%%%%% Proof of the main theorems:

\begin{figure}[h!btp]
\small
\begin{center}
\psfrag{g}[b][b]{$\gamma=lc^c_m$}
\psfrag{b}[l][l]{$\beta=\beta_*=lc^b_m$}
\psfrag{b1}[bl][bl]{$\beta'(t)$}
\psfrag{bt}[l][l]{$\beta(t)$}
\psfrag{g1}[tl][tl]{$\gamma'(\bar{t}\,)$}
\psfrag{g1*}[tl][tl]{$\gamma_*'(t)$}
\psfrag{gt}[tl][tl]{$\gamma(\bar{t}\,)$}
\includegraphics[keepaspectratio, width=\textwidth]{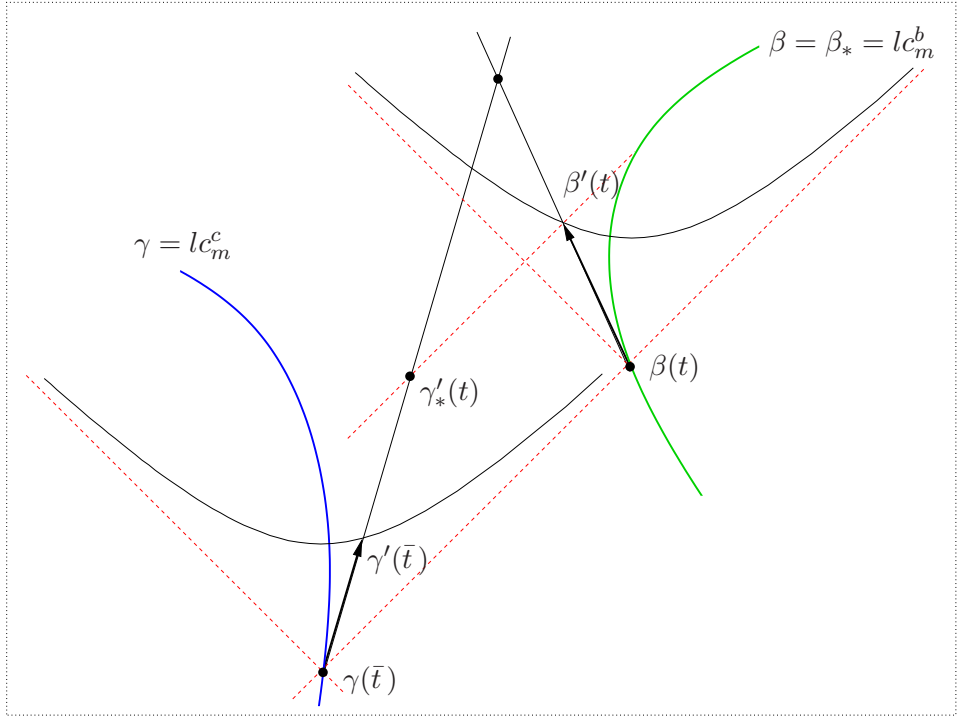}
\caption{\label{fig-thmph} Illustration for the proof of Theorem \ref{thm-ph}.}
\end{center}
\end{figure}

\begin{proof}[\colorbox{proofbgcolor}{\textcolor{proofcolor}{proof of Theorem \ref{thm-ph}}}]
To prove Item (1), let $b$ and $c$ be coplanar observers, and let $e_b$, $\bar{e}_b$, $e_c$ and $\bar{e}_c$ be such events that $b\in e_b\cap\bar{e}_b$, $c\in e_c\cap \bar{e}_c$ and $e_b\simph_b e_c$, $\bar{e}_b\simph_b\bar{e}_c$.
Suppose that $c$ is approaching $b$ as seen by $b$ by photons.
We have to prove that $\time_b(e_b,\bar{e}_b)<\time_c(e_c,\bar{e}_c)$.
Since $c$ and $b$ are coplanar, there is an $m\in\IOb$ such that $wl_m(c)\cup wl_m(b)$ is a subset of a vertical plane.
Let $m$ be such an inertial observer.
We are going to apply Lemma \ref{lem-main}.
To do so, let $\beta=\beta_*=lc^b_m$, $\gamma=lc^c_m$, and let $\gamma_*$ be the photon reparametrization of $\gamma$ according to $\beta$.
By Proposition \ref{prop-wellpar}, $\beta=\beta_*$ and $\gamma$ are definable and well-parametrized timelike curves.
Without losing generality, we can assume that $\beta'\upp\vet$ and $\gamma'\upp\vet$.
It is easy to see that $wl_m(b)\cap wl_m(c)=\emptyset$ since $c$ is approaching $b$ as seen by $b$.
Thus $\ran\beta\cap\ran\gamma=\emptyset$ since $\ran\beta=wl_m(b)$ and $\ran\gamma=wl_m(c)$ by Item \eqref{item-rantr} in Proposition \ref{prop-tr}.
Thus $\gamma_*$ is also a definable timelike curve by Proposition \ref{prop-ph}.
Requirement (i) in Lemma \ref{lem-main} is clear by the definition of the photon reparametrization.
It is also clear that there are $x_\beta,y_\beta\in\dom\beta$, $x_\gamma,y_\gamma\in \dom\gamma$ and $x,y\in\dom\beta_*\cap\dom\gamma_*$ such that 
$\beta(x_\beta)=Crd_m(e_b)=\beta_*(x)$, $\beta(y_\beta)=Crd_m(\bar{e}_b)=\beta_*(y)$ and 
$\gamma(x_\gamma)=Crd_m(e_c)=\gamma_*(x)$, $\gamma(y_\gamma)=Crd_m(\bar{e}_c)=\gamma_*(y)$.
Hence requirement (ii) in Lemma \ref{lem-main} also holds.
Since $c$ is approaching $b$ as seen by $b$ by photons, the tangent lines of $\beta_*$ and $\gamma_*$ at any $t\in(x,y)$ intersect in the future of $\beta_*(t)$ and $\gamma_*(t)$.
Thus $\mu\big(\beta'_*(t)\big)=1<\mu\big(\gamma'_*(t)\big)$ for all $t\in (x,y)$ by Proposition \ref{prop-ph}, see Figure \ref{fig-thmph};
and this is requirement (iii) in Lemma \ref{lem-main}.
Hence by Lemma \ref{lem-main}, we have that $|x_\beta-y_\beta|<|x_\gamma-y_\gamma|$.
Consequently, $\time_b(e_b,\bar{e}_b)<\time_c(e_c,\bar{e}_c)$ since by Lemma \ref{lem-time}, $\time_i(e_i,\bar{e}_i)=|x_i-y_i|$ for all $i\in\setopen b,c\setclose$.
So Item (1) is proved.

The proof of (2) is similar.
Hence it is left to the reader.
\end{proof}

\begin{figure}[h!btp]
\small
\begin{center} 
\psfrag{a}[l][l]{$\alpha$}
\psfrag{b}[l][l]{$\beta$}
\psfrag{c}[l][l]{$\gamma$}
\psfrag{at}[l][l]{$\alpha(t)$}
\psfrag{at+R}[r][r]{$\alpha(t+R)$}
\psfrag{at-R}[r][r]{$\alpha(t-R)$}
\psfrag{at+r}[r][r]{$\alpha(t+r)$}
\psfrag{at-r}[r][r]{$\alpha'(t-r)$}
\psfrag{a1t+R}[b][b]{$\alpha'(t+R)$}
\psfrag{a1t-R}[b][b]{$\alpha'(t-R)$}
\psfrag{a1t+r}[b][b]{$\alpha'(t+r)$}
\psfrag{a1t-r}[b][b]{$\alpha'(t-r)$}
\psfrag{a1t+rtl}[t][t]{$\alpha'(t+r)$}
\psfrag{a1t-rtr}[t][t]{$\alpha'(t-r)$}
\psfrag{bt}[l][l]{$\beta_*(t)$}
\psfrag{btr}[r][r]{$\beta_*(t)$}
\psfrag{atr}[r][r]{$\alpha(t)$}
\psfrag{ct}[l][l]{$\gamma_*(t)$}
\psfrag{a1t}[b][b]{$\alpha'(t)$}
\psfrag{b1t}[l][l]{$\beta'_*(t)$}
\psfrag{c1t}[l][l]{$\gamma'_*(t)$}
\psfrag{text}[l][l]{$\alpha_2(x)<\alpha_2(y) \text{ iff } x<y$}
\includegraphics[keepaspectratio, width=\textwidth]{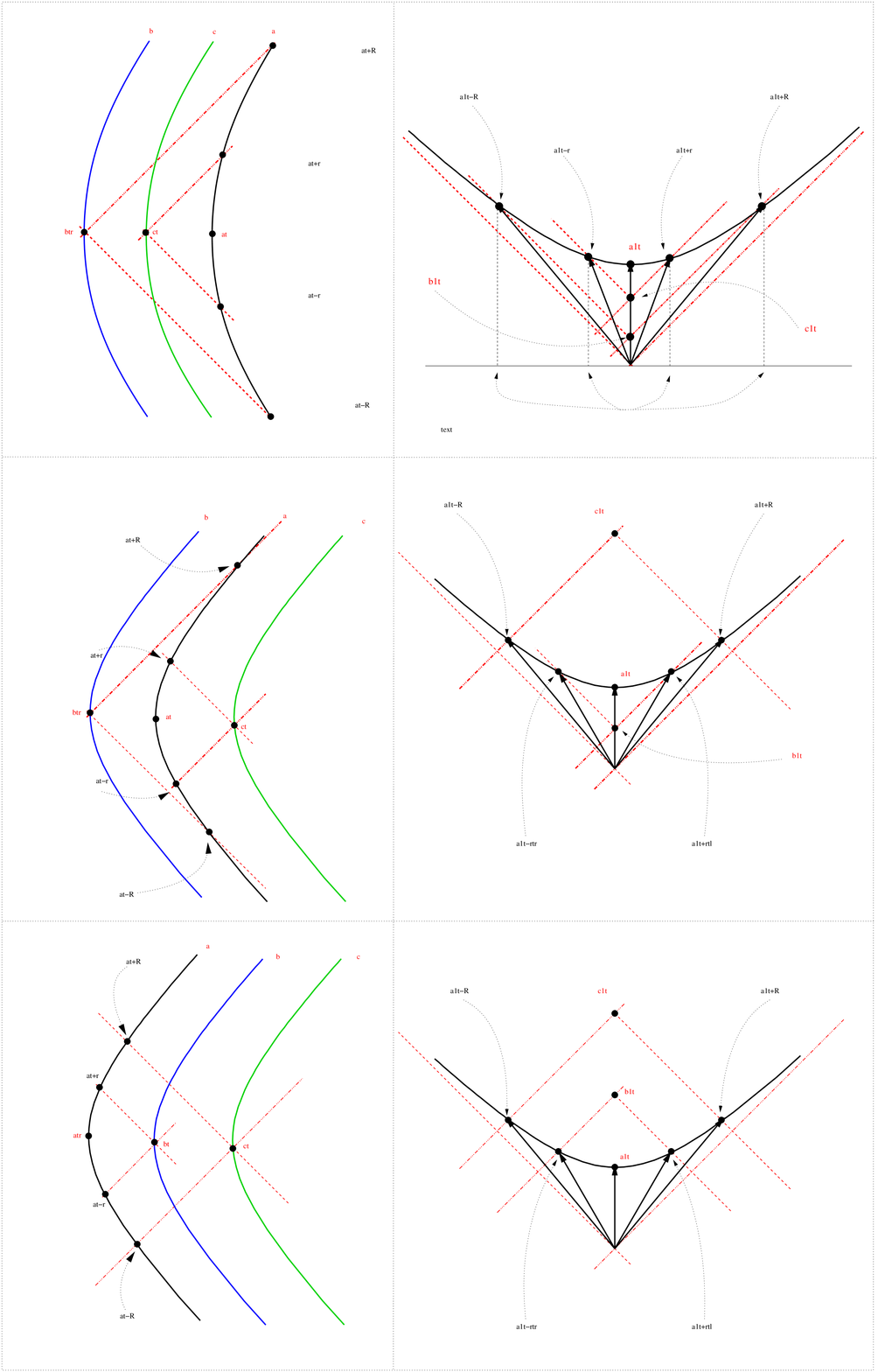}
\caption{\label{fig-radthm} Illustration for the proof of Item (1) in Theorem \ref{thm-rad}
verifying requirement (iii) in Lemma \ref{lem-main}.}
\end{center}
\end{figure}

\begin{proof}[\colorbox{proofbgcolor}{\textcolor{proofcolor}{proof of Theorem \ref{thm-rad}}}]
To prove Item (1), let $\rship$ be a radar spaceship such that $k$ is positively accelerated and the direction of the spaceship is the same as that of the acceleration of $k$.
Let $e_b$, $\bar{e}_b$, $e_c$, $\bar{e}_c$ be such events that $b\in e_b\cap\bar{e}_b$, $c\in e_c\cap \bar{e}_c$ and $e_b\simrad_k e_c$, $\bar{e}_b\simrad_k\bar{e}_c$.
To prove Item (1), we have to prove that $\time_b(e_b,\bar{e}_b)<\time_c(e_c,\bar{e}_c)$.
Since $\rship$ is a spaceship, there is an $m\in\IOb$ such that $wl_m(b)\cup wl_m(k)\cup wl_m(c)$ is a subset of a vertical plane.
Let $m$ be such an inertial observer.
Without losing generality, we can assume that this plane is $Plane(t,x)$.
We are going to apply Lemma \ref{lem-main}.
To do so, let $\beta=lc^b_m$, $\gamma=lc^c_m$ and $\alpha=lc^k_m$;
and let $\beta_*$ and $\gamma_*$ be the radar reparametrization of $\beta$ and $\gamma$ according to $\alpha$, respectively.
By Proposition \ref{prop-wellpar}, $\beta$ and $\gamma$ are definable and well-parametrized timelike curves.
By Lemmas \ref{lem-vmon} and \ref{lem-accdir}, we can assume that $\alpha'_2$ is increasing and $\alpha'\upp\vet$.
By Proposition \ref{prop-ph}, $\beta_*$ and $\gamma_*$ are definable timelike curves since the photon sum of any two timelike vectors of $\ran\alpha'$ is also a timelike one.
Requirement (i) in Lemma \ref{lem-main} is clear by the definition of the radar reparametrization.
It is also clear that there are $x_\beta,y_\beta\in\dom\beta$, $x_\gamma,y_\gamma\in \dom\gamma$ and $x,y\in\dom\beta_*\cap\dom\gamma_*$ such that 
$\beta(x_\beta)=Crd_m(e_b)=\beta_*(x)$, $\beta(y_\beta)=Crd_m(\bar{e}_b)=\beta_*(y)$ and 
$\gamma(x_\gamma)=Crd_m(e_c)=\gamma_*(x)$, $\gamma(y_\gamma)=Crd_m(\bar{e}_c)=\gamma_*(y)$.
Hence requirement (ii) in Lemma \ref{lem-main} also holds.
Since the direction of $\rship$ is the same as that of the acceleration of $k$, there are only three possible orders of the observers in the spaceship.
All these three cases are illustrated by Figure \ref{fig-radthm}.
By Proposition \ref{prop-rad}, it is easy to see that $\mu\big(\beta'_*(t)\big)<\mu\big(\gamma'_*(t)\big)$ for all $t\in (x,y)$;
and this is requirement (iii) in Lemma \ref{lem-main}.
Hence by Lemma \ref{lem-main}, $|x_\beta-y_\beta|<|x_\gamma-y_\gamma|$.
Thus $\time_b(e_b,\bar{e}_b)<\time_c(e_c,\bar{e}_c)$ since by Lemma \ref{lem-time}, $\time_i(e_i,\bar{e}_i)=|x_i-y_i|$ for all $i\in\setopen b,c\setclose$;
and this is what we wanted to prove.

\begin{figure}[h!btp]
\small
\begin{center} 
\psfrag{a}[bl][b]{$\alpha$}
\psfrag{b}[bl][bl]{$\beta$}
\psfrag{c}[bl][bl]{$\gamma$}
\psfrag{at}[tr][tr]{$\alpha(t)$}
\psfrag{at-2R}[tl][tl]{$\alpha(t-2R)$}
\psfrag{at-2r}[tl][tl]{$\alpha(t-2r)$}
\psfrag{a1t-2R}[tr][tl]{$\alpha'(t-2R)$}
\psfrag{a1t-2r}[tr][tl]{$\alpha'(t-2r)$}
\psfrag{b*t}[tr][tr]{$\beta_*(t)$}
\psfrag{c*t}[tl][tl]{$\gamma_*(t)$}
\psfrag{b1*t}[tr][tl]{$\beta'_*(t)$}
\psfrag{c1*t}[tl][tl]{$\gamma'_*(t)$}
\psfrag{a1t}[tl][tl]{$\alpha'(t)$}
\psfrag{b*1t}[tl][tl]{$\beta'_*(t)$}
\psfrag{c*1t}[tl][tl]{$\gamma'_*(t)$}
\includegraphics[keepaspectratio, width=\textwidth]{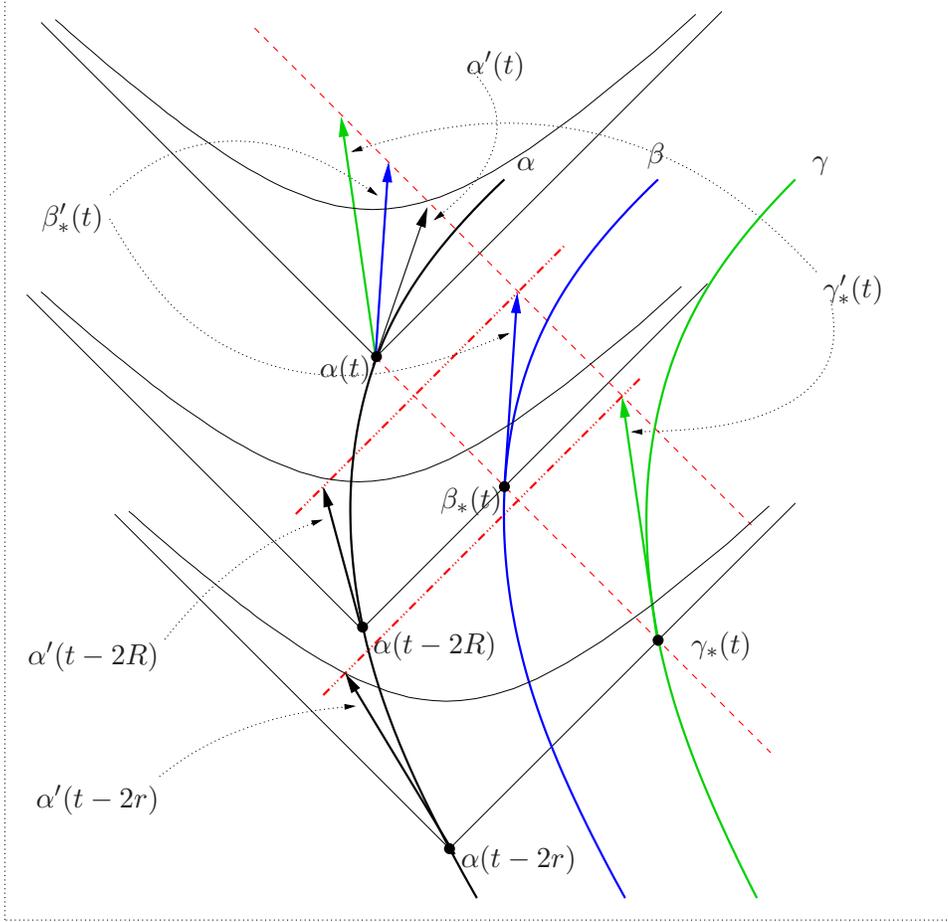}
\caption{\label{fig-radthmph} Illustration for the proof of Item (2) in Theorem \ref{thm-rad}
verifying requirement (iii) in Lemma \ref{lem-main}.}
\end{center}
\end{figure}

To prove Item (2), there are many cases we should consider resulting from which order is taken by the observers in the spaceship, and which observer is watching the other two.
The proof in all the cases is based on the very same ideas and lemmas as the proof of Item (1).
The only difference is that we should use photon simultaneity and photon reparametrization instead of radar ones, and we should use Proposition \ref{prop-ph} (and Lemma \ref{lem-vmon}) when verifying requirement (iii) in Lemma \ref{lem-main}.
In Figure \ref{fig-radthmph}, we illustrate the proof of requirement (iii) in Lemma \ref{lem-main} in one of the many cases.
In the other cases, this part of the proof can also be attained by means similar figures without any difficulty.
\end{proof}

\begin{figure}[h!btp]
\small
\begin{center} 
\psfrag{a}[bl][bl]{$\alpha$}
\psfrag{b}[bl][bl]{$\beta$}
\psfrag{c}[bl][bl]{$\gamma$}
\psfrag{aa}[tl][tl]{(a)}
\psfrag{bb}[tl][tl]{(b)}
\psfrag{cc}[tl][tl]{(c)}
\psfrag{dd}[tl][tl]{(d)}
\psfrag{a1}[b][b]{$\alpha'(t)$}
\psfrag{a1t}[tl][tl]{$\alpha'(t)$}
\psfrag{at}[tl][tl]{$\alpha(t)$}
\psfrag{a11}[tl][tl]{$\alpha''(t)$}
\psfrag{b*t}[tr][tl]{$\beta_*(t)$}
\psfrag{c*t}[bl][bl]{$\gamma_*(t)$}
\psfrag{b1*t}[tr][tl]{$\beta'_*(t)$}
\psfrag{c1*t}[tl][tl]{$\gamma'_*(t)$}
\psfrag{b*1}[bl][bl]{$\beta'_*(t)$}
\psfrag{b*1r}[br][br]{$\beta'_*(t)$}
\psfrag{c*1}[bl][bl]{$\gamma'_*(t)$}
\psfrag{r}[bl][bl]{$r$}
\psfrag{R}[tl][tl]{$R$}
\psfrag{dc}[bl][bl]{$r\cdot\bar\alpha''(t)$}
\psfrag{db}[bl][bl]{$R\cdot\bar\alpha''(t)$}
\includegraphics[keepaspectratio, width=\textwidth]{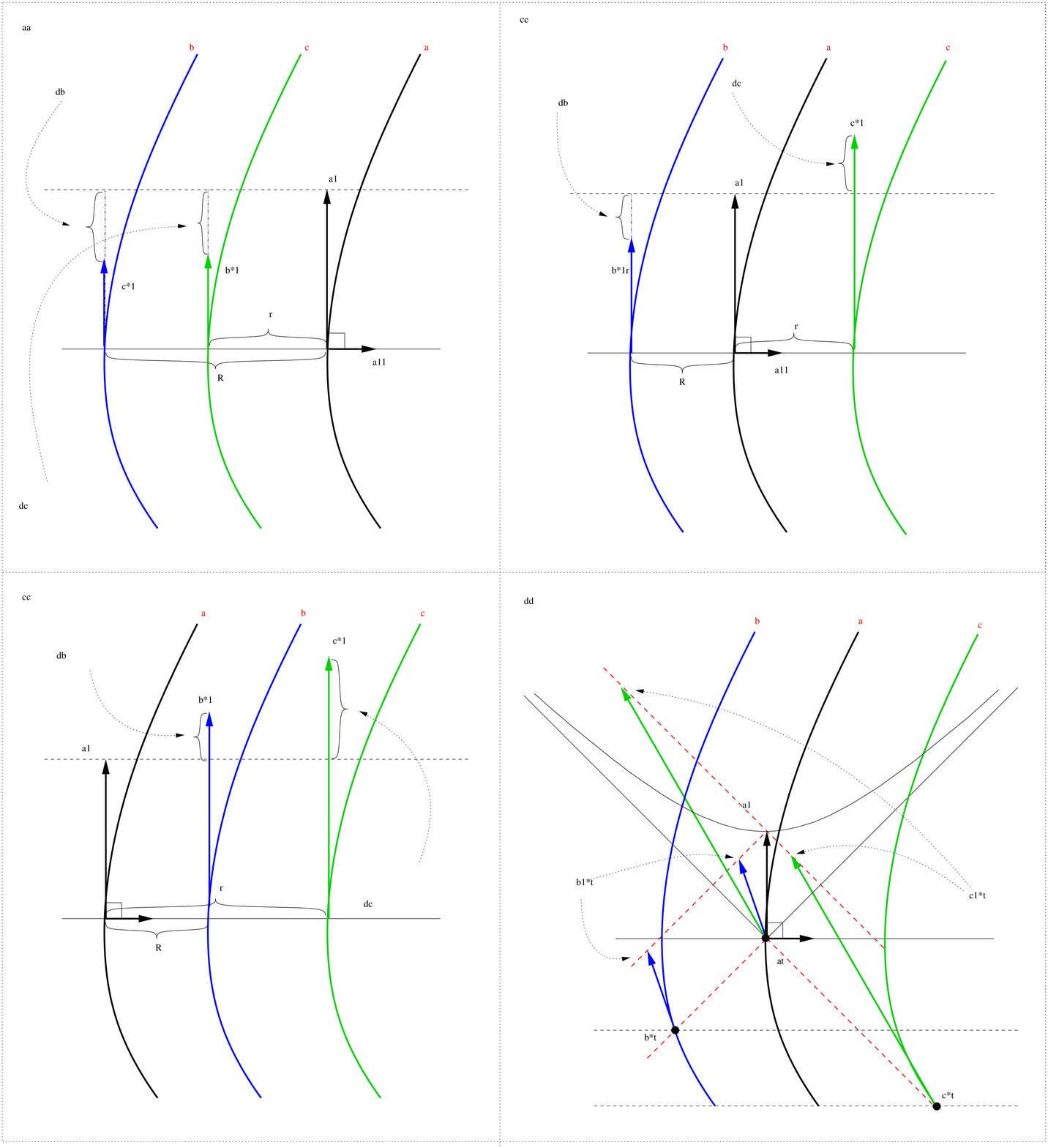}
\caption{\label{fig-minkthm} Illustration for the proof of Theorem \ref{thm-mu}
verifying requirement (iii) in Lemma \ref{lem-main}.}
\end{center}
\end{figure}

\begin{proof}[\colorbox{proofbgcolor}{\textcolor{proofcolor}{on the proof of Theorem \ref{thm-mu}}}]
The proof of this Theorem is based on the very same ideas and lemmas as the proof of Theorem \ref{thm-rad}.
The only difference is that we should use Minkowski simultaneity and Minkowski reparametrization instead of radar ones, and in the proof of Item (1) we should use Proposition \ref{prop-mink} instead of Proposition \ref{prop-rad} when verifying requirement (iii) in Lemma \ref{lem-main}.
In the proof of Item (1) of this Theorem, we face the same three cases as in the proof of Item (1) in Theorem \ref{thm-rad}.
By (a), (b) and (c) of Figure \ref{fig-minkthm}, we illustrate the proof of requirement (iii) in Lemma \ref{lem-main} in this three cases.
Similarly, in the proof of Item (2) of this Theorem, we face the same large number of cases as in the proof of Item (2) in Theorem \ref{thm-rad}.
By (d) of Figure \ref{fig-minkthm}, we illustrate the proof of requirement (iii) in Lemma \ref{lem-main} in one of these many cases.
We do not go into more details here because we think that the reader can easily put the proof together with the help of the hints above.
\end{proof}

\begin{figure}[h!btp]
\small
\begin{center}
\psfrag{be}[l][tl]{$\beta$}
\psfrag{ga}[l][tl]{$\gamma$}
\psfrag{b}[l][tl]{$b$}
\psfrag{c}[l][tl]{$c$}
\psfrag{o}[r][r]{$\vo$}
\psfrag{p}[tr][tr]{$\vp$}
\psfrag{q}[tr][tr]{$\vq$}
\psfrag{p1}[br][br]{$\vpp'$}
\psfrag{q1}[br][br]{$\vqq'$}
\psfrag{eq}[tl][tl]{$=$}
\psfrag{e}[r][r]{$e$}
\psfrag{ph}[l][bl]{$ph$}
\psfrag{eb}[tl][tl]{$e_b$}
\psfrag{ec}[tl][tl]{$e_c$}
\includegraphics[keepaspectratio, width=\textwidth]{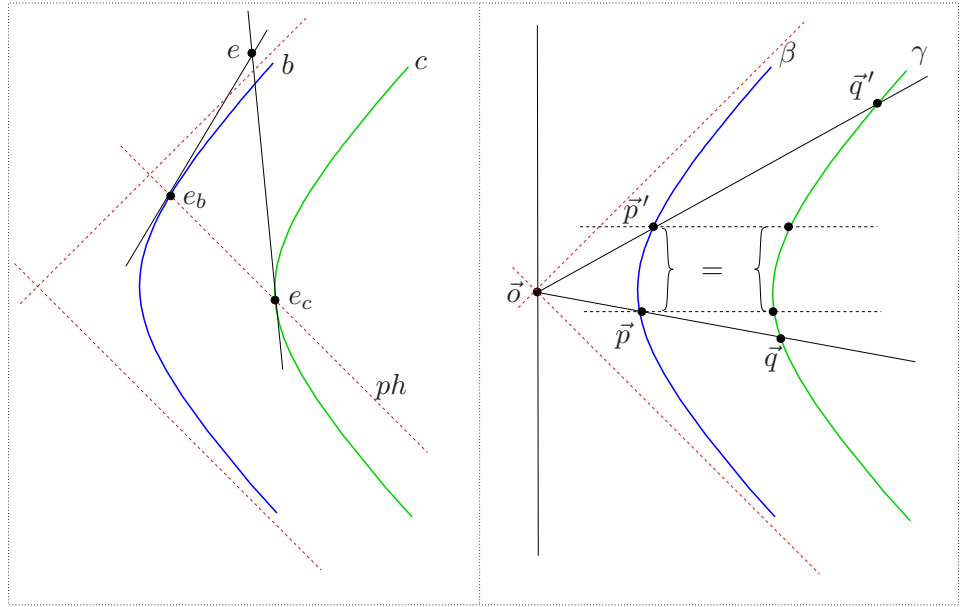}
\caption{\label{fig-unifacc} Illustration for the proof of Theorem \ref{thm-ob}.}
\end{center}
\end{figure}

\begin{proof}[\colorbox{proofbgcolor}{\textcolor{proofcolor}{on the proof of Theorem \ref{thm-ob}}}]
Let $\Q$ be the field of real numbers.
Let $\beta(t)=\big(sh(t),ch(t),0,\ldots,0\big)$ and $\gamma(t)=\big(sh(t),ch(t)+1,0\ldots,0\big)$ where $sh$ and $ch$ are the hyperbolic sine and cosine functions.
Since both $\beta$ and $\gamma$ are smooth and well-parametrized timelike curves, we can easily build a model of \ax{AccRel} such that $lc^b_m=\beta$ and $lc^c_m=\alpha$ for some $m\in\IOb$.
By a straightforward calculation, we can show that $\mu\big(\beta''(t)\big)=\mu\big(\gamma''(t)\big)=-1$ for all $t\in\Q$.
Hence $a_b(t)=a_c(t)=1$ for all $t\in\Q$.

It is easy to show that $c$ is approaching $b$ as seen by $b$ by photons, see (a) of Figure \ref{fig-unifacc}.
Thus by Theorem \ref{thm-ph}, the clock of $b$ runs slower
than the clock of $c$ as seen by $b$ by photons.
It is not difficult to show that, $ev_m(\vpp)\simrad_b ev_m(\vqq)$ iff $ev_m(\vpp)\simmu_b ev_m(\vqq)$ iff $\vo\in line(\vpp,\vqq)$.
Thus the clock of $b$ runs slower than the clock of $c$ as seen by $b$ by both radar simultaneity and Minkowski simultaneity, see (b) of Figure \ref{fig-unifacc}.
\end{proof}

%%%%%%%%%%%%%%%%%%%%%%%%%%%%%%%%%%%%%%%%%%
\section{Lemmas from analysis generalized for FOL}
%%%%%%%%%%%%%%%%%%%%%%%%%%%%%%%%%%%%%%%%%%
\label{lem-sec}

For the sake of completeness, here we list some of the basic definitions and theorems of real analysis generalised for ordered fields and {\em definable} functions within FOL.
For more details and proofs, see \cite{Twp,mythes}.

We call a function $f:\Q\parrow\Q^n$ \df{continuous at} $t_0\in\dom f$ iff
\begin{equation}
\begin{split}
\forall \varepsilon\in\Q^+\;\exists\delta\in\Q^+\enskip &\forall t\in\dom f\\
|t-t_0|&<\delta \then |f(t)-f(t_0)|<\varepsilon.
\end{split}
\end{equation}
We call function $f$ \df{monotonous} if it preserves or reverses the relation $<$, that is, $f(x)<f(y)$ [or $f(x)>f(y)$] for all $x,y\in \dom f$ if $x<y$.

\begin{lem}
\label{lem-moncont}
If $f:\Q\parrow\Q$ is monotonous and $\ran f$ is connected, $f$ is continuous.\qed
\end{lem}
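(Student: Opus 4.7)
I would prove this by contradiction, exploiting the fact that a monotonous function fails to be continuous only by jumping, and that a jump creates a gap in the range.

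Without loss of generality I assume $f$ is strictly increasing (the strictly decreasing case is symmetric, obtained by replacing $f$ by $-f$). Suppose, for contradiction, that $f$ is not continuous at some point $t_0\in\dom f$. Negating the definition of continuity at $t_0$, there is some $\varepsilon\in\Q^+$ such that for every $\delta\in\Q^+$ there exists $t_\delta\in\dom f$ with $|t_\delta-t_0|<\delta$ and $|f(t_\delta)-f(t_0)|\ge\varepsilon$. Because $f$ is increasing, $t_\delta>t_0$ gives $f(t_\delta)\ge f(t_0)+\varepsilon$, while $t_\delta<t_0$ gives $f(t_\delta)\le f(t_0)-\varepsilon$. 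Thus at least one of the following alternatives must hold for arbitrarily small $\delta$: either there are witnesses on the right of $t_0$ with $f(t_\delta)\ge f(t_0)+\varepsilon$, or witnesses on the left with $f(t_\delta)\le f(t_0)-\varepsilon$. I will treat the right-hand case; the left-hand case is symmetric.

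Set $y\leteq f(t_0)+\varepsilon/2$. The plan is to show that $y\not\in\ran f$ while $y$ lies strictly between two values of $\ran f$, contradicting connectedness of $\ran f$. On the one hand, $f(t_0)\in\ran f$ and (taking $\delta=1$ to get a witness $t_1>t_0$ in $\dom f$ with $f(t_1)\ge f(t_0)+\varepsilon$) we have $f(t_1)\in\ran f$ with $f(t_0)<y<f(t_0)+\varepsilon\le f(t_1)$. On the other hand, suppose $y=f(s)$ for some $s\in\dom f$. Monotonicity rules out $s\le t_0$, since that would force $f(s)\le f(t_0)<y$. Hence $s>t_0$. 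But then applying the hypothesis to $\delta=s-t_0$ yields $t_\delta\in\dom f$ with $t_0<t_\delta<s$ and $f(t_\delta)\ge f(t_0)+\varepsilon>y=f(s)$, contradicting the fact that $f$ is strictly increasing. So $y\not\in\ran f$, which contradicts connectedness of $\ran f$ and completes the proof.

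The main obstacle is just bookkeeping: checking that the WLOG reduction (increasing versus decreasing, right-side versus left-side discontinuity) is clean, and that the construction of $t_1$ and of the ``midway'' value $y$ uses only the ordered field structure, since this lemma is invoked before \ax{IND} is in force and no completeness or supremum argument is available. Once one is careful about strict versus non-strict inequalities in the monotonicity comparisons, the contradiction drops out immediately from the definition of connectedness given in the paper (namely, $(x,y)\subseteq\ran f$ for any $x,y\in\ran f$).
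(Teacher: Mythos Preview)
Your proof is correct. The paper itself gives no proof of this lemma (it is stated with a bare \qed), so there is nothing to compare against; your argument is the standard one and, as you note, uses only the ordered-field structure and the paper's definition of connectedness, not \ax{IND}.

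One step deserves a sentence of justification that you left implicit: the dichotomy ``at least one of the two alternatives (right-side or left-side witnesses) must hold for arbitrarily small $\delta$''. This is not an instance of $\forall\delta\,(P(\delta)\lor Q(\delta))\Rightarrow(\forall\delta\,P(\delta))\lor(\forall\delta\,Q(\delta))$, which is invalid in general; it works here only because a witness within $\delta$ is automatically a witness within any larger $\delta'$. Concretely: if the right-hand alternative fails, fix $\delta_R$ with no right-side witness within $\delta_R$; then for any $\delta$ the guaranteed witness within $\min(\delta,\delta_R)$ must be left-side and lies within $\delta$. With that line added, your later appeals to ``the hypothesis'' at the specific values $\delta=1$ and $\delta=s-t_0$ are fully justified.
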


\begin{lem}
\label{lem-injcont}
Assume \ax{IND}.
Let $f:\Q\parrow\Q$ be definable and continuous such that $\dom f$ is connected. Then
\begin{enumerate}
\item $\ran f$ is also connected.
\item If $f$ is injective, $f$ is monotonous too.
Moreover, $f^{-1}$ is also a definable monotonous and continuous function.
\end{enumerate}
\end{lem}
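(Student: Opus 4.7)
The plan is to prove the two items in order, with item (1) serving as the first-order Bolzano/intermediate-value theorem that powers item (2).

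For item (1), I want to show $\ran f$ is connected, that is, if $y_1 < c < y_2$ and $y_1, y_2 \in \ran f$, then $c \in \ran f$. Pick $x_1, x_2 \in \dom f$ with $f(x_i) = y_i$; without loss of generality assume $x_1 < x_2$, so $[x_1, x_2] \subseteq \dom f$ by connectedness of the domain. Consider the set
\begin{equation}
H \leteq \{\, x \in [x_1, x_2] : f(x) < c \,\},
\end{equation}
which is definable (since $f$ is definable), nonempty (contains $x_1$ if $y_1 < y_2$; otherwise swap roles), and bounded above by $x_2$. By \ax{IND} the supremum $s \leteq \sup H$ exists, and $s \in [x_1, x_2] \subseteq \dom f$. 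By continuity of $f$ at $s$ a standard $\varepsilon$-$\delta$ argument on both sides shows that neither $f(s) < c$ nor $f(s) > c$ is possible (the first contradicts $s$ being an upper bound, the second contradicts $s$ being the \emph{least} upper bound), so $f(s) = c$, giving $c \in \ran f$.

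For item (2), assume $f$ is injective and continuous on the connected set $\dom f$. I would first prove monotonicity by the classical three-point argument. Fix any two distinct points $a < b$ in $\dom f$; by injectivity either $f(a) < f(b)$ or $f(a) > f(b)$. I claim this sign is constant: suppose for contradiction that there exist $a_1 < b_1$ and $a_2 < b_2$ in $\dom f$ with $f(a_1) < f(b_1)$ but $f(a_2) > f(b_2)$. Moving one endpoint at a time along the connected domain produces three points $x_1 < x_2 < x_3$ at which the order of $f$-values flips, i.e.\ either $f(x_2) > \max\{f(x_1), f(x_3)\}$ or $f(x_2) < \min\{f(x_1), f(x_3)\}$. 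Pick any value $c$ strictly between $f(x_2)$ and $\max\{f(x_1), f(x_3)\}$ (respectively the min); applying item (1) to $f$ restricted to $[x_1, x_2]$ and to $[x_2, x_3]$ yields two distinct preimages of $c$, contradicting injectivity.

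Once $f$ is monotonous, the rest is routine. The inverse $f^{-1}$ is definable because its graph is first-order definable from the graph of $f$ by swapping the coordinates; it is automatically a function (by injectivity of $f$) and monotonous in the same sense as $f$ (reversing $<$ is preserved under graph-reversal). Its domain is $\ran f$, which is connected by item (1), and its range is $\dom f$, which is connected by hypothesis. Hence Lemma \ref{lem-moncont} applies to $f^{-1}$ and delivers continuity.

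The main obstacle I foresee is the careful bookkeeping in item (2): turning the hypothetical ``sign flip'' of $f(b) - f(a)$ into a genuine triple $x_1 < x_2 < x_3$ at which one sees the flip, so that item (1) can be invoked inside each of $[x_1, x_2]$ and $[x_2, x_3]$. This is a one-dimensional connectedness argument that works in any ordered field once \ax{IND} is available, but it must be written out in a definable way so as to stay within the first-order framework; everything else reduces to the sup-argument of item (1) and the already-established Lemma \ref{lem-moncont}.
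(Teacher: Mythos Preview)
Your proposal is correct and follows essentially the same route as the paper: item (1) via the \ax{IND}-Bolzano (intermediate value) argument, and item (2) by the three-point contradiction with injectivity, followed by Lemma~\ref{lem-moncont} applied to $f^{-1}$ on its connected domain $\ran f$. The only difference is granularity: the paper states \ax{IND}-Bolzano's Theorem separately and simply cites it for both items, whereas you unpack the supremum argument inline for item (1) and spell out the three-point flip for item (2); the logical structure is identical.
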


\begin{proof}
Item (1) is an easy consequence of \ax{IND}-Bolzano theorem.
To prove Item (2), let us first note that if $f$ were not monotonous, it would not be injective by \ax{IND}-Bolzano theorem.
It is clear that $f^{-1}$ is definable and monotonous since $f$ is such.
Thus by Lemma \ref{lem-moncont}, $f^{-1}$ is continuous.
\end{proof}

We say that a function $f:\Q\parrow\Q^n$ is \df{differentiable at} $t_0\in\dom f$ iff there is a unique $f'(t_0)\in\Q^n$ such that
\begin{equation}
\begin{split}
\forall \varepsilon\in\Q^+\;&\exists\delta\in\Q^+\enskip \forall t\in\dom f\quad |t-t_0|<\delta \\
&\then |f(t)-f(t_0)-f'(t_0)\cdot(t-t_0)|<\varepsilon\cdot|t-t_0|.
\end{split}
\end{equation}
This $f'(t_0)$ is called the \df{derivate of $f$ at $t_0$}.
Let us introduce the following convenient abbreviation.
We say that $\alpha:\Q\parrow \Q$ is a \df{nice map} if it is a differentiable such that $0\not\in\ran\alpha'$, and $\dom \alpha$ is connected.

\begin{lem}
\label{lem-tlnice}
Let $\alpha$ be a timelike curve.
Then $\alpha_\tau$ is a nice map.
\end{lem}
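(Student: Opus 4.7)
The plan is to unpack the three clauses in the definition of a nice map and verify each one directly from the hypothesis that $\alpha$ is a timelike curve. Recall we must show that (i) $\alpha_\tau$ is differentiable, (ii) $\dom\alpha_\tau$ is connected, and (iii) $0\notin\ran\alpha_\tau'$.

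For (i) and (ii), these are essentially free. Since $\alpha$ is a curve, $\dom\alpha$ is connected by definition, and $\dom\alpha_\tau = \dom\alpha$ because $\alpha_\tau$ is just the projection of $\alpha$ onto its first coordinate. Differentiability of $\alpha$ means each coordinate function is differentiable, and in particular $\alpha_\tau$ is differentiable with $\alpha_\tau'(t) = \bigl(\alpha'(t)\bigr)_\tau$ for all $t\in\dom\alpha$.

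The only substantive step is (iii), and it is a one-line consequence of the timelike condition. By hypothesis, $\alpha'(t)$ is a timelike vector for all $t\in\dom\alpha$, which by definition means $|\alpha'(t)_\sigma| < |\alpha'(t)_\tau|$. Since $|\alpha'(t)_\sigma|\ge 0$, this forces $|\alpha'(t)_\tau|>0$, i.e., $\alpha_\tau'(t)\neq 0$. Hence $0\notin\ran\alpha_\tau'$, and all three conditions hold, so $\alpha_\tau$ is a nice map.

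There is no real obstacle here; the lemma is a short book-keeping step that isolates the scalar monotonicity-like property of the time-coordinate of any timelike curve, to be used in conjunction with \ax{IND}-Darboux/Bolzano in the surrounding arguments (e.g.\ Lemma~\ref{lem-accdir} and Lemma~\ref{lem-cordmap}).
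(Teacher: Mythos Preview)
Your proof is correct and follows essentially the same approach as the paper: both verify that $\dom\alpha_\tau=\dom\alpha$ is connected, that $(\alpha_\tau)'=(\alpha')_\tau$ exists, and that the timelike condition $|\alpha'(t)_\sigma|<|\alpha'(t)_\tau|$ forces $\alpha'_\tau(t)\neq 0$. The paper's version is simply terser.
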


\begin{proof} Since $\alpha$ is a timelike curve, $\dom\alpha$ is connected and $\alpha'(x)_\tau\neq 0$ for all $x\in \dom \alpha$.
But $\dom\alpha_\tau=\dom\alpha$ and $(\alpha_\tau)'=(\alpha')_\tau$.
Thus $\alpha_\tau$ is a nice map.
\end{proof}

\begin{lem}
\label{lem-inj}
Assume \ax{IND}.
Let $\alpha$ be a definable nice map.
Then $\alpha$ is injective.
Moreover, $\alpha$ is monotonous.
\end{lem}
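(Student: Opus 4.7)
The plan is to derive monotonicity directly from the hypothesis $0\notin\ran\alpha'$ and the connectedness of $\dom\alpha$, with injectivity then being immediate. The key observation is that the derivative of a definable differentiable function inherits the intermediate value property via the \ax{IND}-generalisation of Darboux's theorem (the same tool used, e.g., in the proof of Lemma \ref{lem-vmon} and Item \eqref{item-accdir} of Lemma \ref{lem-accdir}).

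First I would note that $\alpha'$ is definable (being the derivative of a definable function) and that $\dom\alpha'=\dom\alpha$ is connected by the definition of nice map. Since $0\notin\ran\alpha'$ and $\alpha'$ has the intermediate value property on its connected domain, \ax{IND}-Darboux yields the dichotomy $\alpha'>0$ on $\dom\alpha$ or $\alpha'<0$ on $\dom\alpha$. Next I would invoke the \ax{IND}-Mean-Value Theorem: for any $x,y\in\dom\alpha$ with $x<y$, connectedness of $\dom\alpha$ ensures $[x,y]\subseteq\dom\alpha$, and MVT gives some $c\in(x,y)$ with $\alpha(y)-\alpha(x)=\alpha'(c)(y-x)$. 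According to which case of the dichotomy holds, the right-hand side is strictly positive or strictly negative, so $\alpha(x)<\alpha(y)$ in the first case and $\alpha(x)>\alpha(y)$ in the second. Thus $\alpha$ is monotonous.

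Injectivity is then a trivial consequence of strict monotonicity: if $x\neq y$ then, say, $x<y$, and monotonicity gives $\alpha(x)\neq\alpha(y)$.

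There is no serious obstacle here; the only subtlety is making sure the FOL analogues of Darboux and MVT are applicable, which requires that $\alpha$ (and hence $\alpha'$) be definable and that $\dom\alpha$ be connected — both are guaranteed by the hypotheses. The monotonicity claim stated in the ``Moreover'' clause of the lemma is in fact what the proof produces directly, and injectivity is a corollary.
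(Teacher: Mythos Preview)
Your argument is correct, but the paper takes a different, slightly shorter route. The paper proves injectivity \emph{first}, by a one-line contrapositive via \ax{IND}-Rolle's Theorem: if $\alpha$ failed to be injective there would be $a\neq b$ with $\alpha(a)=\alpha(b)$, and Rolle would produce $s\in(a,b)$ with $\alpha'(s)=0$, contradicting $0\notin\ran\alpha'$. Monotonicity is then read off from Item~(2) of Lemma~\ref{lem-injcont} (a definable continuous injection on a connected domain is monotonous).

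So the paper orders the conclusions as injective $\Rightarrow$ monotonous, appealing to Rolle and a previously established lemma about continuous injections; you order them as monotonous $\Rightarrow$ injective, using Darboux to fix the sign of $\alpha'$ and then the full Mean-Value Theorem. Your route is more self-contained (it does not lean on Lemma~\ref{lem-injcont}) and yields the sign dichotomy $\alpha'>0$ or $\alpha'<0$ as an explicit byproduct, at the cost of invoking two calculus theorems rather than one. Both are entirely valid within the \ax{IND}-framework of Section~\ref{lem-sec}.
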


\begin{proof} If $\alpha$ were not injective, then $\alpha'(x)$ would be $0$ for some $x$ by \ax{IND}-Rolle's theorem.
But $\alpha'(x)$ cannot be $0$ since $\alpha$ is a nice map.
Thus $\alpha$ is injective.
Then $\alpha$ is also monotonous by (2) in Lemma \ref{lem-injcont}.
\end{proof}
 
\begin{lem}
\label{lem-nice}
Assume \ax{IND}.
If $\alpha$ and $\delta$ are nice maps, $\delta^{-1}$ and $\alpha\circ\delta$ are also nice maps.\qed
\end{lem}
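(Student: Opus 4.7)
The plan is to verify in turn that $\delta^{-1}$ and $\alpha \circ \delta$ satisfy the three defining properties of a nice map: differentiability, nonvanishing derivative, and connected domain. Definability is automatic in both cases because $\alpha$ and $\delta$ are definable.

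For $\delta^{-1}$, I would first invoke Lemma \ref{lem-inj} to conclude that $\delta$ is injective and monotonous, so $\delta^{-1}$ is a well-defined function from $\ran\delta$ to $\dom\delta$. Part (1) of Lemma \ref{lem-injcont} then gives that $\ran\delta$ is connected, which is the required connectedness of $\dom\delta^{-1}$. By part (2) of the same lemma, $\delta^{-1}$ is continuous. The derivative would be handled by a direct FOL-translation of the inverse function rule: for $y_0 = \delta(x_0)$ with $\delta'(x_0)\ne 0$, given $\varepsilon\in\Q^+$ one uses continuity of $\delta^{-1}$ at $y_0$ and differentiability of $\delta$ at $x_0$ to produce $\eta\in\Q^+$ such that for $|y-y_0|<\eta$,
\begin{equation}
\left|\delta^{-1}(y)-\delta^{-1}(y_0)-\tfrac{1}{\delta'(x_0)}(y-y_0)\right|<\varepsilon|y-y_0|.
\end{equation}
This shows $(\delta^{-1})'(y_0)=1/\delta'(x_0)\ne 0$, so $0\notin\ran(\delta^{-1})'$.

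For $\alpha\circ\delta$, the differentiation step is the chain rule: $(\alpha\circ\delta)'(x)=\alpha'(\delta(x))\cdot\delta'(x)$, and this is nonzero at every point of $\dom(\alpha\circ\delta)$ since both factors are nonzero by the niceness of $\alpha$ and $\delta$. The only slightly delicate point is connectedness of
\begin{equation}
\dom(\alpha\circ\delta)=\{x\in\dom\delta:\delta(x)\in\dom\alpha\}=\delta^{-1}(\dom\alpha\cap\ran\delta).
\end{equation}
Since $\dom\alpha$ and $\ran\delta$ are connected subsets of $\Q$, they are convex (this is immediate from the definition of ``connected'' given in the paper, namely that $(x,y)\subseteq I$ whenever $x,y\in I$), so their intersection is convex and hence connected. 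Then, because $\delta^{-1}$ has already been shown to be a nice map (and in particular monotonous and continuous), its image of a connected set is again connected, completing the verification.

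I expect the only mild obstacle to be packaging the inverse function rule, which is not stated as a named lemma in the excerpt; however, it is a routine consequence of the already-established continuity of $\delta^{-1}$ (Lemma \ref{lem-injcont}(2)) together with differentiability of $\delta$ at $x_0=\delta^{-1}(y_0)$ with nonzero derivative, exactly as sketched above. Everything else is a direct application of the cited lemmas.
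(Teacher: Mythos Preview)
The paper gives no proof of this lemma; it is stated with a bare \qed. Your argument is substantively correct and is exactly the kind of routine verification the authors are leaving to the reader: invoke Lemma~\ref{lem-inj} and Lemma~\ref{lem-injcont} to handle $\delta^{-1}$, then use the inverse-function rule and the chain rule together with convexity of intersections for the composite.

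One point to correct: the paper uses the reversed composition convention $(f\circ g)(x)=g(f(x))$ (see the Remark just before the definition of $w^k_m$). Thus $(\alpha\circ\delta)(x)=\delta(\alpha(x))$, so the chain-rule formula should read $(\alpha\circ\delta)'(x)=\alpha'(x)\cdot\delta'(\alpha(x))$, and the domain is $\{x\in\dom\alpha:\alpha(x)\in\dom\delta\}=\alpha^{-1}(\dom\delta\cap\ran\alpha)$. This does not damage your argument, since $\alpha$ is also a nice map and hence (by the first part, applied to $\alpha$ in place of $\delta$) $\alpha^{-1}$ is monotonous and continuous; your connectedness step then goes through verbatim with $\alpha$ and $\delta$ interchanged. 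A minor caveat: the lemma as printed does not explicitly assume definability of $\alpha$ and $\delta$, yet your appeals to Lemmas~\ref{lem-inj} and~\ref{lem-injcont} (and the \ax{IND}-based inverse-function and chain rules) require it. In the paper's applications (e.g.\ Lemma~\ref{lem-mink}) the maps are definable, so this is almost certainly an omission in the statement rather than a flaw in your proof.
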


\begin{lem}
\label{lem-mink} Assume \ax{IND}.
Let $\alpha$ and $\delta$ be definable timelike curves such that $\ran\alpha\subseteq \ran\delta$ (or $\ran\delta\subseteq \ran\alpha$), and let $h\leteq\alpha\circ\delta^{-1}$.
Then $h$ is a nice map and 
\begin{equation}\label{eq-mink}
|h'(x)|=\frac{\mu\big(\alpha'(x)\big)}{\mu\big(\delta'(h(x))\big)} \quad \text{ for all } \enskip x\in \dom h.
\end{equation}
\end{lem}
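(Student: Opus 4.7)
The plan is to recognize $h$ as the reparametrization map of Lemma~\ref{lem-repar}, which yields differentiability and a non-vanishing derivative at once, and then to differentiate the defining identity $\delta\circ h=\alpha$ and take Minkowski lengths, invoking only the homogeneity $\mu(\lambda v)=|\lambda|\mu(v)$ for timelike $v$.

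First I show $h$ is a nice map. Suppose $\ran\alpha\subseteq\ran\delta$. Then $\alpha$ is a definable continuous curve with $\ran\alpha\subseteq\ran\delta$ and $\alpha'(t)\neq\vo$ for every $t\in\dom\alpha$, since every timelike vector is nonzero. Applying Lemma~\ref{lem-repar}(3) with $\beta\leteq\delta$ and $\beta_*\leteq\alpha$, I conclude that $h=\alpha\circ\delta^{-1}$ is a differentiable injection with $h'(x)\neq 0$ throughout its domain; Item~(2) of the same lemma (whose hypothesis ``$\beta_*$ injective'' is granted by Lemmas~\ref{lem-tlnice} and \ref{lem-inj}) gives that $\dom h$ is connected. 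So $h$ is a nice map. In the case $\ran\delta\subseteq\ran\alpha$, apply Lemma~\ref{lem-repar}(3) with roles swapped (take $\beta\leteq\alpha$, $\beta_*\leteq\delta$) to see that $g\leteq\delta\circ\alpha^{-1}$ is a nice map, and then invoke Lemma~\ref{lem-nice} to conclude that $h=g^{-1}$ is also a nice map.

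For the formula, differentiate the identity $\delta(h(x))=\alpha(x)$, which holds on $\dom h$ by construction. By the chain rule,
\begin{equation*}
\alpha'(x)=h'(x)\cdot\delta'(h(x))\qquad\text{for every }x\in\dom h,
\end{equation*}
an equality of vectors in $\Q^d$. Since $\delta$ is a timelike curve, $\delta'(h(x))$ is timelike, so in particular $\mu(\delta'(h(x)))>0$. From the definition of $\mu$ one checks directly that $\mu(\lambda\cdot v)=|\lambda|\,\mu(v)$ for any $\lambda\in\Q$ and any timelike $v$. Applying $\mu$ to both sides of the chain-rule identity and dividing by $\mu(\delta'(h(x)))$ yields \eqref{eq-mink}.

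The main obstacle is bookkeeping: untangling the two composition orders so that Lemma~\ref{lem-repar} applies with its hypothesis ``$\ran\beta_*\subseteq\ran\beta$'' satisfied (which is our assumption directly in the first case and requires passing to $g^{-1}$ in the second), and verifying that the resulting $h$ really has connected domain. Once $h$ is known to be a nice map, the Minkowski-length identity is a one-line consequence of the chain rule and the scaling behavior of $\mu$ on timelike vectors.
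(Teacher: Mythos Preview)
Your proof is correct. The derivative formula is obtained exactly as in the paper: from $\delta(h(x))=\alpha(x)$ on $\dom h$, the chain rule gives $\alpha'(x)=h'(x)\cdot\delta'(h(x))$, and then $\mu(\lambda v)=|\lambda|\mu(v)$ together with $\mu(\delta'(h(x)))>0$ yields \eqref{eq-mink}.

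Where you differ is in how you show $h$ is a nice map. The paper does not invoke Lemma~\ref{lem-repar}; instead it proves the identity $h=\alpha_\tau\circ\delta_\tau^{-1}$ (using injectivity of $\alpha_\tau,\delta_\tau$ and the range-inclusion hypothesis to pass from $\alpha_\tau(x)=\delta_\tau(y)$ back to $\alpha(x)=\delta(y)$), and then applies Lemma~\ref{lem-nice} directly, since $\alpha_\tau$ and $\delta_\tau$ are nice maps by Lemma~\ref{lem-tlnice}. Your route via Lemma~\ref{lem-repar} is more packaged---it hands you differentiability, $h'\neq 0$, and connectedness of $\dom h$ in one citation, and handles the second case cleanly by passing to $g^{-1}$. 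The paper's route is more elementary in that it reduces everything to scalar nice maps on $\Q$, and as a by-product exhibits the concrete description $h=\alpha_\tau\circ\delta_\tau^{-1}$, which makes the symmetry between the two range-inclusion cases transparent without a separate inversion step. Either approach is fine here.
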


\begin{proof} 
First we show that $h=\alpha_\tau\circ\delta_\tau^{-1}$.
Since $\alpha$ and $\delta$ are definable timelike curves, $\alpha_\tau$ and $\delta_\tau$ are definable nice maps by Lemma \ref{lem-tlnice}.
Thus $\alpha_\tau$ and $\delta_\tau$ are injective by Lemma \ref{lem-inj}.
Consequently, $\alpha$ and $\delta$ are also injective.
Therefore, $\langle x,y\rangle \in \alpha_\tau\circ \delta_\tau^{-1}$ iff $\alpha_\tau(x)=\delta_\tau(y)$ and $\langle x,y\rangle\in \alpha\circ\delta^{-1}$ iff $\alpha(x)=\delta(y)$.
Since $\alpha(x)=\delta(y)\then \alpha_\tau(x)=\delta_\tau(y)$ is clear, we have to show the converse implication only.
By symmetry, we can assume that $\ran\alpha\subseteq \ran\delta$.
Then there is a $z\in \dom\delta$ such that $\delta(z)=\alpha(x)$, so $\delta_\tau(z)=\alpha_\tau(x)=\delta_\tau(y)$.
Thus $z=y$ since $\delta$ is injective, so $\alpha(x)=\delta(y)$.
This proves $h=\alpha_\tau\circ\delta_\tau^{-1}$.

By Lemma~\ref{lem-nice}, $h$ is a nice map, so $\dom h$ is an interval.
We have that $\alpha\supseteq h\circ\delta$ since $h=\alpha\circ\delta^{-1}$.
Thus by the chain rule, $\alpha'(x)=h'(x)\cdot\delta'\big(h(x)\big)$ for all $x\in \dom h$.
Since $\mu(\lambda\vpp)=|\lambda|\cdot\mu(\vpp)$ for all $\lambda\in Q$ and $\vpp\in\Q^d$, we have that $\mu\big(\alpha'(x)\big)=|h'(x)|\cdot\mu\big(\delta'(h(x))\big)$ for all $x\in \dom h$.
We have that $\mu\big(\delta'(h(x))\big)\neq0$ since $\delta$ is timelike.
Hence equation \eqref{eq-mink} holds.
\end{proof}

\begin{lem}
\label{lem-main}
Assume \ax{IND}.
Let $\beta$ and $\gamma$ be definable and well-parametrized timelike curves; let $\beta_*$ and $\gamma_*$ be definable timelike curves; let $x_\beta,y_\beta\in \dom\beta$, $x_\gamma,y_\gamma\in \dom\gamma$ and $x,y\in \dom\beta_*\cap \dom\gamma_*$ such that
\begin{itemize}
\item[(i)] $\ran\beta_*\subseteq \ran\beta$ and $\ran\gamma_*\subseteq \ran\gamma$.
\item[(ii)] $\beta(x_\beta)=\beta_*(x)$, $\beta(y_\beta)=\beta_*(y)$, $\gamma(x_\gamma)=\gamma_*(x)$, $\gamma(y_\gamma)=\gamma_*(y)$.
\item[(iii)] $x\neq y$ and $\mu\big(\gamma'_*(z)\big)>\mu\big(\beta'_*(z)\big)$ for all $z\in (x,y)$.
\end{itemize}
Then $\big|x_\gamma-y_\gamma\big|>\big|x_\beta-y_\beta\big|$.
\end{lem}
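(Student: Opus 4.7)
The plan is to transfer the hypothesis $\mu(\gamma'_*) > \mu(\beta'_*)$ from the reparametrized curves back to the well-parametrized ones via the maps supplied by Lemma \ref{lem-mink}, and then conclude by a one-variable mean-value argument.

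First I would define $h_\beta := \beta_* \circ \beta^{-1}$ and $h_\gamma := \gamma_* \circ \gamma^{-1}$. Assumption (i) and Lemma \ref{lem-mink} make $h_\beta$ and $h_\gamma$ definable nice maps satisfying
\begin{equation*}
|h'_\beta(z)| = \mu\big(\beta'_*(z)\big)\quad\text{and}\quad |h'_\gamma(z)| = \mu\big(\gamma'_*(z)\big),
\end{equation*}
since the denominators $\mu\big(\beta'(h_\beta(z))\big)$ and $\mu\big(\gamma'(h_\gamma(z))\big)$ in \eqref{eq-mink} equal $1$ by well-parametrization of $\beta$ and $\gamma$. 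The domains $\dom h_\beta = \dom\beta_*$ and $\dom h_\gamma = \dom\gamma_*$ are connected and contain both $x$ and $y$, hence also the interval between them. Assumption (ii) yields $h_\beta(x) = x_\beta$, $h_\beta(y) = y_\beta$, $h_\gamma(x) = x_\gamma$, $h_\gamma(y) = y_\gamma$, so the desired conclusion is equivalent to $|h_\gamma(y) - h_\gamma(x)| > |h_\beta(y) - h_\beta(x)|$.

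By Lemma \ref{lem-inj}, the nice maps $h_\beta$ and $h_\gamma$ are strictly monotonous, so there exist signs $\epsilon_\beta, \epsilon_\gamma \in \{+1,-1\}$ for which $\epsilon_\beta h_\beta$ and $\epsilon_\gamma h_\gamma$ are increasing; equivalently $\epsilon_\beta h'_\beta(z) = |h'_\beta(z)|$ and $\epsilon_\gamma h'_\gamma(z) = |h'_\gamma(z)|$. I would then form $\phi := \epsilon_\gamma h_\gamma - \epsilon_\beta h_\beta$, which is definable and differentiable on an interval containing $[x,y]$ (or $[y,x]$), and for which hypothesis (iii) gives
\begin{equation*}
\phi'(z) = |h'_\gamma(z)| - |h'_\beta(z)| = \mu\big(\gamma'_*(z)\big) - \mu\big(\beta'_*(z)\big) > 0
\end{equation*}
for every $z$ strictly between $x$ and $y$.

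The \ax{IND}-Mean-Value Theorem applied to $\phi$ yields $\phi(y) - \phi(x) = \phi'(\xi)(y-x)$ for some $\xi$ between $x$ and $y$. Letting $\sigma \in \{+1,-1\}$ be the sign of $y - x$, we get $\sigma \cdot \big(\phi(y) - \phi(x)\big) = \phi'(\xi)|y-x| > 0$. On the other hand, monotonicity of $\epsilon_\beta h_\beta$ and $\epsilon_\gamma h_\gamma$ gives $\sigma \epsilon_\beta(h_\beta(y) - h_\beta(x)) = |h_\beta(y) - h_\beta(x)|$ and similarly for $\gamma$, so
\begin{equation*}
\sigma \cdot \big(\phi(y) - \phi(x)\big) \;=\; |h_\gamma(y) - h_\gamma(x)| - |h_\beta(y) - h_\beta(x)|,
\end{equation*}
and combining the two equalities closes the argument. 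The main obstacle I anticipate is the sign bookkeeping across the possible choices of $\epsilon_\beta, \epsilon_\gamma$ and the order of $x$ and $y$; introducing $\phi$ with absolute-value-matching signs is precisely what lets all cases be handled at once, so the remainder of the proof is routine.
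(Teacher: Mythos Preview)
Your proof is correct. Both your argument and the paper's rest on Lemma~\ref{lem-mink} (to convert the Minkowski-length inequality into an inequality of derivatives of one-variable reparametrization maps) and on the \ax{IND}-Mean--Value Theorem, but the two proofs organize these ingredients differently. The paper sets $i\leteq\beta\circ\beta_*^{-1}$ (the inverse of your $h_\beta$) and $j\leteq\gamma_*\circ\gamma^{-1}$ (your $h_\gamma$), composes them to a single nice map $i\circ j$ sending $x_\beta\mapsto x_\gamma$ and $y_\beta\mapsto y_\gamma$, shows $|(i\circ j)'|=\mu(\gamma'_*)/\mu(\beta'_*)>1$, and applies the Mean--Value Theorem once to this composite to get the \emph{ratio} $|x_\gamma-y_\gamma|/|x_\beta-y_\beta|>1$. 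You instead keep $h_\beta$ and $h_\gamma$ side by side on the common interval $(x,y)$, form the \emph{difference} $\phi=\epsilon_\gamma h_\gamma-\epsilon_\beta h_\beta$, and apply the Mean--Value Theorem to $\phi$. The paper's route avoids the sign bookkeeping entirely (the absolute values in $|(i\circ j)'|$ absorb all orientation issues), at the cost of invoking Lemma~\ref{lem-nice} for the composite; your route stays on the fixed parameter interval $(x,y)$ and never composes, at the cost of tracking $\epsilon_\beta,\epsilon_\gamma,\sigma$ explicitly. Both are clean; yours is perhaps marginally more elementary, the paper's marginally more symmetric.
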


\begin{figure}[h!btp]
\small
\begin{center}
\psfrag{yb}[r][r]{$y_\beta$}
\psfrag{xb}[r][r]{$x_\beta$}
\psfrag{y}[r][r]{$y$}
\psfrag{x}[r][r]{$x$}
\psfrag{yc}[r][r]{$y_\gamma$}
\psfrag{xc}[r][r]{$x_\gamma$}
\psfrag{i}[t][t]{$i$}
\psfrag{j}[t][t]{$j$}
\psfrag{bb}[bl][bl]{$\beta_*,\beta$}
\psfrag{cc}[bl][bl]{$\gamma_*,\gamma$}
\psfrag{b}[b][b]{$\beta$}
\psfrag{c}[b][b]{$\gamma$}
\psfrag{bx}[b][b]{$\beta_*$}
\psfrag{cx}[b][b]{$\gamma_*$}
\psfrag{bxy}[l][l]{$$}
\psfrag{cxy}[l][l]{$$}
\psfrag{bxx}[l][l]{$$}
\psfrag{cxx}[l][l]{$$}
\includegraphics[keepaspectratio, width=\textwidth]{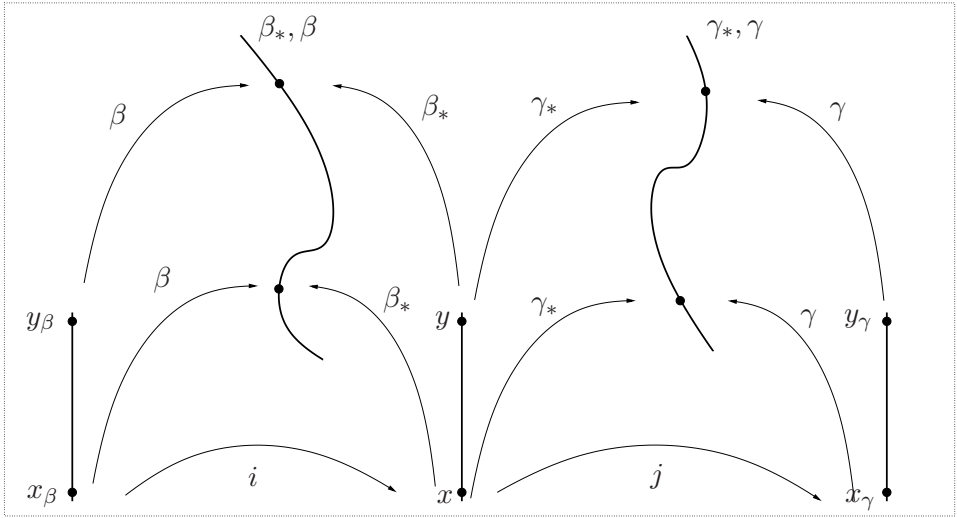}
\caption{\label{fig-lemma} Illustration for the proof of Lemma \ref{lem-main}.}
\end{center}
\end{figure}

\begin{proof}
Since $\beta$, $\beta_*$, $\gamma$ and $\gamma_*$ are definable timelike curves, they are injective by Lemmas \ref{lem-tlnice} and \ref{lem-inj}.
Thus $x_\beta\neq y_\beta$ and $x_\gamma\neq y_\gamma$ since $x\neq y$.
Let 
\begin{equation}
i\leteq\beta\circ\beta_*^{-1}\quad\text{and}\quad j\leteq\gamma_*\circ \gamma^{-1},
\end{equation} 
see Figure \ref{fig-lemma}.
Then $i$, $j$ and $i\circ j$ are nice maps by Lemma~\ref{lem-nice} and \ref{lem-mink}.
Furthermore,
\begin{equation}
\begin{split}
i(x_\beta)=x,\enskip i(y_\beta)=y,\enskip j(x)=x_\gamma,\enskip j(y)=y_\gamma,\enskip\\
(i\circ j)(x_\beta)=x_\gamma\enskip \text{and}\enskip (i\circ j)(y_\beta)=y_\gamma.
\end{split}
\end{equation}
Since $x_\beta, y_\beta\in \dom(i\circ j)$, and $i\circ j$ is a nice map, we have that $(x_\beta,y_\beta)\subseteq \dom(i\circ j)$.

Now we will show that
\begin{equation}
\label{megjelolt}
\forall t\in (x_\beta, y_\beta) \quad \big|(i\circ j)'(t)\big|>1.
\end{equation}
To prove this statement, let $t\in (x_\beta, y_\beta)$.
Since $i$ is a nice map, it is monotonous by Lemma~\ref{lem-inj}, thus $i(t)\in (x,y)$.
By Lemma~\ref{lem-mink} and the fact that $\beta$ and $\gamma$ are well-parametrized, we have that 
\begin{equation}\label{i}
%\begin{split}
\big|i'(t)\big|=\frac{\mu\big(\beta'(t)\big)}{\mu\big(\beta_*'(i(t))\big)}=\frac{1}{\mu\big(\beta_*'(i(t))\big)}\\ %\quad \text{and}\\
\end{equation}
and
\begin{equation}\label{j}
\big|j'\big(i(t)\big)\big|=\frac{\mu\big(\gamma_*'(i(t))\big)}{\mu\big(\gamma'\big(j(i(t))\big)\big)}=\mu\big(\gamma'_*(i(t))\big).
%\end{split}
\end{equation}
From equations \eqref{i}, \eqref{j} and Item (iii) by the chain rule, we have that
\begin{equation}
\big|(i\circ j)'(t)\big|=\big|i'(t)j'\big(i(t)\big)\big|=\frac{\mu\big(\gamma_*'(i(t))\big)}{\mu\big(\beta'_*(i(t))\big)}>1
\end{equation}
This completes the proof of \eqref{megjelolt}.

By \ax{IND}-Main--Value theorem there is a $z\in (x_\beta,y_\beta)$ such that
\begin{equation}
(i\circ j)'(z)=\frac {(i\circ j)(x_\beta)-(i\circ j)(y_\beta)}{x_\beta-y_\beta}=\frac{x_\gamma-y_\gamma}{x_\beta-y_\beta}.
\end{equation}
By this and \eqref{megjelolt}, we conclude that $\big|\frac{x_\gamma-y_\gamma}{x_\beta-y_\beta}\big|>1$.
Hence $|x_\gamma-y_\gamma|>|x_\beta-y_\beta|$, as desired.
\end{proof} 

\theoremstyle{definition} \newtheorem*{bolzano}{\colorbox{thmbgcolor}{\textcolor{thmcolor}{\ax{IND}-Bolzano's Theorem}}} 
\begin{bolzano}
Assume \ax{IND}.
Let $f:\Q\parrow\Q$ be definable and continuous such that $\dom f$ is connected, and let $a,b\in\dom f$.
If $c\in\big(f(a),f(b)\big)$, there is an $s\in(a,b)$ such that $f(s)=c$.\qed 
\end{bolzano}

\theoremstyle{definition} \newtheorem*{darboux}{\colorbox{thmbgcolor}{\textcolor{thmcolor}{\ax{IND}-Darboux's Theorem}}} 
\begin{darboux}
Assume \ax{IND}.
Let $f:\Q\parrow\Q$ be definable and differentiable such that $\dom f$ is connected, and let $a,b\in\dom f$.
If $c\in\big(f'(a),f'(b)\big)$, there is an $s\in(a,b)$ such that $f'(s)=c$.\qed
\end{darboux}

\theoremstyle{definition} \newtheorem*{meanvalue}{\colorbox{thmbgcolor}{\textcolor{thmcolor}{\ax{IND}-Mean--Value Theorem}}} 
\begin{meanvalue}
Assume \ax{IND}.
Let $f:\Q\parrow\Q$ be definable and differentiable such that $\dom f$ is connected, and let $a,b\in\dom f$.
If $a\neq b$, there is an $s\in(a,b)$ such that $f'(s)=\frac{f(b)-f(a)}{b-a}$.\qed
\end{meanvalue}

\theoremstyle{definition} \newtheorem*{role}{\colorbox{thmbgcolor}{\textcolor{thmcolor}{\ax{IND}-Role's Theorem}}} 
\begin{role}
Assume \ax{IND}.
Let $f:\Q\parrow\Q$ be definable and differentiable such that $\dom f$ is connected, and let $a,b\in\dom f$.
If $a\neq b$ and $f(a)=f(b)$, there is an $s\in(a,b)$ such that $f'(s)=0$.\qed
\end{role}

\begin{rem}
We note that \ax{IND} is not strong enough to prove every theorem of real analysis, for example, the statement that there is a function $f$ such that $f'(x)=f(x)$.
\end{rem}

\begin{rem}
Lemma \ref{lem-main} remains true even if we substitute ``$=$'' or ``$\ge$'' for ``$>$''.
The proof can be achieved by the same substitution in the original proof.
\end{rem}

%%%%%%%%%%%%%%%%%%%%%%%%%%%%%%%%%%%%%%%%%%
\section{Concluding remarks}
%%%%%%%%%%%%%%%%%%%%%%%%%%%%%%%%%%%%%%%%%%

We have proved several qualitative versions of GTD from a weak axiom system of SR (\ax{AccRel}) by the use of EEP.
It is important to note that the axioms of \ax{AccRel} and EEP have different statuses herein.
EEP is not an axiom, it is just a guiding principle.

The theorems of this paper can be interpreted as saying that
observers will experience time dilation in the direction of gravitation
by the corresponding measuring methods (photon, radar, Minkowski)
if all the axioms of \ax{AccRel} are true in ``our world'' and EEP is a ``good'' principle.

Since gravitation can be defined by the acceleration of dropped \emph{inertial} bodies, EEP can be formulated within \ax{AccRel}.
It raises the possibility of checking within \ax{AccRel} how good a principle EEP is.
We may be able to prove EEP from \ax{AccRel} for all observers.
On the other hand if the formulated EEP is not a theorem of \ax{AccRel}, we can ask what other axioms need to be added to \ax{AccRel} to prove EEP.

%%%%%%%%%%%%%%%%%%%%%%%%%%%%%%%%%%%%%%%%%%  
\section*{ACKNOWLEDGEMENTS}
%%%%%%%%%%%%%%%%%%%%%%%%%%%%%%%%%%%%%%%%%%

Thanks go to Andr\'eka Hajnal for our many valuable discussion on the subject and her useful comments leading to the present version of this paper.

This work was supported by the
Hungarian National Foundation for Scientific Research Grant
T43242 and by a Bolyai Grant for Judit X.\ Madar\'asz.

\bibliographystyle{plain}

\end{document}